\newcommand\relatedversion{}
\declaretheorem[name=Theorem,numberwithin=section]{theorem}
\newtheorem{obs}{Observation}[section]
\newtheorem{corollary}{Corollary}[section]
\newtheorem{lemma}{Lemma}[section]
\newtheorem{definition}{Definition}[section]
\newtheorem{claim}{Claim}[section]
\setlist[enumerate, 1]{label = (\alph*)}
\setlist{itemsep=0pt}
\DeclarePairedDelimiter{\ceil}{\lceil}{\rceil}
\DeclarePairedDelimiter{\floor}{\lfloor}{\rfloor}
\DeclarePairedDelimiter{\norm}{\lVert}{\rVert}
\DeclarePairedDelimiter{\intcc}{[}{]}
\DeclarePairedDelimiter{\intco}{[}{)}
\DeclarePairedDelimiter{\intoc}{(}{]}
\DeclarePairedDelimiter{\intoo}{(}{)}
\renewcommand{\Pr}[1]{\ensuremath{\mathbf{Pr} \left[\,#1\,\right]}}
\newcommand{\Exp}{\ensuremath{\mathbb{E}}}
\newcommand{\N}{\ensuremath{\mathbb{N}}}
\newcommand{\R}{\ensuremath{\mathbb{R}}}
\newcommand{\Z}{\ensuremath{\mathbb{Z}}}
\newcommand{\cC}{\ensuremath{\mathcal{C}}}
\newcommand{\cE}{\ensuremath{\mathcal{E}}}
\newcommand{\cF}{\ensuremath{\mathcal{F}}}
\newcommand{\whp}{w.h.p.\xspace}
\newcommand{\poly}{\ensuremath{\mathrm{poly}}\xspace}
\newcommand{\polylog}{\ensuremath{\mathrm{polylog}}\xspace}
\newcommand{\modgreedy}{\mbox{\sc ModulatedGreedy}\xspace}
\newcommand*{\disc}{\operatorname{disc}}
\newcommand*{\adisc}{\operatorname{adisc}}
\newcommand*{\ovld}{\operatorname{ovld}}
\newcommand{\Greedy}[1]{\textsc{Greedy[$#1$]}\xspace}
\newcommand{\Iceberg}[1]{\textsc{Iceberg[$#1$]}\xspace}
\newcommand{\mt}{\ensuremath{m(t)}\xspace}
\newcommand{\avg}{\ensuremath{\lceil\frac{\mt}{n}\rceil}\xspace}
\newcommand{\greedyd}[1]{\mbox{{\sc Greedy}[#1]}\xspace}
\newcommand{\agl}[1]{\mbox{{\sc Left}[#1]}\xspace}
\title{Balls and Bins and the Infinite Process with Random Deletions\relatedversion}
\author{
    Petra Berenbrink\thanks{University of Hamburg, Germany (\email{petra.berenbrink@uni-hamburg.de}).}
    \and
    Tom Friedetzky\thanks{Durham University, U.K. (\email{tom.friedetzky@durham.ac.uk}).}
    \and
    Peter Kling\thanks{Darmstadt University of Applied Sciences, Germany (\email{peter.kling@h-da.de}).}
    \and
    Lars Nagel\thanks{Loughborough University, U.K. (\email{L.Nagel@lboro.ac.uk}).}
}
\date{}
\newcommand*{\isfullversion}{\boolean{true}}  
\crefname{obs}{Observation}{Observations}
\newcommand{\email}[1]{\href{mailto:#1}{#1}}
\begin{document}
\maketitle

\begin{abstract}
We consider an infinite balls-into-bins process with deletions where in each discrete step $t$ a coin is tossed as to whether, with probability $\beta(t) \in (0,1)$, a new ball is allocated using the \greedyd{2} strategy (which places the ball in the lower loaded of two bins sampled uniformly at random) or, with remaining probability $1-\beta(t)$, a ball is deleted from a non-empty bin chosen uniformly at random.
Let $n$ be the number of bins and $m(t)$ the total load at time $t$.
We are interested in bounding the \emph{discrepancy} $x_{\max}(t) - m(t)/n$ (current maximum load relative to current average) and the \emph{overload} $x_{\max}(t) - m_{\max}(t)/n$ (current maximum load relative to \emph{highest} average observed so far).

We prove that at an arbitrarily chosen time $t$ the total number of balls above the average is $O(n)$ and that the discrepancy is $ O(\log(n))$.
For the discrepancy, we provide a matching lower bound.
Furthermore we prove that at an arbitrarily chosen time $t$ the overload is $\log\log(n)+O(1)$.
For \enquote{good} insertion probability sequences (in which the average load of time intervals with polynomial length increases in expectation) we show that even the discrepancy is bounded by $\log\log(n)+O(1)$.

One of our main analytical tools is a layered induction, as per~\cite{ABKU99}.
Since our model allows for rather more general scenarios than what was previously considered, the formal analysis requires some extra ingredients as well, in particular a detailed potential analysis.
Furthermore, we simplify the setup by applying probabilistic couplings to obtain certain \enquote{recovery} properties, which eliminate much of the need for intricate and careful conditioning elsewhere in the analysis.
\end{abstract}

\section{Introduction.}%
\label{sec:introduction}

Balls-into-bins processes have been formalized in the mathematical community since at least the early 18th century~\cite{DeMoivre}.
Their objective is, e.g., to allocate at random a set of balls into a number of bins such that the maximum number of balls per bin is minimized.
These processes have been very well studied, with many results being largely folklore.
However, even seemingly minor modifications of the protocol or the model tend to result in vastly different and often far more complex analyses as well as, occasionally, very surprising results.
Apart from the undoubted usefulness as probabilistic paradigms, balls-into-bins processes have many applications, often under the bonnet as primitive black-box operations.
Among others, this includes areas like (the obvious) load balancing and allocation of jobs to servers~\cite{loadbalbook}, routing in networks~\cite{10.1145/276698.276790}, hashing~\cite{doi:10.1137/120871626}, dictionary data structures~\cite{10.1007/11523468_14}, design of file systems~\cite{10.1145/2485732.2485741}, peer-to-peer networks~\cite{10.1145/383059.383071}, and many more.

The most simple balls-into-bins process, which allocates $m$ balls independently and uniformly at random to $n$ bins, is very well understood.
For example, if $m = n$ then the maximum number of balls (\emph{load}) in any bin is, \whp\footnote{
    We say an event $\cE$ happens with high probability (\whp) if $\Pr{\cE} \geq 1 - n^{-\Omega(1)}$.
}, $\log(n)/\log\log(n) \cdot (1+o(1))$~\cite{RaabS98}.
In~\cite{ABKU99} the authors introduced the \Greedy{d} process.
Here, balls are allocated sequentially; each ball chooses $d \in \N$ bins uniformly at random and is allocated to a least loaded of them.
\Greedy{1} is identical to the simple process described above.
But if $d \geq 2$, the maximum load is, \whp, reduced to $\log\log(n)/\log(d)+O(1)$~\cite{ABKU99} for $m = n$.
This exponential decrease in the maximum load is often referred to as \textit{the power of two choices}.
\Greedy{2} was later analyzed for $m \geq n$ (the \emph{heavily loaded case}), which led to the surprising result that, in contrast to \Greedy{1}, the difference between the maximum and average load is independent of $m$~\cite{BCSV06} (namely, \whp it is $\log\log(n) / \log(d) + O(1)$).

\paragraph{Balls-into-Bins with Deletions.}
We study \Greedy{2} in the heavily loaded case where balls are not only inserted but also \emph{deleted}.
Specifically, we ask whether allocation via \Greedy{2} can overcome the negative effect of \emph{random} deletions (effectively \enquote{holes} allocated according to \Greedy{1}, which without insertions would cause a rift of depth $\sqrt{\frac{m}{n}\log n}$ for large $m$).

One can differentiate balls-into-bins with deletions depending on whether, each time a ball is inserted, the ball gets $d$ \emph{fresh} choices (\emph{insertion/deletion} model) or whether the ball might be a previously deleted ball that is reinserted with its \emph{original} $d$ choices (\emph{reinsertion/deletion} model).
Recently, the authors of~\cite{DBLP:conf/focs/0001K22} showed that, somewhat surprisingly, even in the insertion/deletion model, an \emph{oblivious adversary}\footnote{
    The adversary knows only the $d$ choices and insertion time of each ball.
    Each time step it can either
    \begin{enumerate*}
    \item insert a new ball with $d$ fresh random bin choices or
    \item delete the $r$-th most recently inserted ball ($r$ chosen freely).
    \end{enumerate*}
    In particular, it is oblivious to the exact load situation, since it cannot see to which of the $d$ choices the allocation strategy (like \Greedy{d}) assigned a given ball.
} can make sure that \Greedy{2} does not overcome the negative effect of deletions: already for $n = 4$ bins, it can cause a maximum load of $m/4 +\Omega(\sqrt{m})$ with probability $\Omega(1)$.
Intuitively, the lower bound uses the fact that if there is ever a bin $i$ that contains far fewer (say $k$) balls than other bins, \Greedy{2} is biased toward $i$ for the next $k$ throws.
After filling the system to $m$ balls with at least $k$ insertions, the adversary exploits the insertion bias to perform $k$ deletions that are now biased towards bin $i$.
Adding replacement balls for the just deleted balls, one can show that enough of these land on bins $\neq i$ to cause the desired overload.

\paragraph{Results in a Nutshell.}
We consider the insertion/deletion model for $n \in \N$ bins with \emph{random} but \emph{dynamic} deletions, starting from an initially empty system.
Each time step $t$, with probability $\beta(t)$ a new ball with fresh $2$ choices is inserted by \Greedy{2}.
Otherwise, with probability $1 - \beta(t)$, a ball is deleted from a bin chosen uniformly at random among all non-empty bins.
(All our results also hold for the protocol that does not delete from a random bin, but a ball chosen uniformly at random among all balls, see \ifthenelse{\isfullversion}{\cref{sec:randomball}}{the full version}.)
Our main result can be seen as complementing this unfortunate lower bound of \cite{DBLP:conf/focs/0001K22} for \Greedy{2} with a positive result:
In a model with \emph{random} deletions (but still \emph{dynamic}, that is, resulting in a fluctuating average load), \Greedy{2}'s allocation overcomes the negative deletion effect as long as insertions dominate deletions ever so slightly in the recent past of a given time step $t$.
This highlights a fundamental difference between our model and the one of \cite{DBLP:conf/focs/0001K22}. Our result suggests that in many practical applications, where deletions (if at all) occur non-adversarially, \Greedy{2} easily overcomes random \Greedy{1}-type deletions.
This holds true even if the system utilization (ratio between insertions and deletions) fluctuates significantly.
See \cref{sec:contribution} for a full and formal description of our results.

\subsection{Our Contribution.}%
\label{sec:contribution}

The system state at any time can be modeled by a load vector $\bm{x} = \intoo{x_1, \ldots, x_n} \in \N_0^n$, where $x_i$ is the load of the $i$-th fullest bin.
We use $m(\bm{x}) \coloneqq \norm{\bm{x}}_1$ to denote the \emph{total load} of $\bm{x}$ and define, for clarity, $x_{\max} \coloneqq x_1$ and $x_{\min} \coloneqq x_n$ as the \emph{maximum} and \emph{minimum load} of $\bm{x}$, respectively.
We say $\bm{x}$ is \emph{perfectly balanced} if $x_{\min}, x_{\max} \in \set{\floor{m(\bm{x})/n}, \ceil{m(\bm{x})/n}}$.
The \emph{discrepancy} $\disc(\bm{x}) \coloneqq x_{\max} - m(\bm{x})/n$ of $\bm{x}$ is the difference between the current maximum and average load.
In some cases, we also use the stronger \emph{absolute discrepancy} $\adisc(\bm{x}) \coloneqq \max\set{x_{\max} - m(\bm{x})/n, m(\bm{x})/n - x_{\min}}$ of $\bm{x}$ (the largest difference of any load to the current average load).
With this, our process can be described as a random sequence $\intoo[\big]{\bm{x(t)}}_{t \in \N_0}$ of load vectors with total load $m(t) \coloneqq m\intoo[\big]{\bm{x(t)}}$, discrepancy $\disc(t) \coloneqq \disc\intoo[\big]{\bm{x(t)}}$, and absolute discrepancy $\adisc(t) \coloneqq \adisc\intoo[\big]{\bm{x(t)}}$ at time $t$.
We use $m_{\max}(t) \coloneqq \max\set{m(0), \ldots, m(t)}$ to denote the \emph{maximum total load} up to time $t$.
The \emph{overload} $\ovld(t) \coloneqq x_{\max}(t) - m_{\max}(t)/n$ at time $t$  is the difference between the current maximum and the highest average load up to time $t$.
The \emph{height} of a ball is its position in the bin modelled as a stack of balls counting from bottom to top.

We assume a dynamic \emph{insertion probability} $\beta(t)$ that may change from step to step to any value in $\intcc{0, 1}$.
In other words, insertions and deletions are governed by an infinite \emph{insertion probability sequence} $\intoo[\big]{\beta(t)}_{t \in \N}$ giving the (independent) probabilities of an insertion/deletion in each time step.
We say value $x \in \intoo{0, 1}$ is bounded away from $0$ (from $1$) if $x \geq 0 + \Omega(1)$ ($x \leq 1 - \Omega(1)$).
While balls-into-bins processes typically aim to bound the discrepancy, this cannot work in general when deletions are involved. For example, the insertion probability sequence could contain a stretch that deletes $m(t)/2$ balls after some time $t$.
Translating a folklore example (also mentioned in~\cite{DBLP:conf/focs/0001K22}) to our setting, deletions can be regarded as random \enquote{holes} allocated via \Greedy{1}. This would cause a discrepancy of order $\tilde\Theta({\sqrt{m(t)/n}})$, a bound trivially achieved by allocations via \Greedy{1}, even holds if the system starts out perfectly balanced.

Thus, in this paper we aim to bound the \emph{overload}.
We characterize which (possibly local) properties an insertion probability sequence requires for \Greedy{2} to achieve a small discrepancy.
Also, in an infinite process it is quite impossible to \textit{always} have a small discrepancy (or even overload).
Thus, our main results (given below) all hold \emph{with high probability} at an arbitrary (also super-exponentially large) point $t$ in time:
\begin{enumerate}
\item\label{res:1st}
    Assume all insertion probabilities up to time $t$ are bounded away from $0$.
    Then the discrepancy at time $t$ is $O(\log(n))$ (\cref{thm:potential_bound}) and the number of balls above average at time $t$ is $O(n)$ (\cref{prop:linear_potential}).
    The bound on the discrepancy is tight if the insertion probabilities are bounded away from 1/2 from above in the previous $O(n \log(n))$ time steps (see \ifthenelse{\isfullversion}{\cref{lower}}{the full version}).

\item\label{res:2nd}
    Assume we are in a well (e.g., perfectly) balanced situation at time $t$.
    Then we can maintain an overload of $\log\log(n) + O(1)$ for $\poly(n)$ steps as long as the insertion probabilities during that time are bounded away from $0$ and $1$.
    If, additionally, the insertion probabilities ensure an expected load increase for polynomially long subintervals (see definition of \emph{$c$-good time intervals} in \cref{sec:layered}), we can maintain a \emph{discrepancy} (instead of only overload) of $\log\log(n) + O(1)$ (see \cref{thm:maintainmaxdiscrepancy} and \cref{lem:maintainvalidity} in \cref{sec:layered} for the full, formal statements).

\item\label{res:3rd}
    Our final result shows that we achieve an overload of $\log\log(n) + O(1)$ at an \emph{arbitrary} time $t \in \N_0$ as long as the insertion probability sequence till that time is bounded away from $0$ and $1$ (\cref{thm:main3}).
    Similar to the previous result, if the insertion probabilities ensure an expected (possibly quite small) load increase for long subintervals during the previous $n^4$ time steps, we even get a \emph{discrepancy} of $\log\log(n) + O(1)$ at time $t$.
\end{enumerate}
To the best of our knowledge, these are the first results for the discrete, heavily-loaded insertion/deletion setting that provide characterizations for when \Greedy{2} achieves strong guarantees on the \emph{discrepancy} instead of only on the overload.

\paragraph{Main Technical Contributions and Proof Idea.}

To prove that the discrepancy is $O(\log(n))$ (see \ref{res:1st} above) we use a potential function argument similar to~\cite{DBLP:journals/rsa/PeresTW15,DBLP:conf/icalp/TalwarW14}.
To reduce the discrepancy to double-logarithmic (see \ref{res:3rd} above) we use the excellent recovery properties of \Greedy{2}:
In~\cite{DBLP:journals/mst/Czumaj00}, Czumaj studied the recovery of \Greedy{2} for a process that alternates between ball insertions and deletions, maintaining the total load invariant.
He showed that the process is able to recover, i.e., reach a steady state distribution, in polynomial time.
However, this does not reveal any properties (e.g., the discrepancy) of the steady state.

In contrast to~\cite{DBLP:journals/mst/Czumaj00} we cannot have a unique steady state as the total load varies over time.
However, we can show via a path coupling argument that when we start two copies of our process in any two configurations of logarithmic discrepancy with the same number of balls -- which we have at any time via \ref{res:1st} -- it causes both configurations to converge \emph{to each other} quickly (in $o(n^4)$ steps).
This can be interpreted as a \enquote{typical} state to which all configurations with the same number of balls converge.
Again, as in~\cite{DBLP:journals/mst/Czumaj00} this does not yet reveal any further information about such a typical state.

This is where \ref{res:2nd} comes into play, which uses a layered induction to show that, starting from a certain class of configurations (with discrepancy at most $\log\log(n)+c$), the process remains in such a well-balanced configuration for a long  time, $\Omega(n^4)$ steps.
That is, we maintain well-balancedness for longer than it takes to reach the typical state, implying that such typical states lie in this class of well-balanced configurations.
Note that this deviates significantly from previous usages of layered induction in this area:
For example, in~\cite{DBLP:conf/icalp/TalwarW14} the authors used layered induction to \emph{reduce the discrepancy} from $O(\log n)$ to $O(\log\log n)$.
Our layered induction is \enquote{merely} used to establish that a typical state (reached quickly due to the good recovery) has a low discrepancy.
Apart from being flexible enough to deal with a rather general deletion model, our approach also allows for stronger probability guarantees (of order $1 - 1/\poly(n)$ instead of $1 - 1/\polylog(n)$ in~\cite{DBLP:conf/icalp/TalwarW14}).

In our layered induction we have to cope with two new problems in comparison with previous work: The total load $m(t)$ changes over time, and we may not assume it to be linear in $n$.
The classical layered induction proof of~\cite{ABKU99} heavily relies on the base case, which by means of a simple pigeon-hole argument states that there are at most $n/c$ bins with $c$ balls.
This holds because in their setting the total load cannot exceed $n$.
In our case, however, the total load is unbounded.
We deal with this first problem by performing our layered induction only on the highest $\log\log(n) + O(1)$ many load levels.
This leads us to the second challenge, since the changing total load $m(t)$ also implies that the (absolute) height of the considered levels changes over time.
To deal with this, we carefully formulate our layered induction relative to the maximum average and use a flexible random walk argument to prove strong bounds on the number of balls above a certain level over the maximum average.
This yields a bound on the \emph{overload} measure, from which we derive the discrepancy bound -- under the assumption that there was no large, recent drop of the total load.
Note that in the case where such a large drop occurred, our lower bound mentioned in~\ref{res:1st} shows that we cannot hope for a good discrepancy.
To the best of our knowledge, this is the first layered induction for the case $m \gg n$ as well as for a process with a changing load.

\subsection{Related Work.}
There is a vast amount of literature on balls-into-bins games in a multitude of settings. For an overview see \cite{DBLP:journals/fttcs/Wieder17} and for results with one random choice per ball see \cite{DBLP:conf/spaa/BecchettiCNPP15, BFKMNW18, DBLP:journals/corr/abs-2203-12400}. After a very brief overview of results with multiple choices we focus on processes with deletions.

\paragraph{Multiple Choice.}
The \Greedy{d} process was introduced and analyzed in \cite{ABKU99}, although a similar result in a different setting had been shown in \cite{DBLP:conf/stoc/KarpLH92}.
In \cite{BCSV06} the authors generalize the results to the heavily loaded case $m\gg n$. \cite{DBLP:conf/icalp/TalwarW14} presents a simpler analysis for the heavily loaded case, albeit showing a slightly worse upper bound on the maximum load.
The author of \cite{V03} considers the always-go-left process \agl{d} where bins are divided into $d$ clusters of $n/d$ bins each. Every ball randomly chooses one bin per clusters and is allocated to a least loaded among the chosen bins; ties are broken in favor of the leftmost cluster.  This results in a maximum load of $\log\log(n)/(d\cdot \phi_d)$ with $\phi_d\le 2$ ($\phi_d \in \intoo{1, 2}$ is the generalized golden ratio), \whp, a surprising improvement over \cite{ABKU99}. For more recent results see \cite{DBLP:conf/soda/LosSS22,DBLP:conf/soda/LosS023,DBLP:journals/siamdm/LosSS24}.
The potential function analysis for Part \ref{res:1st} is inspired by methods introduced in \cite{DBLP:journals/rsa/PeresTW15} where the authors consider a very interesting variant of \Greedy{d}, which they call $(1+\beta)$-process.
Here each ball is allocated using \Greedy{1} with probability $1-\beta$, and using \Greedy{2} with the remaining probability $\beta$.
The authors showed that the difference between maximum and average load is $\Theta(\log(n)/\beta)$ for each $\beta<1$. Hence, as long as the number of allocations with \Greedy{2} are sufficiently large, the load is rebalanced and the \enquote{damage} caused by \Greedy{1} is undone.
In this paper we use the same potential function and generalize the analysis to deletions (see \cref{sec:larstom}).

\paragraph{Processes with Deletion, $m \leq n$.}
In~\cite{ABKU99} the authors analyze the \Greedy{d} allocation strategy in an insertion/deletion model  where, after $n$ initial insertions, random deletions and insertions happen alternately.
They show that for a polynomial number of steps the maximum load is, \whp, $\log\log(n)/log(d) + O(1)$.
The authors of~\cite{CFMMRSU98} provide an alternative proof for the same bound that can handle arbitrary insertion/deletion sequences, as long as the total system load is at most $n$.
Under the same restriction they provide bounds for the reinsertion/deletion model with adversarial deletions, showing that the maximum load is, \whp, $O(\log\log(n))$.

\paragraph{Processes with Deletions, $m > n$.}
The infinite insertion/deletion model with adversarial deletions for the \emph{moderately loaded case} is considered by~\cite{V03} for the \agl{d} process where the maximum system load is always bounded by $h \cdot n$.
He proves that the maximum (bin) load is $\ln\ln(n)/(d \cdot \ln(\phi_d)) + O(h)$ \whp\ (again, $\phi_d \in \intoo{1, 2}$ is the generalized golden ratio).
Note that for $h = \Omega(\log(n))$, this is not better than the bound achieved by the trivial \Greedy{1} allocation.

The authors of~\cite{DBLP:conf/soda/BenderCFKT23} consider  the moderately loaded case but for the reinsertion/deletion model.
They introduce the \Iceberg{d} process, combining \Greedy{1} and \agl{d}.
Slightly simplified, the \Iceberg{d} process allocates a ball via \Greedy{1} if the target bin contains at most $h + \tilde{O}(\sqrt{h})$ balls that were allocated by \Greedy{1}.
Otherwise, either \agl{d} is used (if there are currently fewer than $n/d$ balls allocated by \agl{d}) or the ball is allocated to bin $1$.
This process achieves a maximum load of $h + \log\log(n)/(d\log \phi_d)) + \tilde{O}(\sqrt{h})$.
Note that this process must know $h$ and keep track of the number of balls placed with specific strategies, making it difficult to use in distributed settings.
In contrast, \Greedy{d} places balls solely based on the selected bins' loads.

The focus of~\cite{DBLP:journals/mst/Czumaj00} is the analysis of the \emph{recovery time} (the time it takes a process to move from an arbitrary to a \enquote{typical} state) in the heavily loaded case ($m$ arbitrary).
The author considers a setting starting with $m$ arbitrarily placed balls followed by alternating deletions (either a random ball or from a random bin) and insertions.
The recovery time under \Greedy{d} allocation is proven to be polynomial in $m$ and $n$.
While parts of our analysis (\cref{sec:coupling}) are inspired by this result, note that in our case the system load $m$ fluctuates over time and that we do not only show fast recovery but actually characterize what the \enquote{typical} state looks like.

As mentioned at the beginning of the introduction,~\cite{DBLP:conf/focs/0001K22} give lower bounds for \Greedy{2} in the heavily loaded, adversarial insertion/deletion model.
They also provide a more general lower bound for any oblivious allocation strategy (like \Greedy{2} or \agl{2}) in the reinsertion/deletion model (with the system load capped at $m$): an oblivious adversary can force a maximum load of $m/4 + m^{\Omega(1)}$ for $n=4$ bins within the first $\poly(m)$ steps.
On the positive side, they introduce the allocation strategy \modgreedy for the insertion/deletion model, which achieves an overload of $O(\log(m))$.
To overcome their lower bound for \Greedy{2}, \modgreedy decides probabilistically between two bin candidates, based on their current load.
The algorithm must know global guarantees on the load situation (in particular, the maximum number of balls that will ever be present and the current maximum and minimum load).

\paragraph{Queuing Systems.}
A prominent example of a queuing system is the supermarket model considered in, e.g.,~\cite{VDK96, DBLP:conf/spaa/BerenbrinkCFV00, DBLP:journals/tpds/Mitzenmacher01, LN05, LM05}.
It can be seen as a variant of balls-into-bins process in continuous time.
Here, balls typically arrive in a Poisson stream of rate $\lambda n$ and are allocated according to \Greedy{d}.
Bins process (delete) balls in first-in first-out order, with a processing time that is exponentially distributed with mean $1$.
These systems and their results can often be readily mapped to the balls-into-bins settings yielding results similar to those discussed above.
However, the focus in queuing systems lies typically on properties like the balls' service times (the average time a ball spends in the system), how long it takes to reach an equilibrium state, or how the system behaves in the equilibrium.
Note that the (expected) unit processing time together with $\lambda < 1$ means that we are dealing here with the lightly/moderately loaded case ($m \leq \poly(n)$) in the equilibrium.
Later results (e.g.,~\cite{DBLP:conf/sigmetrics/BramsonLP10, DBLP:journals/sigmetrics/MukherjeeBLW16, BFL}) considered settings where $d$ and/or $\lambda$ might depend on $n$ or where the balls' service time adhere to more general distributions.

\paragraph{Further Related Work in Load Balancing.}
There are also results such as~\cite{DBLP:conf/esa/AdlerBS98, BFKMNW18} that consider \emph{parallel} load balancing settings, where balls are not sequentially allocated but thrown in batches of size $\Theta(n)$ and where, similar to the supermarket model, bins have a processing time (in these cases simply $1$).
One is interested in which batch sizes and allocation strategies yield small waiting times.
\cite{DBLP:conf/esa/AdlerBS98} shows for \Greedy{2} that, if the batches are not too large, the maximum waiting time is, \whp, at most $\ln\ln(n) / \ln(d) + O(1)$.
For a batch size of $\lambda n$, \cite{BFKMNW18} prove a maximum waiting time of $O(\frac{1}{1-\lambda} \cdot \log \frac{n}{1-\lambda})$ for \Greedy{1} and $O(\log \frac{n}{1-\lambda})$ for \Greedy{2}.

Another line of work considers a mixture of load balancing and averaging processes on general graphs (in contrast to the balls-into-bins settings, that can be thought of as load balancing on a clique of $n$ nodes).
Two recent results are~\cite{DBLP:journals/algorithmica/AlistarhNS22, DBLP:conf/icalp/BerenbrinkHHKR23}.
The major difficulty here typically stems from dealing with the given graph's structure, which changes the considered processes and analysis techniques considerably compared to our setting.

\section{Max Load \& Above-average Ball Count via Potential Functions.}%
\label{sec:larstom}
In this section we extend the analysis of \cite{DBLP:journals/rsa/PeresTW15} to allow for ball deletions (with variable deletion probabilities).
The following \cref{thm:potential_bound} uses the potential function $\Gamma_\alpha(\bm{x(t)})$ from \cite{DBLP:journals/rsa/PeresTW15}.  We show that, at an arbitrary point of time, the expected potential is linear in $n$.
The potential is defined as $\Gamma_{\alpha}(\bm{x(t)}) = \Phi_{\alpha}(\bm{x(t)}) + \Psi_{\alpha}(\bm{x(t)})$, where
\begin{equation}
\Phi_{\alpha}(\bm{x(t)}) = \sum_{i=1}^n e^{\alpha \cdot (x_i(t) - m(t)/n)},
\quad\text{and}\quad
\Psi_{\alpha}(\bm{x(t)}) = \sum_{i=1}^n e^{-\alpha \cdot (x_i(t) - m(t)/n)}
.
\end{equation}

The potential $\Phi_{\alpha}(\bm{x(t)})$ quantifies the impact of bins with loads above the average while $\Psi_{\alpha}(\bm{x(t)})$ does the same for the bins below the average.
Although bounds on the maximum load are derived only from $\Phi_{\alpha}(\bm{x(t)})$, both parts are needed to ensure a potential drop in every load situation where the potential is large.
\cref{thm:potential_bound}, whose proof can be found in \ifthenelse{\isfullversion}{\cref{app:potential_bound}}{the full version}, demonstrates that the potential function is robust enough to tolerate deletions if the $\beta(t)$ are lower-bounded by a constant $\beta > 0$.
It states that the absolute discrepancy at any time $t$ is, \whp, at most $O(\log n)$ higher than it was at the start.
In particular, if the process starts with an absolute discrepancy of $O(\log n)$, it will almost certainly be $O(\log n)$ at time $t$.

\begin{restatable}{theorem}{potentialbound}\label{thm:potential_bound}
Consider an insertion probability sequence $\intoo[\big]{\beta(t)}_{t \in \N}$ bounded away from $0$ by a constant $\beta > 0$.
If $\alpha \leq \beta/16$, then for any $t \in \N_0$ we have $\Exp[\Gamma_\alpha(\bm{x(t)})] = O(n) \cdot e^{\alpha \cdot \adisc(0)}$.
Moreover, for any $a \geq 0$ we have
\begin{math}
\Pr{\adisc(t) - \adisc(0) \geq \frac{a + 2}{\alpha} \cdot \ln n}
\leq n^{-a}
\end{math}.
\end{restatable}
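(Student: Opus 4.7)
The plan is to establish a one-step conditional drift inequality
\begin{equation}
\Exp\bigl[\Gamma_\alpha(\bm{x(t+1)}) \mid \bm{x(t)}\bigr] \leq \Bigl(1 - \tfrac{c_1 \alpha \beta}{n}\Bigr) \cdot \Gamma_\alpha(\bm{x(t)}) + c_2
\end{equation}
for absolute constants $c_1,c_2>0$, valid whenever $\alpha \leq \beta/16$, and to derive both claims from it. Iterating over $t$ steps gives $\Exp[\Gamma_\alpha(\bm{x(t)})] \leq e^{-c_1 \alpha \beta t/n}\Gamma_\alpha(\bm{x(0)}) + O(n/(\alpha\beta))$; together with $\Gamma_\alpha(\bm{x(0)}) \leq 2n \cdot e^{\alpha \adisc(0)}$ and $\alpha,\beta = \Theta(1)$, both summands are $O(n) \cdot e^{\alpha \adisc(0)}$. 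The tail bound then follows from Markov applied to $\Gamma_\alpha(\bm{x(t)})$ and the elementary observation $e^{\alpha \adisc(\bm{x})} \leq \Gamma_\alpha(\bm{x})$: some coordinate $y_i \coloneqq x_i - m/n$ achieves $|y_i| = \adisc(\bm{x})$ and the corresponding $e^{\alpha y_i}$ or $e^{-\alpha y_i}$ is one of the $2n$ positive summands of $\Gamma_\alpha$.

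To prove the drift, condition on $\bm{x(t)}$ and split the step into (i) an insertion via \Greedy{2} with probability $\beta(t) \geq \beta$, and (ii) a deletion from a uniform non-empty bin with probability $1-\beta(t)$. In either event the affected bin $i$ has $y_i$ shifted by $\pm(1-1/n)$ while every other $y_j$ shifts by $\mp 1/n$. Substituting and Taylor expanding to second order in $\alpha$, the conditional expected change of $\Phi_\alpha$ decomposes cleanly into an affected-bin term of the form $\Exp_i[e^{\alpha y_i}] \cdot \bigl(e^{\pm\alpha(1-1/n)} - e^{\mp\alpha/n}\bigr)$ plus a uniform-shift term $(e^{\mp\alpha/n} - 1) \cdot \Phi_\alpha$; the analogous decomposition holds for $\Psi_\alpha$ with signs flipped.

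For insertion I invoke the \Greedy{2} drift computation of~\cite{DBLP:journals/rsa/PeresTW15} essentially verbatim: the bias of \Greedy{2} towards less-loaded bins creates a negative covariance between the placement probabilities $p_i = (2i-1)/n^2$ (indexed by decreasing load) and the values $e^{\alpha y_i}$, so $\Exp_i[e^{\alpha y_i}]$ undershoots $\Phi_\alpha/n$ by a multiplicative constant, producing a drift of $-c_1 (\alpha/n) \Phi_\alpha + O(\alpha^2/n) \Phi_\alpha$ on $\Phi_\alpha$ and, symmetrically, on $\Psi_\alpha$. The deletion branch is the sign-reversed analogue of a \Greedy{1} step: averaging the affected-bin term over a (notionally) uniform bin cancels the $O(\alpha/n)$ first-order contribution, leaving a worst-case drift of $+c_3 (\alpha^2/n) \Gamma_\alpha + O(1)$. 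Combining the two branches using $\beta(t)\geq\beta$ and the assumption $\alpha\leq \beta/16$ forces the linear-in-$\alpha$ gain to dominate the quadratic loss, yielding the claimed drift inequality.

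The main obstacle I anticipate is the non-empty-bin restriction in the deletion step: because empty bins have $y_i = -m(t)/n$, they carry negligible weight in $\Phi_\alpha$ but the dominant $e^{\alpha m(t)/n}$ weight in $\Psi_\alpha$, so the actual distribution of the affected bin is not uniform over all $n$ coordinates. I plan to handle this by a direct term-by-term comparison against the easier all-bins-uniform deletion distribution: for $\Phi_\alpha$ the restriction raises $\Exp_i[e^{\alpha y_i}]$ from $\Phi_\alpha/n$ to at most $\Phi_\alpha/k$, which inflates the $O(\alpha^2/n)$ error by at most a factor $n/k$ that is absorbed by the negligible empty-bin contribution; for $\Psi_\alpha$ the restriction strictly reduces $\Exp_i[e^{-\alpha y_i}]$ by excising precisely the dominant positive contribution, so the $\Psi_\alpha$-drift can only improve. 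This monotonicity lets me close the argument without explicitly tracking the number $k$ of non-empty bins.
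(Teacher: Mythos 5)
Your high-level plan matches the paper's: establish a one-step multiplicative drift inequality for $\Gamma_\alpha$, iterate it to bound $\Exp[\Gamma_\alpha(\bm{x(t)})]$ by $\Gamma_\alpha(\bm{x(0)}) + O(n) = O(n)\cdot e^{\alpha\adisc(0)}$, then apply Markov together with $e^{\alpha\adisc(\bm{x})} \leq \Gamma_\alpha(\bm{x})$. The paper's proof does exactly this (Lemma~\ref{thm:recursion} for the drift, Lemma~\ref{thm:potential_bound1} for the iteration, Lemma~\ref{lem:whppotential_bound1} for the tail bound), and your treatment of the non-empty-bin restriction for deletions -- in particular the observation that excluding empty bins can only help the $\Psi_\alpha$-drift, since they carry the dominant $e^{\alpha m(t)/n}$ weight -- is sound and is consistent with what the paper does in inequality~\eqref{eq:psi_altered}.

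However, your derivation of the drift inequality has a genuine gap. You claim that the insertion step by itself yields a drift of $-c_1(\alpha/n)\Phi_\alpha + O(\alpha^2/n)\Phi_\alpha$ on $\Phi_\alpha$ ``and, symmetrically, on $\Psi_\alpha$,'' attributing this to a uniform negative-covariance argument from \cite{DBLP:journals/rsa/PeresTW15}. This is not what the covariance argument gives. When nearly all bins have load above average (say $y_{3n/4} > 0$), the placement probabilities $p_i$ are essentially uncorrelated with $e^{\alpha y_i}$, and the best one can show for $\Phi_\alpha$ alone is $E[\Delta\Phi_\alpha] \leq O(\alpha^2/n)\cdot\Phi_\alpha$ -- a potentially \emph{positive} drift, as in the paper's Corollary~\ref{phi_corollary}. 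The reason the drift inequality for $\Gamma_\alpha = \Phi_\alpha + \Psi_\alpha$ nonetheless holds is a case analysis on the quantile loads $y_{n/4}$ and $y_{3n/4}$: when $y_{3n/4} \leq 0$, $\Phi_\alpha$ has strongly negative drift (Lemma~\ref{lem:phi1}); when $y_{n/4} \geq 0$, $\Psi_\alpha$ does (Lemma~\ref{lem:psi1}); and in the remaining regimes one needs the additional dichotomy ``either $\Phi_\alpha \ll \Psi_\alpha$ or $\Gamma_\alpha = O(n)$'' (Lemmas~\ref{lem:phi2} and~\ref{lem:psi2}) to make the combined drift work. Without reproducing this case structure -- which is the technical core of the Peres--Talwar--Wieder/Talwar--Wieder analysis and, in modified form, of the paper's Appendix~\ref{app:potential_bound} -- the claimed drift inequality is not established, so the proposal as written does not close.
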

In particular, if we start in a perfectly balanced situation with discrepancy $0$ (e.g., empty system), the expected potential at an \emph{arbitrary} time $t$ is only linear.
Equipped with the above, we can prove a strong upper bound on the number of balls that lie above the average at a given time $t$:

\begin{restatable}{theorem}{restateThmTom}\label{prop:linear_potential}
Fix an arbitrary step $t$. Let $\beta \in (0,1]$ be a constant. Assume $\beta(\tau) \geq \beta$ for all $1 \leq \tau \leq t$ and $\alpha\le \beta/16$. For any fixed non-negative $\ell$ and any fixed $\mu \in (0,1)$, there is a $\gamma=\gamma(\mu)$ so that, 
with probability at least $1-n^{-(3+\ell)}$, the number of balls above height $\lceil \frac{m(t)}{n} \rceil+\gamma$ is at most $\mu n$.
\end{restatable}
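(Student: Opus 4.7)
The plan is to combine the potential bound from \cref{thm:potential_bound} with a level-by-level counting argument. Starting the process from an empty system gives $\adisc(0)=0$, so \cref{thm:potential_bound} yields $\Exp[\Gamma_\alpha(t)]=O(n)$ and hence $\Exp[\Phi_\alpha(t)]=O(n)$. A plain Markov application would only produce $\Phi_\alpha(t)\leq C n^{4+\ell}$ with probability $\geq 1-n^{-(3+\ell)}$, which is too weak to give a $\gamma$ independent of $n$. Instead, I would upgrade this to an exponential tail $\Pr{\Phi_\alpha(t)\geq Kn}\leq e^{-\Omega(K)}$ uniformly in $t$, by unwrapping the one-step drift inequality that already underlies the proof of \cref{thm:potential_bound}: conditional on the state at time $t-1$, $\Phi_\alpha(t)$ contracts by a factor $1-\Theta(\alpha/n)$ plus an additive $O(1)$. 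Propagating this contraction through the exponential moment $\Exp[e^{\lambda \Phi_\alpha(t)/n}]$ for a small constant $\lambda$ yields a Chernoff-type tail; choosing $K=K(\ell,\beta)$ large enough pushes the tail probability below $n^{-(3+\ell)}$, so that $\Phi_\alpha(t)\leq C_1 n$ holds with the desired probability, for a constant $C_1=C_1(\ell,\beta)$.

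Condition on this good event. For every integer $k\geq 0$, each bin $i$ with $x_i(t)\geq\lceil m(t)/n\rceil+\gamma+k$ contributes at least $e^{\alpha(\gamma+k)}$ to $\Phi_\alpha(t)$, so
\begin{equation*}
\bigl|\{i:x_i(t)\geq\lceil m(t)/n\rceil+\gamma+k\}\bigr|\ \leq\ \Phi_\alpha(t)\,e^{-\alpha(\gamma+k)}\ \leq\ C_1 n\,e^{-\alpha(\gamma+k)}.
\end{equation*}
The number of balls strictly above height $\lceil m(t)/n\rceil+\gamma$ equals $\sum_{k\geq 1}\bigl|\{i:x_i(t)\geq\lceil m(t)/n\rceil+\gamma+k\}\bigr|$, since a bin whose load exceeds the threshold by exactly $r$ contributes one ball to each of the first $r$ of these sets. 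Bounding the geometric series by $C_1 n\cdot e^{-\alpha\gamma}/(e^\alpha-1)$ and picking $\gamma=\gamma(\mu)$ as the smallest integer satisfying $e^{\alpha\gamma}\geq C_1/(\mu(e^\alpha-1))$ makes the count at most $\mu n$, as required.

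The main obstacle is the first step: producing a polynomially small tail for $\Phi_\alpha(t)$ that keeps the multiplicative constant on $n$ truly constant, so that the resulting $\gamma$ depends only on $\mu$ (and the fixed parameters $\alpha,\beta,\ell$) and not on $n$. Everything downstream---the level-by-level counting, the telescoping into balls, the geometric-series bound, and the final choice of $\gamma$---is essentially routine. Fortunately, the drift estimate required for the exponential tail is the same supermartingale structure that already drives the proof of \cref{thm:potential_bound}; the extra work is a contained exponential-moment Chernoff calculation that has to track the multiplicative contraction jointly with the additive $O(1)$ perturbation caused by insertions and random deletions.
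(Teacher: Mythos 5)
Your second half is correct and is essentially the paper's counting argument in a slightly cleaner form. The paper reassigns the bin potential $\Phi_\alpha = \sum_i e^{\alpha(x_i - m/n)}$ to \emph{balls} via a telescoping sum $e^{\alpha k} = e^\alpha + \sum_{j=2}^k(e^{\alpha j}-e^{\alpha(j-1)})$ and then lower-bounds the contribution of any $\mu n$ balls sitting at height $\geq \lceil m(t)/n\rceil + k$; you instead count bins level by level and sum a geometric series. These are the same estimate written two ways, and both give $\gamma = \Theta\bigl(\frac{1}{\alpha}\ln\frac{C_1}{\mu}\bigr)$ once one has $\Phi_\alpha(t) \leq C_1 n$ on a good event.

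The gap is in your first step. You claim the bound $\Pr{\Phi_\alpha(t) \geq Kn} \leq e^{-\Omega(K)}$ follows from a ``contained exponential-moment Chernoff calculation'' propagated through the drift $\Exp[\Phi_\alpha(t+1)\mid\cF_t] \leq (1-\Theta(\alpha/n))\Phi_\alpha(t) + O(1)$. This does not close as sketched, because the one-step increments of $\Phi_\alpha$ are not bounded by anything independent of $\Phi_\alpha$: inserting a ball into the currently fullest bin changes that bin's term by a multiplicative $e^{\alpha(1-1/n)}$, so $\abs{\Phi_\alpha(t+1)-\Phi_\alpha(t)}$ can be of order $\alpha\Phi_\alpha(t)$, not $O(1)$. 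Consequently the conditional factor $\Exp[e^{\lambda(\Phi_\alpha(t+1)-\Phi_\alpha(t))/n}\mid\cF_t]$ cannot be controlled by a Taylor expansion once $\Phi_\alpha(t)/n$ is large — exactly the regime the tail bound is supposed to address — and the supermartingale you would need does not materialize from the drift inequality alone. This is precisely the technical obstacle that drives the paper's \cref{lem:adaption_sauerwald}, which adapts \cite{LS21}: one first applies Markov to the drift bound to get a crude polynomial bound $\Gamma_{\alpha_1}^{(t)} \leq c\,n^{9+\ell}$ with probability $1-n^{-(8+\ell)}$; this forces $\abs{y_i(t)} = O(\log n / \alpha_1)$ for all bins, which in turn implies that the \emph{second}, coarser potential $\Gamma_{\alpha_2}$ with $\alpha_2 \ll \alpha_1$ has one-step increments bounded by $n^{1/3}$ (their Lemma~5.5, here \cref{lem:ls22}); only then can one run a bounded-difference concentration inequality, together with a red/green-phase recovery argument, to obtain $\Gamma_{\alpha_2}^{(t)} = O(n)$ with probability $1-n^{-(3+\ell)}$. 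The two-scale $(\alpha_1,\alpha_2)$ layering is what makes the increments behave; your single-potential exponential-moment sketch has no analogue of it and would fail at the step where you try to bound the conditional moment generating function.
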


The proof is based on\ifthenelse{\isfullversion}{ \cref{lem:adaption_sauerwald} in \cref{app:tom},}{} an adaptation of a result in \cite{DBLP:conf/innovations/LosS22, LS21}, which gives us a high-probability guarantee for the potential $\Gamma_\alpha(\bm{x(t)})$ being linear\ifthenelse{\isfullversion}{}{ (see full version)}.
In fact, the purpose of the adaptation of \cite{DBLP:conf/innovations/LosS22, LS21}'s results is two-fold: (i) to incorporate our deletions and their respective probability bounds, and (ii) to improve the high-probability guarantee from the original $1-1/n^3$ (which was sufficient for the purposes of \cite{DBLP:conf/innovations/LosS22, LS21}).
Note that the probability stated in \cref{prop:linear_potential} is inherited from \ifthenelse{\isfullversion}{\cref{lem:adaption_sauerwald}}{the adapted lemma}; apart from that, the theorem's actual proof is an entirely deterministic counting argument.
The proof details may be found in \ifthenelse{\isfullversion}{\cref{app:tom}}{the full version}.

\section{Maintaining a Well-balanced State.}%
\label{sec:layered}
\newcommand*{\nrballslvlgeq}[2]{m_{#1}(#2)}

Before we state this \lcnamecref{sec:layered}'s main result (\cref{thm:maintainmaxdiscrepancy}), we introduce the notion of \emph{$c$-good time intervals} $I$ for a given insertion probability sequence.
Intuitively, such an interval guarantees us that the total system load increases in expectation in any sufficiently long subinterval of $I$.
\begin{definition}[$c$-good interval]\label{def:good}
Given a sequence $\beta(t)$ ($t \in \N$) of insertion probabilities, we call a discrete time interval $I$ \emph{$c$-good} if for each subinterval $(t_1, t_2] \subseteq I$ with $t_2 - t_1 \geq c \cdot n$ we have
\begin{math}
\left(\sum_{t = t_1 + 1}^{t_2}\beta(t)\right)/(t_2 - t_1)
\geq
\frac12 (1 + \epsilon)
\end{math} for a constant $\epsilon > 0$.
\end{definition}

\Cref{thm:maintainmaxdiscrepancy} states that, \whp, our process, when started in a perfectly balanced configuration, remains well-balanced for at least $n^4$ steps.\footnote{
    In fact, our proof shows that starting from \emph{any} well-balanced configuration, the situation remains well-balanced for polynomial many steps.
}
Here, well-balanced (this is formalized later in \cref{def:validetc} and \cref{lem:maintainvalidity}) is measured relative to the maximum average observed so far for general insertion probability sequences and relative to the current average if the considered time window is $O(1)$-good.
\begin{theorem}%
\label{thm:maintainmaxdiscrepancy}
Consider an insertion probability sequence $\intoo[\big]{\beta(t)}_{t \in \N}$ bounded away from $0$ and $1$.
Let $\bm{x(0)}$ be a perfectly balanced configuration.
Then, \whp,
\begin{math}
x_{\max}(t)
\leq
 m_{\max}(t)/n + \log\log n + O(1)
\end{math}
for any $t \leq n^4$.
Moreover, if there is a constant $c \in \N$ such that the time interval $\intcc{0, n^4}$ is $c$-good, then, \whp,
\begin{math}
x_{\max}(t)
\leq
m(t)/n + \log\log n + O(1)
\end{math}
for any $t \leq n^4$.
\end{theorem}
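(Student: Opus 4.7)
The plan is to establish the theorem via a layered induction in the spirit of~\cite{ABKU99}, adapted to cope with the fluctuating total load as outlined in the paper's introduction. For each time $\tau \in \intcc{0, n^4}$ and each integer $k \geq 0$, let $B_k(\tau)$ denote the number of bins whose load at time $\tau$ is at least $M(\tau) + k$, where $M(\tau) \coloneqq \lceil m_{\max}(\tau)/n \rceil$. I would maintain the invariant $B_k(\tau) \leq \beta_k \cdot n$ for all relevant $k$ and $\tau$, where $\beta_{k_0}$ is a small constant, $\beta_{k+1}$ is proportional to $\beta_k^2$ (plus a small additive slack absorbing concentration error), and the iteration terminates at $K = \log\log n + O(1)$ with $\beta_K \cdot n < 1$. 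At level $K$ the invariant forces $x_{\max}(\tau) \leq M(\tau) + K$, which is exactly the desired bound of the first part. The initial perfectly balanced configuration gives $B_k(0) = 0$ for all $k \geq 1$, so the invariant is trivial at $\tau = 0$.

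For the base case, fix a constant $\mu < 1$ and let $k_0 \coloneqq \gamma(\mu)$ from \cref{prop:linear_potential}. Since $m_{\max}(\tau) \geq m(\tau)$, level $M(\tau) + k_0$ lies at or above the level $\lceil m(\tau)/n\rceil + k_0$ to which \cref{prop:linear_potential} applies, so at any fixed $\tau$ we have $B_{k_0}(\tau) \leq \mu n$ with probability at least $1 - n^{-(3+\ell)}$. Choosing $\ell$ large enough and union-bounding over the at most $n^4$ steps seeds the induction with $\beta_{k_0} = \mu$.

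For the inductive step, I would treat $\intoo[\big]{B_{k+1}(\tau)}_\tau$ as a bounded random walk with controlled drift. When $m_{\max}$ strictly increases, the reference level $M$ shifts upward and $B_{k+1}$ can only weakly decrease; a deletion can only decrease $B_{k+1}$; and an insertion via \Greedy{2} increases $B_{k+1}$ by one only if both sampled bins already have load at least $M + k$, which conditionally on the layer-$k$ hypothesis occurs with probability at most $(B_k/n)^2 \leq \beta_k^2$. Stopping the process at the first violation of the layer-$k$ hypothesis, the stopped walk has per-step upward drift at most $\beta_k^2$, and a Freedman/Azuma-type martingale tail bound over the polynomially many steps yields $B_{k+1}(\tau) \leq \beta_{k+1} \cdot n$ with probability $1 - n^{-\Omega(1)}$. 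Union-bounding over the $O(\log\log n)$ levels and the $\poly(n)$ time steps completes the first claim.

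For the second, $c$-good variant, the assumption forces, on every subinterval of length at least $c n$, an expected net upward load change of $\Omega(n)$. Chernoff/Azuma concentration on the increments of $m(\tau)$ then gives $m_{\max}(\tau) - m(\tau) = O(n)$, and hence $m_{\max}(\tau)/n - m(\tau)/n = O(1)$, uniformly over $\tau \leq n^4$ with high probability; substituting into the first part replaces $m_{\max}(\tau)/n$ by $m(\tau)/n$ at the cost of an additive $O(1)$ term absorbed into the $O(1)$ already in the statement. The main obstacle I expect is the inductive step with the \emph{floating} reference $M(\tau)$: one must ensure that upward jumps of $M$ (driven by the same insertion randomness that also controls $B_k$) are compatible with the martingale decomposition of $B_{k+1}$, and that the layer-$k$ conditioning can be propagated across the entire $n^4$-step window without an accumulation of failure probability. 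Handling this cleanly will likely require a stopping-time argument that re-anchors the random walk each time $m_{\max}$ reaches a new high, together with sharp enough tail bounds so that the geometrically shrinking $\beta_k$ contribute only a $1/\poly(n)$ total failure across the union bound over levels and times.
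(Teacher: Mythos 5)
Your proposal follows the same architecture as the paper: a layered induction anchored at the floating reference $\lceil m_{\max}(\tau)/n\rceil + O(1)$, a doubly‑exponential recursion $\beta_{k+1}\propto\beta_k^2$ for the level thresholds, a base case supplied by \cref{prop:linear_potential} (with the $\ell$ parameter tuned so a union bound over $n^4$ steps goes through), and a Chernoff argument that turns $c$‑goodness into $m_{\max}(t)-m(t)=O(n)$ uniformly. The choice to count \emph{bins} at each level rather than \emph{balls} at or above each level (the paper's $m_{\ell}^{(h)}$) is a cosmetic bookkeeping difference.

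However, the inductive step as written has a genuine gap. You control only the \emph{upward} increment of $B_{k+1}$ (an insertion hitting two bins of $B_k$, probability at most $\hat\beta\,(B_k/n)^2\le\hat\beta\beta_k^2$), note that deletions and reference shifts can only push $B_{k+1}$ down, and then invoke a Freedman/Azuma bound over the $T=n^4$ horizon. That argument does not close: ignoring the downward moves, the expected cumulative upward drift over $T$ steps is $T\hat\beta\beta_k^2$, and for every level except the very last this vastly exceeds the target $\beta_{k+1}n = \Theta(\beta_k^2\,n)$, since $T\gg n$. A one‑sided tail bound therefore cannot keep $B_{k+1}$ below $\beta_{k+1}n$ for $n^4$ steps. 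The crucial missing ingredient is a \emph{quantitative} downward drift from deletions: for $B_{k+1}$ to actually decrease, the deleted ball must come from a bin with load exactly $M+k+1$, and the number of such bins is $B_{k+1}-B_{k+2}$. Making this large enough requires controlling level $k+2$ \emph{simultaneously} with level $k$, and one has to choose the thresholds so that in the ``critical'' range $B_{k+1}\in[\beta_{k+1}n/2,\,\beta_{k+1}n)$ the ratio of down‑to‑up step probabilities is bounded away from $1$. This is exactly what the paper does: \cref{lem:muchangeprob} records both increase \emph{and} decrease probabilities, the recursion \cref{eqn:lvlboundary:def} is calibrated so \cref{lem:murw_drift} gives a down/up bias of at least $2$ (respectively $\sqrt n$ at the top level) while \emph{all} levels are valid, and \cref{lem:biasedrw_crossprob} — a hitting‑probability identity for biased walks, not a one‑sided concentration bound — then bounds the probability of crossing the critical band. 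Replacing your ``upward drift $\le\beta_k^2$ plus Freedman/Azuma'' step by the paired down/up drift estimate and a biased‑walk crossing bound would repair the argument and bring it into alignment with \cref{lem:maintainvalidity}.
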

To prove \cref{thm:maintainmaxdiscrepancy}, we conduct a layered induction. A major difficulty stems from the fact that we have to deal with a permanently changing average load.
For this an important ingredient is \cref{prop:linear_potential}, which gives us a high probability bound on the number of balls above the \emph{current} average.
This enables us to derive a base case for our layered induction, independent of the changing average load.
The layered induction works roughly as follows:
For each \emph{level} $\ell \in \set{0, 1, \dots, \log\log(n) + \Theta(1)}$ we consider the number of balls at height greater than average $+\ell$.
For each level we define a \emph{critical threshold} $\alpha_{\ell}$ that decreases doubly exponentially with $\ell$.
Moreover, each level has a \emph{safe threshold} $\alpha_{\ell}/2$.
We show that, as long as all levels remain below their critical threshold, any level that is above its safe threshold has a strong drift back \emph{towards} the safe threshold.
A standard result on random walks then shows that it is unlikely that during the $\poly(n)$ many considered time steps, any of the $\log\log(n) + \Theta(1)$ many levels (that all start below their respective safe threshold) crosses the \emph{critical interval} $\intco{\alpha_{\ell/2}, \alpha_{\ell}}$.

\paragraph{Formalizing Well-balanced Configurations.}
In the following, let $\hat{\beta} \in \intoo{0, 1}$ be a constant upper bound on the insertion probability of our process, such that $\beta(t) \leq \hat{\beta}$ for all considered time steps $t$.
For a height $h \in \N_0$ and a configuration $\bm{x}$ define $m_h(\bm{x})$ as the number of balls that have height at least $h$ in configuration $\bm{x}$.
For our process starting in an initial configuration $\bm{x(0)}$, we use the shorthand $m_h(t) \coloneqq m_h(\bm{x(t)})$.
In our analysis, it will be helpful to measure the number of balls at or above some \emph{level} $\ell$, which, in turn, is above a given \emph{base height} $h$ (which might vary over time).
To this end, we introduce the notation $m_{\ell}^{(h)}(\bm{x}) \coloneqq m_{h + \ell}(\bm{x})$ and $m_{\ell}^{(h)}(t) \coloneqq m_{h + \ell}(t)$.

We consider $\ell^* + 2 = \log\log(n) + \Theta(1)$ many levels $\ell \in \set{0, 1, \dots, \ell^* + 1}$ with their \emph{critical thresholds} $\alpha_{\ell}$ defined as follows:
\begin{equation}%
\label{eqn:lvlboundary:def}
\alpha_{\ell} \coloneqq \begin{cases}
    \frac{1 - \hat{\beta}}{128\hat{\beta}} \cdot n                      & \text{if $\ell = 0$,}
    \\
    \frac{32\hat{\beta}}{1 - \hat{\beta}} \cdot \alpha_{\ell - 1}^2 / n & \text{if $\alpha_{\ell - 1} > \sqrt{\frac{3(1 - \hat{\beta})}{2\hat{\beta}} \cdot n \log(n)}$,}
    \\
    12\log(n)                                                           & \text{if $\ell = \ell^* \coloneqq \min\Set{\ell | \alpha_{\ell - 1} \leq \sqrt{\frac{3(1 - \hat{\beta})}{2\hat{\beta}} \cdot n \log(n)}}$, and}
    \\
    24                                                                  & \text{if $\ell = \ell^* + 1$.}
\end{cases}
\end{equation}
Note that the recursive definition ensures $\ell^* = \log\log(n) + \Theta(1)$.
Moreover, it implies the following relations between the critical thresholds $\alpha_{\ell}$ and $\alpha_{\ell - 1}$:
\begin{obs}%
\label{obs:lvlboundary:props}
For any $\ell \in \set{1, 2, \dots, \ell^* + 1}$ and large enough $n$ we have
\begin{equation}
\frac{8\hat{\beta}}{1 - \hat{\beta}} \cdot \frac{\alpha_{\ell - 1}^2}{n}
\leq
\alpha_{\ell}
\leq
\frac{1}{4} \cdot \alpha_{\ell - 1}
.
\end{equation}
\end{obs}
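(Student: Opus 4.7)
The plan is to verify both inequalities by case analysis on the four branches of \eqref{eqn:lvlboundary:def} defining $\alpha_\ell$. Throughout, I will rely on the "large enough $n$" hypothesis, which in particular forces $\ell^* \ge 2$ (since $\ell^*=1$ would require $\alpha_0 = \tfrac{1-\hat{\beta}}{128\hat{\beta}}n \le \sqrt{\tfrac{3(1-\hat{\beta})}{2\hat{\beta}}n\log n}$, i.e.\ $n = O(\log n)$).

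First, for the \emph{squaring regime} $1 \le \ell \le \ell^* - 1$, where $\alpha_\ell = \tfrac{32\hat{\beta}}{1-\hat{\beta}}\alpha_{\ell-1}^2/n$, the lower bound is immediate from $32 \ge 8$. The upper bound $\alpha_\ell \le \alpha_{\ell-1}/4$ rearranges to $\alpha_{\ell-1} \le \tfrac{1-\hat{\beta}}{128\hat{\beta}}n = \alpha_0$, which I establish by induction on $\ell$: the base case $\ell=1$ is in fact an equality $\alpha_1 = \alpha_0/4$ that drops out of the definitions by direct substitution, and for $\ell \ge 2$ the inductive hypothesis gives $\alpha_{\ell-1} \le \alpha_{\ell-2}/4 \le \alpha_0$.

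Next, for the boundary level $\ell = \ell^*$, where $\alpha_{\ell^*} = 12\log n$, the lower bound falls straight out of the defining threshold: substituting $\alpha_{\ell^*-1} \le \sqrt{\tfrac{3(1-\hat{\beta})}{2\hat{\beta}}n\log n}$ into $\tfrac{8\hat{\beta}}{1-\hat{\beta}}\alpha_{\ell^*-1}^2/n$ collapses exactly to $12\log n = \alpha_{\ell^*}$. For the companion upper bound, I exploit the \emph{minimality} of $\ell^*$ to obtain $\alpha_{\ell^*-2} > \sqrt{\tfrac{3(1-\hat{\beta})}{2\hat{\beta}}n\log n}$, and then push this through one more step of the squaring recursion, yielding $\alpha_{\ell^*-1} > \tfrac{32\hat{\beta}}{1-\hat{\beta}}\cdot\tfrac{3(1-\hat{\beta})}{2\hat{\beta}}\log n = 48\log n$; hence $\alpha_{\ell^*} = 12\log n \le \alpha_{\ell^*-1}/4$.

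Finally, for $\ell = \ell^*+1$, with $\alpha_{\ell^*+1} = 24$ and $\alpha_{\ell^*} = 12\log n$, both inequalities reduce to asymptotic comparisons: $24 \le 3\log n = \alpha_{\ell^*}/4$, and $\tfrac{8\hat{\beta}}{1-\hat{\beta}}(12\log n)^2/n = O(\log^2 n / n) \le 24$, each holding for all sufficiently large $n$. The only step that demands care is the boundary level $\ell = \ell^*$, where the two-sided sandwich of $\alpha_{\ell^*-1}$ (upper bound from the defining threshold, lower bound from minimality of $\ell^*$ propagated through one step of the squaring recursion) is essential to make both inequalities meet at $12\log n$; everything else is a routine calculation.
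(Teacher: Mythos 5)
Your proof is correct. The paper states Observation~3.2 without a written proof (it is presented as a direct consequence of the recursive definition in Equation~\eqref{eqn:lvlboundary:def}), so there is no given argument to compare against; your case analysis by the four branches of the definition, together with the two key sandwich observations --- that $\alpha_\ell \le \alpha_{\ell-1}/4$ in the squaring regime is equivalent to $\alpha_{\ell-1} \le \alpha_0$ (established by a short descent induction), and that at the boundary level $\ell^*$ the defining threshold gives the lower bound while minimality of $\ell^*$ pushed through one squaring step gives $\alpha_{\ell^*-1} > 48\log n$ for the upper bound --- is exactly the computation one would carry out, and all the numerical constants ($32 \cdot 3/2 = 48$, $8 \cdot 3/2 = 12$, $32/128^2 = 1/512$) check out. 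The one small presentational wrinkle: you describe the base case of the induction for $\alpha_{\ell-1} \le \alpha_0$ as the identity $\alpha_1 = \alpha_0/4$, but the base case of that particular claim is really just the tautology $\alpha_0 \le \alpha_0$ at $\ell = 1$; the identity $\alpha_1 = \alpha_0/4$ is the conclusion of the upper bound at that level rather than the hypothesis. The argument is sound regardless.
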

%

Using the critical threshold $\alpha_{\ell}$, we define the \emph{valid interval} of level $\ell \in \set{0, 1, \dots, \ell^* + 1}$ as $V_{\ell} \coloneqq \intco{0, \alpha_{\ell}}$.
Similarly, we define the level's \emph{safe interval} as $S_{\ell} \coloneqq \intco{0, \alpha_{\ell}/2}$ and its \emph{critical interval} as $C_{\ell} \coloneqq \intco{\alpha_{\ell}/2, \alpha_{\ell}}$.

\begin{definition}\label{def:validetc}
For a base height $h \in \N_0$ and level $\ell \in \set{0, 1, \dots, \ell^* + 1}$ we say level $\ell$ of configuration $\bm{x}$ is
\begin{enumerate*}[afterlabel=, label=]
\item \emph{$h$-valid} if $m_{\ell}^{(h)}(\bm{x}) \in V_{\ell}$,
\item \emph{$h$-safe} if $m_{\ell}^{(h)}(\bm{x}) \in S_{\ell}$,
\item  and \emph{$h$-critical} if $m_{\ell}^{(h)}(\bm{x}) \in C_{\ell}$.
\end{enumerate*}
Configuration $\bm{x}$ is \emph{$h$-valid}/\emph{-safe}/\emph{-critical} if all levels $0$ to $\ell^* + 1$ are $h$-valid/-safe/-critical.
\end{definition}
In proofs, we sometimes omit the base height if it is clear from the context.

\paragraph{Analysis.}
We start with a simple \lcnamecref{lem:muchangeprob} that provides the probabilities for the increase and decrease of the number of balls at or above a given height $h$.
\begin{lemma}%
\label{lem:muchangeprob}
For any configuration $\bm{x}$ and any height $h \in \N$ we have
\begin{enumerate}
\item
    \begin{math}
    \Pr{\nrballslvlgeq{h}{t+1} > \nrballslvlgeq{h}{t} \;|\; \bm{x(t)} = \bm{x}}
    =
    \beta(t) \cdot {\bigl(\nrballslvlgeq{h - 1}{t} - \nrballslvlgeq{h}{t}\bigr)}^2 / n^2
    \end{math}
    and
\item
    \begin{math}
    \Pr{\nrballslvlgeq{h}{t+1} < \nrballslvlgeq{h}{t} \;|\; \bm{x(t)} = \bm{x}}
    \geq
    \bigl( 1 - \beta(t) \bigr) \cdot \bigl(\nrballslvlgeq{h}{t} - \nrballslvlgeq{h + 1}{t}\bigr) / n
    .
    \end{math}
\end{enumerate}
\end{lemma}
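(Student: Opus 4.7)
The plan is to prove each of the two claims by decomposing the one-step change according to whether an insertion or a deletion occurs, and then translating the event ``$m_h$ changes'' into an event about the load of the bin touched in that step. A convenient reformulation is to write $m_h(\bm{x}) = \sum_i \max\set{0, x_i - h + 1} = \sum_{j \geq h} B_j(\bm{x})$, where $B_j(\bm{x})$ denotes the number of bins with load at least $j$. In particular, the number of bins at load exactly ``$\geq h-1$'' equals $B_{h-1}(\bm{x}) = m_{h-1}(\bm{x}) - m_h(\bm{x})$, and similarly $B_h(\bm{x}) = m_h(\bm{x}) - m_{h+1}(\bm{x})$. These identities are what produce the ``$m_{h-1} - m_h$'' and ``$m_h - m_{h+1}$'' factors appearing in the statement.

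For the first (equality) bound, I would argue as follows. Condition on $\bm{x(t)} = \bm{x}$. For $m_h$ to strictly increase at step $t+1$, the step must be an insertion (probability $\beta(t)$), and the ball must land in a bin whose current load is at least $h-1$, since only then does the placement push a bin's load from $< h$ to $\geq h$, raising $m_h$ by one. Under \Greedy{2}, two bins are sampled uniformly at random (with replacement) and the ball is placed in the less loaded one; the resulting bin has load $\geq h-1$ \emph{exactly} when both sampled bins have load $\geq h-1$. By independence of the two samples this event has probability $(B_{h-1}(\bm{x})/n)^2$, which equals $((m_{h-1}(t) - m_h(t))/n)^2$ by the identity above. (Observe that a deletion step can never strictly increase $m_h$, so no other contribution arises.) Multiplying by $\beta(t)$ gives the desired formula.

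For the second (inequality) bound, I would again condition on $\bm{x(t)} = \bm{x}$ and note that $m_h$ can strictly decrease only at a deletion step, which occurs with probability $1 - \beta(t)$. When a deletion happens, a non-empty bin is chosen uniformly at random and its top ball is removed; that top ball sits at height equal to the bin's load, so $m_h$ drops by one precisely when the chosen bin has load $\geq h$. The number of bins with load $\geq h$ is $B_h(\bm{x}) = m_h(t) - m_{h+1}(t)$, while the total number of non-empty bins is $B_1(\bm{x}) \leq n$. Hence the conditional probability of choosing such a bin is $B_h(\bm{x})/B_1(\bm{x}) \geq (m_h(t) - m_{h+1}(t))/n$, and multiplying by $1 - \beta(t)$ yields the claim.

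I do not expect any step to pose a real obstacle: the entire argument is a careful case analysis of a single time step. The only mildly subtle point is remembering that deletion uses the ``uniform over non-empty bins'' rule rather than uniform over all bins, which is exactly what forces the result to be stated as a lower bound (using $B_1 \leq n$) rather than an equality; for the insertion calculation one must also recall that \Greedy{2} samples with replacement, so no $n-1$ factors appear.
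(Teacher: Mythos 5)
Your proof is correct and takes essentially the same approach as the paper's: for the increase, both \Greedy{2} samples must land on bins of load at least $h-1$, whose count equals $m_{h-1}(t)-m_h(t)$; for the decrease, the deleted bin must have load at least $h$, and the bound $B_1(\bm{x})\le n$ turns the equality into the stated inequality. Your reformulation via $B_j(\bm{x})$ and the remark that you need with-replacement sampling are useful expository additions, but the underlying argument is identical to the paper's.
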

\begin{proof}
\begin{enumerate}
\item For $\nrballslvlgeq{h}{t}$ to increase during the next step, a ball must be inserted by \Greedy{2} (probability $\beta(t)$) and both of the choices of \Greedy{2} must fall into a bin that has load at least $h - 1$.
    The number of such bins is exactly $\nrballslvlgeq{h-1}{t} - \nrballslvlgeq{h}{t}$, yielding the probability ${\bigl(\nrballslvlgeq{h - 1}{t} - \nrballslvlgeq{h}{t}\bigr)}^2 / n^2$.

\item For $\nrballslvlgeq{h}{t}$ to decrease during the next step, a ball must be deleted (probability $1 - \beta(t)$), and the chosen bin must have load at least $h$.
    The number of such bins is exactly $\nrballslvlgeq{h}{t} - \nrballslvlgeq{h+1}{t}$ and the total number of non-empty bins is at most $n$, yielding the probability $\geq \bigl( 1 - \beta(t) \bigr) \cdot \bigl(\nrballslvlgeq{h}{t} - \nrballslvlgeq{h + 1}{t}\bigr) / n$.
    \qedhere
\end{enumerate}
\end{proof}

\begin{restatable}{lemma}{restatemurwdrift}%
\label{lem:murw_drift}
Consider an $h$-valid configuration $\bm{x}$ and let $\bm{x'}$ denote the (random) configuration after one step of our process with the insertion probability $\beta(t) \leq \hat{\beta} < 1$.
\begin{enumerate}
\item If level $\ell \in \set{1, 2, \dots, \ell^*}$ of configuration $\bm{x}$ is $h$-critical, then
    \begin{equation}
    \Pr{m_{\ell}^{(h)}(\bm{x'}) < m_{\ell}^{(h)}(\bm{x})} \Big/ \Pr{m_{\ell}^{(h)}(\bm{x'}) > m_{\ell}^{(h)}(\bm{x})}
    \geq
    2
    .
    \end{equation}

\item If level $\ell = \ell^* + 1$ of configuration $\bm{x}$ is $h$-critical and if $n$ is large enough, then
    \begin{equation}
    \Pr{m_{\ell}^{(h)}(\bm{x'}) < m_{\ell}^{(h)}(\bm{x})} \Big/ \Pr{m_{\ell}^{(h)}(\bm{x'}) > m_{\ell}^{(h)}(\bm{x})}
    \geq
    \sqrt{n}
    .
    \end{equation}
\end{enumerate}
\end{restatable}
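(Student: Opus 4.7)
The plan is to apply \cref{lem:muchangeprob} directly and to bound the resulting ratio using the valid/safe/critical bounds on neighbouring levels together with \cref{obs:lvlboundary:props}.

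First I would bound the increase probability in a uniform way. By $h$-validity of level $\ell-1$ we have $m_{\ell-1}^{(h)}(\bm{x}) < \alpha_{\ell-1}$, so the first part of \cref{lem:muchangeprob} gives
\begin{equation*}
\Pr{m_\ell^{(h)}(\bm{x'}) > m_\ell^{(h)}(\bm{x})} \leq \hat\beta \cdot \alpha_{\ell-1}^2/n^2.
\end{equation*}
For Case 1, i.e.\ $\ell \in \{1,\ldots,\ell^*\}$, I would combine the $h$-criticality $m_\ell^{(h)} \geq \alpha_\ell/2$ with $h$-validity of level $\ell+1$ and the relation $\alpha_{\ell+1} \leq \alpha_\ell/4$ from \cref{obs:lvlboundary:props} to obtain $m_\ell^{(h)} - m_{\ell+1}^{(h)} \geq \alpha_\ell/4$. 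The second part of \cref{lem:muchangeprob} then gives $\Pr{m_\ell^{(h)}(\bm{x'}) < m_\ell^{(h)}(\bm{x})} \geq (1-\hat\beta)\,\alpha_\ell/(4n)$, so the ratio is at least
\begin{equation*}
\frac{(1-\hat\beta)\cdot n \cdot \alpha_\ell}{4\hat\beta \cdot \alpha_{\ell-1}^2}.
\end{equation*}
Plugging in $\alpha_\ell \geq \tfrac{8\hat\beta}{1-\hat\beta}\cdot \tfrac{\alpha_{\ell-1}^2}{n}$ from \cref{obs:lvlboundary:props} makes all parameters cancel and yields exactly $2$.

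Case 2, $\ell = \ell^*+1$, does not immediately fit this template because we have no $h$-validity bound on level $\ell^*+2$, so the lower bound on the gap $m_\ell^{(h)} - m_{\ell+1}^{(h)}$ cannot be obtained the same way. The key observation that rescues the argument is elementary: whenever $m_k(\bm{x}) > 0$ at least one bin carries load $\geq k$, which means $m_k(\bm{x}) - m_{k+1}(\bm{x}) \geq 1$. Since $m_{\ell^*+1}^{(h)} \geq \alpha_{\ell^*+1}/2 = 12 > 0$ by criticality, this forces $m_{\ell^*+1}^{(h)} - m_{\ell^*+2}^{(h)} \geq 1$, so the decrease probability is at least $(1-\hat\beta)/n$. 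Using $\alpha_{\ell^*} = 12\log(n)$, the increase-probability bound above becomes $\hat\beta \cdot 144\log^2(n)/n^2$, whence the ratio is at least $(1-\hat\beta)\,n / (144\,\hat\beta \log^2(n))$, which exceeds $\sqrt{n}$ for sufficiently large $n$.

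The main obstacle is really Case 2: the clean algebraic cancellation in Case 1 hinges on the quadratic-in-$\alpha_\ell$ structure given by \cref{obs:lvlboundary:props}, which breaks down at the top level because no validity bound is available above it. Fortunately the increase probability at the top level is already only $O(\log^2(n)/n^2)$, leaving enough slack to absorb the very crude pigeonhole-style gap bound of $1$ while still producing a drift factor of $\sqrt{n}$.
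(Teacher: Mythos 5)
Your proof is correct and takes essentially the same route as the paper: apply \cref{lem:muchangeprob} to both transition probabilities, then use criticality of level $\ell$, validity of levels $\ell-1$ and $\ell+1$, and \cref{obs:lvlboundary:props} to get the factor $2$ in Case~1, and for Case~2 replace the validity bound above the top level by the pigeonhole observation $m_{\ell^*+1}^{(h)} - m_{\ell^*+2}^{(h)} \geq 1$ combined with $\alpha_{\ell^*} = 12\log n$ to obtain the $\sqrt{n}$ bias. The intermediate bounds you write are slightly reorganised (you pass directly to $\alpha_{\ell-1}^2$ in the denominator rather than via $(\alpha_{\ell-1}-\alpha_\ell/2)^2$), but the argument and every key ingredient are the same as in the paper.
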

The proof of \cref{lem:murw_drift} may be found in \ifthenelse{\isfullversion}{\cref{app:layered}}{the full version}.

Next we formulate an auxiliary \lcnamecref{lem:biasedrw_crossprob} about the probability that a random walk on integers reaches position $b > 0$ before position $-a < 0$ when started in $0$, assuming the random walk is biased towards the left by a factor $r > 0$.
We will apply this \lcnamecref{lem:biasedrw_crossprob} to the values $m_{\ell}^{(h)}(\bullet)$ (basically corresponding to the random walks) to show that these values are unlikely to cross the critical interval $C_{\ell}$ (on which, by \cref{lem:murw_drift}, $m_{\ell}^{(h)}(\bullet)$ is biased by a factor of at least $2$ towards the safe left side).
\begin{lemma}%
\label{lem:biasedrw_crossprob}
Consider a (not necessarily i.i.d.) random walk $(S_t)_{t \in \N_0}$ on $\Z$ with initial position $S_0 = 0$ and position $S_t = \sum_{i = 1}^{t} X_i$ after $t$ steps with \emph{step width} $X_i \in \set{-1, 0, +1}$ at time $i$.
Let $\cF = \intoo{\cF_t}_{t \in \N_0}$ denote the random walk's natural filtration.
Assume there is $r \in \R_{>0} \setminus \set{1}$ such that for all $t \in \N$,
$\Pr{X_t = -1 \;|\; \cF_{t-1}} ~\big/~ \Pr{X_t = +1 \;|\; \cF_{t-1}} = r$.
For $x \in \Z$ define the stopping time $\tau_x \coloneqq \inf\set{t \in \N_0 | S_t = x}$ indicating when the random walk reaches $x$ for the first time.
Then, for any $a, b > 0$,
$\Pr{\tau_{+b} < \tau_{-a}} = (r^a - 1)/(r^{a + b} - 1)$.
\end{lemma}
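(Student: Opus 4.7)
The plan is to use the classical martingale / optional stopping argument for gambler's ruin, adapted to the non-i.i.d.\ setting. First, I would introduce the exponential martingale $M_t \coloneqq r^{S_t}$ with respect to $\cF$ and verify the martingale property directly from the hypothesis on the ratio. Conditioning on $\cF_{t-1}$ and writing $p_{+} \coloneqq \Pr{X_t = +1 \mid \cF_{t-1}}$, $p_{-} \coloneqq \Pr{X_t = -1 \mid \cF_{t-1}}$, $p_0 \coloneqq \Pr{X_t = 0 \mid \cF_{t-1}}$, we have $p_- = r p_+$ and $p_- + p_0 + p_+ = 1$, so
\begin{equation}
\Exp\intcc[\big]{r^{X_t} \mid \cF_{t-1}}
=
p_- r^{-1} + p_0 + p_+ r
=
p_+ + p_0 + p_-
=
1,
\end{equation}
hence $\Exp\intcc{M_t \mid \cF_{t-1}} = M_{t-1}$.

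Next, I would apply the optional stopping theorem to the bounded stopping time $\tau \wedge T$, where $\tau \coloneqq \tau_{+b} \wedge \tau_{-a}$. On $\set{\tau \wedge T \leq T}$ the stopped value $M_{\tau \wedge T}$ lies in $\intcc{\min(r^{-a}, r^b), \max(r^{-a}, r^b)}$, so it is bounded, and we obtain $\Exp\intcc{M_{\tau \wedge T}} = \Exp\intcc{M_0} = 1$. Under any mild nondegeneracy of the walk (which is clearly in force in the applications to $m_\ell^{(h)}(\bullet)$ via \cref{lem:murw_drift}) we have $\tau < \infty$ almost surely, so bounded convergence lets us pass $T \to \infty$ to conclude $\Exp\intcc{M_\tau} = 1$.

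Finally, decomposing according to which barrier is hit first,
\begin{equation}
1 = \Exp\intcc{r^{S_\tau}}
= r^{b} \cdot \Pr{\tau_{+b} < \tau_{-a}} + r^{-a} \cdot \intoo[\big]{1 - \Pr{\tau_{+b} < \tau_{-a}}},
\end{equation}
and solving for $\Pr{\tau_{+b} < \tau_{-a}}$ yields $(1 - r^{-a})/(r^b - r^{-a})$, which equals $(r^a - 1)/(r^{a+b} - 1)$ after multiplying numerator and denominator by $r^a$.

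The only genuine subtlety is that the step distributions are not i.i.d.; the martingale identity must be derived purely from the pointwise conditional ratio assumption, as done above, rather than from any global distributional identity. The almost-sure finiteness of $\tau$ is a minor technical point that follows immediately in the intended applications since the random walks we care about have conditional move probabilities bounded away from $0$ whenever they are in the interior of $\intoo{-a, b}$; if one preferred a fully self-contained statement, the identity still holds in the form $\Exp\intcc{r^{S_\tau} \mathbf{1}_{\tau < \infty}} + \Exp\intcc{\liminf_{T \to \infty} r^{S_T} \mathbf{1}_{\tau = \infty}} = 1$, which on the event $\set{\tau = \infty}$ forces $S_T$ to stay bounded and leads to the same final formula. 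I expect no other obstacles.
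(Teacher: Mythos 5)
Your proof is correct and follows essentially the same route as the paper: the exponential martingale $M_t = r^{S_t}$ (the paper writes it recursively as $M_i = r^{X_i}M_{i-1}$, which is the same thing) together with the optional stopping theorem at $\tau = \tau_{-a}\wedge\tau_{+b}$, and then solving the resulting linear equation. You are in fact somewhat more careful than the paper: you explicitly verify the martingale property from the conditional ratio hypothesis and explicitly justify optional stopping by first stopping at $\tau\wedge T$, using boundedness of the stopped process on $[-a,b]$, and passing $T\to\infty$. The only wrinkle is your closing aside about $\tau=\infty$: merely knowing that $S_T$ stays bounded on $\{\tau=\infty\}$ does not by itself recover the formula, because the identity then has the extra unknown $\Exp[r^{S_\infty}\mathbf{1}_{\tau=\infty}]$ and the three probabilities $\Pr{\tau_{+b}<\tau_{-a}}$, $\Pr{\tau_{-a}<\tau_{+b}}$, $\Pr{\tau=\infty}$ are not pinned down by a single linear equation. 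So $\tau<\infty$ a.s.\ genuinely needs to be assumed (or derived from nondegeneracy of the conditional move probabilities, which, as you note, holds in the intended application); the paper silently assumes it as well.
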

\begin{proof}
Consider martingale $M_0 \coloneqq 1$ and
\begin{math}
M_i
\coloneqq
r^{X_i} \cdot M_{i-1}
\end{math}.
Define the stopping time $\tau \coloneqq \min\set{\tau_{-a}, \tau_b}$.
By the optional stopping theorem, we have $\Exp[M_{\tau}] = \Exp[M_0] = 1$.
This yields
$
1
=
\Exp[M_0]
=
\Exp[M_{\tau}]
=
\Pr{\tau_{+b} < \tau_{-a}} \cdot r^b + \bigl(1 - \Pr{\tau_{+b} < \tau_{-a}}\bigr) \cdot r^{-a}
$.
Rearranging for $\Pr{\tau_{+b} < \tau_{-a}}$ gives the desired result.
\end{proof}

The next \lcnamecref{lem:maintainvalidity} is our main tool in proving \cref{thm:maintainmaxdiscrepancy}.
It states that if started in a safe configuration $\bm{x}$, our process maintains safety (with respect to the base height $h(t) \coloneqq \ceil{m_{\max}(t)/n} + O(1)$) for a polynomial time.
The basic idea is to consider any time $t'$ in which some previously safe level $\ell$ just became critical.
Using \cref{lem:murw_drift}, we then couple the random variable $m_{\ell}(\bullet)$ with a random walk that is always at least as large as $m_{\ell}(\bullet)$ and is biased by a factor of exactly $2$ (or $\sqrt{n}$, in the case of level $\ell = \ell^* + 1$) towards the safe left side of the critical interval $C_{\ell}$.
Applying \cref{lem:biasedrw_crossprob} to all possible such random walks together with a union bound then implies that, \whp, no level crosses the critical interval and becomes invalid for $\poly(n)$ many steps.

\begin{restatable}{lemma}{restatemaintainvalidity}
\label{lem:maintainvalidity}
Let $h(t) \coloneqq \ceil{m_{\max}(t)/n} + \gamma$ denote the base height at time $t$.
Consider an initial configuration $\bm{x(0)} = \bm{x}$ that is $h(0)$-safe.
Fix the time $T = n^4$ and assume $n$ to be large enough.
If $\beta(t) \leq \hat{\beta} < 1$ for all $t \in \set{0, 1, \dots, T - 1}$, then
\begin{equation}
\Pr{\forall t \in \set{0, 1, \dots, T - 1}\colon \text{$\bm{x(t)}$ is $h(t)$-valid}}
\geq
1 - n^{-1}
.
\end{equation}
\end{restatable}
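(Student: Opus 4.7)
The plan is to bound, by a union bound over the $\log\log n + O(1)$ levels and over time, the probability that any level $\ell$ becomes the first to leave its valid interval $V_\ell = \intco{0, \alpha_\ell}$. Level $0$ is handled separately via the potential bound; the higher levels $\ell \in \set{1, \dots, \ell^* + 1}$ are treated by a random-walk comparison based on \cref{lem:murw_drift,lem:biasedrw_crossprob}. Define the stopping time $\tau^* \coloneqq \inf\set{t \leq T : \bm{x(t)} \text{ is not } h(t)\text{-valid}}$; since $T = n^4$, the goal becomes to show $\Pr{\tau^* < \infty} \leq n^{-1}$.

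For level $0$, observe that $m_{\max}(t) \geq m(t)$ gives $h(t) \geq \ceil{m(t)/n} + \gamma$, so $m_0^{(h(t))}(t)$ is no larger than the number of balls above height $\ceil{m(t)/n} + \gamma$. Applying \cref{prop:linear_potential} with $\mu$ chosen so that $\mu n \leq \alpha_0/2$ (possible since $\alpha_0 = \Theta(n)$) and its tail-exponent parameter set to $3$ yields, at each fixed $t$, a failure probability of at most $n^{-6}$. Fixing the constant $\gamma$ of \cref{def:validetc} to equal the $\gamma(\mu)$ returned by \cref{prop:linear_potential} and union-bounding over $t \in \set{0, \dots, T - 1}$ shows that level $0$ stays $h(t)$-safe throughout the window with probability at least $1 - n^{-2}$.

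For each higher level $\ell \in \set{1, \dots, \ell^* + 1}$, the strategy is to analyze excursions of $m_\ell^{(h(t))}(\cdot)$ into the critical interval $C_\ell$. Since a single step changes $m_{h + \ell}(t)$ by at most one for fixed $h$, and the monotone non-decreasing $h(t)$ only induces additional favorable drops, every excursion enters $C_\ell$ exactly at value $\alpha_\ell/2$. Before $\tau^*$ the configuration is $h$-valid, so \cref{lem:murw_drift} supplies a bias ratio of at least $r = 2$ (respectively $r = \sqrt n$ for $\ell = \ell^* + 1$) against increases of $m_\ell^{(h(t))}$. A standard stochastic-domination coupling majorizes the excursion by an integer walk with step ratio exactly $r$ starting at $\alpha_\ell/2$. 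Applying \cref{lem:biasedrw_crossprob} with $a = 1$ (distance down to safety) and $b = \alpha_\ell/2$ (distance up to invalidity) bounds the per-excursion failure probability by $r^{-\alpha_\ell/2}$. The definitions in \eqref{eqn:lvlboundary:def} are calibrated so that this bound is at most $n^{-6}$ for every such $\ell$: for $\ell = \ell^*$ we have $\alpha_{\ell^*}/2 = 6\log n$ with $r = 2$; for $\ell = \ell^* + 1$ we have $\alpha_{\ell^* + 1}/2 = 12$ with $r = \sqrt n$; and smaller $\ell$ have much larger $\alpha_\ell$. With at most $T = n^4$ possible excursion-starts per level and $\ell^* + 1 = O(\log\log n)$ levels, the total contribution is at most $O(n^{-2} \log\log n)$.

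The main subtlety, and what I expect to be the trickiest bookkeeping, is preserving the random-walk comparison across updates of the base height $h(t)$: whenever $m_{\max}(t)$ crosses a multiple of $n$, the reference frame shifts upward and every $m_\ell^{(h(t))}$ can instantaneously drop by a non-negative amount. Fortunately, these drops move $m_\ell^{(h(t))}$ only \emph{further} from the critical boundary, so the majorizing walk remains a valid upper bound after each such update---it simply does not track the extra favorable decreases. Combining the level-$0$ contribution with the level-$\geq 1$ union bound then yields $\Pr{\tau^* < \infty} \leq n^{-2} + O(n^{-2} \log\log n) \leq n^{-1}$ for large enough $n$, which is exactly the claimed bound.
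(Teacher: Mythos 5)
Your proposal is correct and follows essentially the same approach as the paper: a stopping-time decomposition, level~$0$ handled by \cref{prop:linear_potential}, higher levels handled by coupling the excursions of $m_\ell^{(h(t))}$ (entering the critical interval at exactly $\alpha_\ell/2$) to a $2$-biased (resp.\ $\sqrt{n}$-biased) random walk and applying \cref{lem:biasedrw_crossprob} with $a=1$, $b=\alpha_\ell/2$, then a union bound over $\ell^*+2 = \log\log n + \Theta(1)$ levels and $T = n^4$ possible excursion starts. Your observation about the monotonicity of $h(t)$ producing only favorable jumps is precisely the point the paper makes to justify the drift inequality surviving the shift of base height.
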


The proof of the lemma may be found in \ifthenelse{\isfullversion}{\cref{app:layered}}{the full version}.
It remains to prove \cref{thm:maintainmaxdiscrepancy}, which follows by applying \cref{lem:maintainvalidity} and observing that any valid configuration has the desired maximum load.

\begin{proof}[Proof of \cref{thm:maintainmaxdiscrepancy}]
Define the base height $h(t) \coloneqq \ceil{m_{\max}(t)/n} + \gamma$ at time $t$ as in \cref{lem:maintainvalidity}.
Consider an initial configuration $\bm{x} = \bm{x}(0)$ that is perfectly balanced (i.e., a configuration with maximum load $\ceil{m_{\max}(t)/n}$).
Note that $\bm{x}$ is trivially $h(0)$-safe, since (choosing $\gamma \geq 1$) there are no balls with height $\geq h(0)$.

Then, by \cref{lem:maintainvalidity}, all configurations $\bm{x(t)}$ with $t \in \intco{0, n^4}$ are (\whp) $h(t)$-valid.
Thus, it is sufficient to show that the maximum load in any $h$-valid configuration is $h + \log\log n + O(1)$.
This follows immediately from the definition of the critical thresholds:
Indeed, if $\bm{x}$ is $h$-valid, so is level $\ell^* + 1$.
Thus, there are at most $\alpha_{\ell^* + 1} = O(1)$ many balls on or above level $\ell^* + 1$.
In particular, the maximum load in $\bm{x}$ is
\begin{math}
x_{\max}
\leq
h + \ell^* + 1 + \alpha_{\ell^* + 1}
=
h + \log\log n + O(1)
\end{math}.
This proves the first part of the \lcnamecref{thm:maintainmaxdiscrepancy}.

For the second part, assume that the insertion probability in the interval $[0, n^4]$ is $c$-good for a constant $c$.
That is, for any discrete subinterval $(t_1, t_2] \subseteq [0, n^4]$ of length $t_2 - t_1 \geq c \cdot n$ we have
$(\sum_{t = t_1 + 1}^{t_2}\beta(t))/(t_2 - t_1)
\geq
\frac{1}{2} \cdot (1 + \epsilon)
$
for a constant $\epsilon > 0$.
Consider the random variable $Y = \sum_{t = t_1 + 1}^{t_2} Y_t$ where $Y_t = 1$ with probability $\beta(t)$ and $Y_t = 0$ otherwise.
Note that $Y$ counts the number of balls added in the time interval $\intoc{t_1, t_2}$.
By $c$-goodness we have
\begin{math}
\Exp[Y]
\geq
(1 + \epsilon) \cdot \frac{t_2 - t_1}{2}
\end{math}.
A simple application of Chernoff yields
\begin{equation}%
\label{eqn:maintainmaxdiscrepancy:tailboundaddedballs}
\Pr{Y < (1 + \epsilon/2) \cdot \frac{t_2 - t_1}{2}}
\leq
e^{-\Theta(1) \cdot (t_2 - t_1)}
=
n^{-\omega(1)}
.
\end{equation}
Note that if $Y$ balls are added during the interval $\intoc{t_1, t_2}$, at most $t_2 - t_1 - Y$ balls are deleted.
Thus, using \cref{eqn:maintainmaxdiscrepancy:tailboundaddedballs}, the total change of the system load during the interval $\intoc{t_1, t_2}$ is, with very high probability,
\begin{equation}
m(t_2) - m(t_1)
\geq
Y - (t_2 - t_1 - Y)
=
2Y - (t_2 - t_1)
\geq
\epsilon \cdot \frac{t_2 - t_1}{2}
>
0
.
\end{equation}
Thus, we can take a union bound over the $\poly(n)$ many subintervals $\intoc{t_1, t_2} \subseteq \intcc{0, n^4}$ of length at least $c \cdot n$ to get that, w.h.p., the average load at the end of the subinterval is at least the average load at its beginning.
All other subintervals have length $< c \cdot n$, so the average load can decrease by at most $c$ during any of them.
Together, we get that, w.h.p., for any $t \in \intcc{0, n^4}$ we have
\begin{math}
m(t)
\geq
m_{\max}(t) - c
\end{math}.
Thus, by using the \lcnamecref{thm:maintainmaxdiscrepancy}'s first statement we can, \whp, bound the maximum load at any time $t \in \intcc{0, n^4}$ by
\begin{math}
x_{\max}(t)
\leq
\ceil{m_{\max}(t)/n} + \log\log n + O(1)
\leq
\ceil{m(t)/n} + \log\log n + O(1) + c
=
\ceil{m(t)/n} + \log\log n + O(1).
\end{math}
\end{proof}

\section{Quick Recovery \& Steady State Characterization.}%
\label{sec:coupling}

With the tools from \cref{sec:larstom,sec:layered} at hand, we are now ready to prove our main result (\cref{thm:main3}).
It generalizes the guarantees given by \cref{thm:maintainmaxdiscrepancy} (there only for the first $\poly(n)$ many steps) to an \emph{arbitrary} time $t \in \N_0$.
\begin{theorem}%
\label{thm:main3}
Consider an insertion probability sequence $\intoo[\big]{\beta(t)}_{t \in \N}$ bounded away from $0$ and $1$.
Then, \whp,
\begin{math}
x_{\max}(t)
\leq
 m_{\max}(t)/n + \log\log n + O(1)
\end{math}
for any $t \in \N_0$.
Moreover, if there is a constant $c \in \N$ such that time interval $\intcc{t-n^4, t}$ is $c$-good, then, \whp,
\begin{math}
x_{\max}(t)
\leq
m(t)/n + \log\log n + O(1)
\end{math}.
\end{theorem}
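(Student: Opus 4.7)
The plan is to reduce \cref{thm:main3} to \cref{thm:maintainmaxdiscrepancy} by rolling the process back $n^4$ steps and invoking the path coupling lemma established earlier in this section, which asserts that two copies of our process started in any two configurations of discrepancy $O(\log n)$ with the same total load become identical (under an appropriate coupling) within $o(n^4)$ steps w.h.p. The other two ingredients are \cref{thm:potential_bound}, which hands us the $O(\log n)$ discrepancy at the rollback time, and \cref{thm:maintainmaxdiscrepancy} itself, which controls the maximum load over an $n^4$-step window starting from a perfectly balanced configuration.

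If $t \leq n^4$, the claim is immediate from \cref{thm:maintainmaxdiscrepancy} applied to the empty initial configuration (which is trivially perfectly balanced). Assume $t > n^4$ and set $t_0 \coloneqq t - n^4$. Writing $\bm{x}^A(\cdot)$ for the actual process and applying \cref{thm:potential_bound} on $\intcc{0, t_0}$ with initial absolute discrepancy $0$, we obtain $\adisc^A(t_0) = O(\log n)$ with probability at least $1 - n^{-3}$. Condition on this event and introduce an auxiliary process $\bm{x}^B$ that starts at time $t_0$ in a perfectly balanced configuration $\bm{y}$ with $m(\bm{y}) = m^A(t_0)$, driven by the same insertion probabilities $\beta(\cdot)$ as $\bm{x}^A$ thereafter. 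Both $\bm{x}^A(t_0)$ and $\bm{x}^B(t_0)$ share the same total load and have discrepancy $O(\log n)$, so the coupling lemma supplies a coupling under which the two processes merge at some time $t_0 + T_{\mathrm{mix}}$ with $T_{\mathrm{mix}} = o(n^4)$, whence $\bm{x}^A(t) = \bm{x}^B(t)$ w.h.p.

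Since $\bm{x}^B$ starts perfectly balanced at time $t_0$ and runs for exactly $n^4$ steps, \cref{thm:maintainmaxdiscrepancy} gives, w.h.p., $x^B_{\max}(t) \leq m^B_{\max}(t)/n + \log\log n + O(1)$. Because the two processes share the same insertion/deletion driving sequence on $\intcc{t_0, t}$ and begin with identical total loads, $m^A(t'') = m^B(t'')$ for every $t'' \in \intcc{t_0, t}$; consequently $m^B_{\max}(t) = \max_{t'' \in \intcc{t_0, t}} m^A(t'') \leq m^A_{\max}(t)$, and the bound transfers to $x^A_{\max}(t) = x^B_{\max}(t)$. For the second statement, if $\intcc{t - n^4, t}$ is $c$-good, the stronger conclusion of \cref{thm:maintainmaxdiscrepancy} yields $x^B_{\max}(t) \leq m^B(t)/n + \log\log n + O(1)$, which transfers to $\bm{x}^A$ via $m^A(t) = m^B(t)$. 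A union bound over the three failure events (large discrepancy at $t_0$, coupling failure, and failure of \cref{thm:maintainmaxdiscrepancy} for $\bm{x}^B$) closes the argument.

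The main obstacle is not the reduction itself, which is essentially bookkeeping, but the path coupling lemma that it rests on: one must exhibit a coupling that merges two processes (with logarithmic discrepancy and equal total load at time $t_0$) strictly within $n^4$ steps with probability at least $1 - 1/\poly(n)$, despite the total load fluctuating throughout the coupling window. This is the content of the coupling analysis of this section, extending the recovery analysis of~\cite{DBLP:journals/mst/Czumaj00} to a setting with time-varying deletion rates and unbounded load, and its high-probability quality is precisely what dictates the strength of the guarantee in \cref{thm:main3}.
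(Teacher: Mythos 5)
Your proof follows the same route as the paper's: reduce to the case $t > n^4$, use \cref{thm:potential_bound} to obtain logarithmic discrepancy at $t_0 = t - n^4$, couple with a perfectly balanced auxiliary process via \cref{lem:couplingarbitrary}, apply \cref{thm:maintainmaxdiscrepancy} to the auxiliary process, and transfer the conclusion through the coupling. The only difference is that you spell out the bookkeeping more explicitly — in particular the observation that the identity coupling of insert/delete decisions forces $m^A(t'') = m^B(t'')$ on $\intcc{t_0, t}$, so $m^B_{\max}(t) \leq m^A_{\max}(t)$, which the paper leaves implicit — and you state the final union bound explicitly; this is sound and arguably cleaner, but it is the same argument.
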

We conjecture that the requirement for a \emph{constant} $\epsilon > 0$ in \cref{def:good} is an artifact of our analysis technique and can be reduced to $o(1)$ or even eliminated.
The same holds for our assumption that the insertion probability must be bounded away from $1$ (note that for $\beta = 1$ the result follows from \cite{BCSV06}).
We note that the first of the two results in the theorem also holds when first $m\ge n$ balls are inserted and the process then alternates between insertions and deletions.

\paragraph{Analysis Overview.}
The proof idea of \cref{thm:main3} is as follows:
For any time $t \in \N_0$, \cref{thm:potential_bound} implies (\whp) a logarithmic absolute discrepancy at time $t' \coloneqq t - n^4$.
We then use a path coupling argument similar to \cite{DBLP:journals/mst/Czumaj00} to show that our process causes \emph{any} load situation with logarithmic absolute discrepancy to recover quickly to a \enquote{typical} state.
More exactly, any two (different) such load situations at time $t'$ can be coupled such that (\whp) they become identical at time $t$.
At this point however, we don't know how such a \enquote{typical} looks like (and whether it has small maximal load).
This is where \cref{thm:maintainmaxdiscrepancy} comes into play.
It tells us that if we start in a perfectly balanced load situation at time $t'$, our process maintains (\whp) a double-logarithmic maximal load for \emph{at least} $n^4$ steps.
A simple union bound over these two high probability bounds then implies that the \enquote{typical} load situation at time $t$ must also have a double-logarithmic maximal load.

\paragraph{Analysis.}
We now formalize the above idea.
To this end, we first introduce a measure for the similarity of two given load situations.
\begin{definition}%
\label{def:transformationdistance}
Consider two load vectors $\bm{x}, \bm{y}$ over $n \in \N$ bins with identical total loads $\norm{\bm{x}}_1 = \norm{\bm{y}}_1$.
The \emph{transformation distance} between $\bm{x}$ and $\bm{y}$ is
\begin{math}
\Delta(\bm{x}, \bm{y})
\coloneqq
\norm{\bm{x} - \bm{y}}_1 / 2
\end{math}.
\end{definition}
Note that the transformation distance obeys the identity
\begin{math}
\Delta(\bm{x}, \bm{y})
=
\sum_{i=1}^{n} \max\set{0, x_i - y_i}
\end{math}
and corresponds to the minimal number of ball movements to transform $\bm{x}$ into $\bm{y}$ (and vice versa).
Moreover, any load vector $\bm{x}$ can be transformed into the perfectly balanced load vector by moving the at most $n \cdot \disc(\bm{x}) \leq n \cdot \adisc(\bm{x})$ balls above average.
As an immediate consequence, we get the following observation.
\begin{obs}%
\label{obs:distfromdiscrepancy}
For two load vectors $\bm{x}, \bm{y}$ over $n \in \N$ bins with identical total loads $\norm{\bm{x}}_1 = \norm{\bm{y}}_1$, we have
\begin{math}
\Delta(\bm{x}, \bm{y})
\leq
2n \cdot \max\set{\adisc(\bm{x}), \adisc(\bm{y})}
\end{math}.
\end{obs}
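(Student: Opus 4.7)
The plan is to prove the bound by a straightforward triangle-inequality argument with a perfectly balanced load vector acting as a bridge between $\bm{x}$ and $\bm{y}$. First I would fix $\bm{p}$ to be the (uniquely determined, since we take vectors to be sorted in non-increasing order) perfectly balanced load vector on $n$ bins with total load $m \coloneqq \norm{\bm{x}}_1 = \norm{\bm{y}}_1$; its existence is immediate, by assigning the load $\ceil{m/n}$ to $m - n\floor{m/n}$ bins and $\floor{m/n}$ to the rest. Because $\Delta(\cdot,\cdot) = \tfrac12 \norm{\cdot - \cdot}_1$ is, up to the factor $\tfrac12$, the $\ell_1$-distance, it inherits the triangle inequality, so
\begin{equation}
\Delta(\bm{x}, \bm{y})
\leq
\Delta(\bm{x}, \bm{p}) + \Delta(\bm{p}, \bm{y}).
\end{equation}

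Next I would bound each of the two terms separately, using the preparation the authors explicitly gave in the paragraph immediately preceding the observation. That paragraph records both the identity $\Delta(\bm{u}, \bm{v}) = \sum_i \max\{0, u_i - v_i\}$ (valid whenever $\norm{\bm{u}}_1 = \norm{\bm{v}}_1$, since then the positive and negative parts of $\bm{u} - \bm{v}$ have the same $\ell_1$-mass) and the fact that transforming any load vector into the perfectly balanced one requires moving at most $n \cdot \disc(\bm{x}) \leq n \cdot \adisc(\bm{x})$ balls, because in each of the at most $n$ bins the excess over the average is at most $\disc(\bm{x})$. Applying these two observations to the pair $(\bm{x}, \bm{p})$ gives $\Delta(\bm{x}, \bm{p}) \leq n \cdot \adisc(\bm{x})$, and the symmetric argument applied to $(\bm{p}, \bm{y})$ yields $\Delta(\bm{p}, \bm{y}) \leq n \cdot \adisc(\bm{y})$.

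Combining these two estimates with the triangle inequality and bounding $\adisc(\bm{x}) + \adisc(\bm{y}) \leq 2 \cdot \max\set{\adisc(\bm{x}), \adisc(\bm{y})}$ delivers the claim. I do not anticipate any real obstacle; the argument is deterministic, short, and essentially dictated by the remarks already stated by the authors. The only minor sanity check is that $\bm{p}$ actually shares the total load $m$ with $\bm{x}$ and $\bm{y}$, which is immediate from its construction, and that the identity for $\Delta$ requires equal total loads, which is exactly the standing hypothesis $\norm{\bm{x}}_1 = \norm{\bm{y}}_1$ of the observation.
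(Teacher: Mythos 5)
Your argument is exactly the one the paper intends: the observation is stated as an ``immediate consequence'' of the two remarks directly preceding it, and your proposal simply spells out the triangle-inequality step through the perfectly balanced vector $\bm{p}$, applies the ``at most $n\cdot\adisc$ moves to reach $\bm{p}$'' remark to each leg, and uses $a+b\leq 2\max\{a,b\}$. One small point worth keeping in mind (but not a gap): because $\bm{p}$ sits at integer heights $\lfloor m/n\rfloor,\lceil m/n\rceil$ rather than at the fractional average $m/n$, the bound $\Delta(\bm{x},\bm{p})\leq n\cdot\adisc(\bm{x})$ actually carries an additive slack of up to $n/4$; the factor $2n$ in the stated inequality absorbs this (and when $\max\{\adisc(\bm{x}),\adisc(\bm{y})\}<1/2$ one checks directly that $\bm{x}=\bm{y}=\bm{p}$, so $\Delta=0$), and one could avoid the issue entirely by routing through the non-integer vector $(m/n,\dots,m/n)$, which even yields the tighter constant $n$.
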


The following \lcnamecref{lem:couplingneighbors} leverages a result from \cite{DBLP:journals/mst/Czumaj00} to show that one step of our process tends to reduce transformation distance between two given load vectors.
In~\cite{DBLP:journals/mst/Czumaj00}, the author considers basically the same setting as ours, but with guaranteed \emph{alternating} insertions/deletions.
\Cref{lem:couplingneighbors} basically shows that their proof transfers directly to our more general setting with insertion probability sequences.

\begin{lemma}%
\label{lem:couplingneighbors}
Consider two load vectors $\bm{x}, \bm{y}$ over $n \in \N$ bins with identical total loads $\norm{\bm{x}}_1 = \norm{\bm{y}}_1$.
Assume $\Delta(\bm{x}, \bm{y}) = 1$ and let $\bm{x'}$ and $\bm{y'}$ denote the load vectors after applying one step of our process with insertion probability $\beta \in \intcc{0, 1}$ to $\bm{x}$ and $\bm{y}$, respectively.
There is a coupling of $\bm{x'}$ and $\bm{y'}$ such that
\begin{enumerate}
\item\label{lem:couplingneighbors:a}
    \begin{math}
    \Exp[\Delta(\bm{x'}, \bm{y'})]
    \leq
    \Delta(\bm{x}, \bm{y})
    =
    1
    \end{math}
    and

\item\label{lem:couplingneighbors:b}
    \begin{math}
    \Pr{\Delta(\bm{x'}, \bm{y'}) \neq \Delta(\bm{x}, \bm{y})}
    \geq
    (1 - \beta) / n
    \end{math}.
\end{enumerate}
\end{lemma}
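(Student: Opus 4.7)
The plan is to construct an explicit coupling of the two one-step evolutions using shared randomness and to verify items \ref{lem:couplingneighbors:a} and \ref{lem:couplingneighbors:b} by case analysis, following the strategy of Czumaj~\cite{DBLP:journals/mst/Czumaj00}. Since $\Delta(\bm{x}, \bm{y}) = 1$ and $\norm{\bm{x}}_1 = \norm{\bm{y}}_1$, there exist distinct bins $i, j$ with $x_i = y_i + 1$, $x_j = y_j - 1$, and $x_k = y_k$ on every other bin. I couple the global insert-vs-delete coin identically, so both processes insert with probability $\beta$ and both delete with probability $1 - \beta$, and then handle the two regimes separately.

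On an insertion step I sample a single unordered pair $\set{u, v}$ uniformly at random and use the same tie-breaking coin in both processes. The case analysis runs on the intersection $\set{u, v} \cap \set{i, j}$: when this intersection is empty, both processes see identical loads at $u, v$ and place the ball into the same bin, so $\Delta$ is unchanged. In the remaining subcases (one or both of $u, v$ in $\set{i, j}$), a short combinatorial check, distinguishing the relative order of the involved loads and whether the \Greedy{2} comparison is tied in one configuration but not the other, shows that $\Delta(\bm{x'}, \bm{y'}) \in \set{0, 1}$; in particular $\Delta$ never increases.

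On a deletion step I construct a transport coupling of the uniformly chosen non-empty bins. Let $S_x = \set{k : x_k > 0}$ and $S_y = \set{k : y_k > 0}$; both sets have size at most $n$, and $i \in S_x$, $j \in S_y$. I place mass $\min(1/|S_x|, 1/|S_y|) \geq 1/n$ at the pair $(i, j)$, which corresponds to the event $E$ that $\bm{x}$ deletes from $i$ and $\bm{y}$ from $j$. A direct computation confirms $\bm{x'} = \bm{y'}$ on $E$, so $\Delta$ drops from $1$ to $0$. The remaining probability mass is matched identically on the common non-empty bins outside $\set{i, j}$; a brief accounting argument shows that the only coupled event on which $\Delta$ could grow (namely, $\bm{x}$ deleting from $j$ while $\bm{y}$ deletes from $i$, carrying mass at most $1/n$) is exactly compensated by the decrease on $E$, while all matched $(k, k)$ couplings preserve $\Delta$. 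Together these give $\Exp[\Delta(\bm{x'}, \bm{y'})] \leq 1$, hence \ref{lem:couplingneighbors:a}, while the contribution of $E$ alone yields $\Pr{\Delta(\bm{x'}, \bm{y'}) \neq \Delta(\bm{x}, \bm{y})} \geq (1 - \beta)/n$, hence \ref{lem:couplingneighbors:b}.

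The main obstacle I anticipate lies in the insertion case analysis when the \Greedy{2} comparison at $\set{u, v}$ is tied in one configuration but not in the other: the shared tie-breaking coupling has to be chosen so that, across all such subcases, the $\Delta$-changes never sum to an increase. A secondary subtlety is the transport construction on the deletion side, which must handle the mildly asymmetric case $|S_x| \neq |S_y|$ (caused by $y_i = 0$ or $x_j = 0$) while simultaneously guaranteeing both the $1/n$ lower bound on $\Pr{E}$ and the cancellation of the potentially expansive mass at $(j, i)$.
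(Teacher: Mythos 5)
Your proposal is correct in substance and takes essentially the same route as the paper, which simply cites the two couplings from Czumaj's work \cite{DBLP:journals/mst/Czumaj00} (the ``first coupling'' for deletions, the ``second coupling'' for insertions) together with his Claims~6.1, 6.2 and Lemma~5.1, rather than re-constructing them. You are reconstructing those couplings from scratch, so your version is more self-contained; the calculations you sketch do go through once filled in. Two remarks on the sketch. First, the worry you flag about the shared tie-break in the insertion case is unfounded: with the identity coupling on the pair $\set{u,v}$ and on the tie-break coin, a full case analysis over $\set{u,v}\cap\set{i,j}$ shows that the named-bin $\ell_1$ distance never exceeds $2$ after the step (and sometimes drops to $0$), regardless of how ties are broken, so $\Delta(\bm{x'},\bm{y'})\le 1$ always; no delicate choice of tie-break rule is needed. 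Second, in the asymmetric deletion cases (when exactly one of $y_i=0$ or $x_j=0$ holds, so $\abs{S_x}$ and $\abs{S_y}$ differ by one), the leftover mass after placing $\min(1/\abs{S_x},1/\abs{S_y})$ on $(i,j)$ and matching common bins $(k,k)$ cannot all sit on $(j,i)$; it has to be spread as a small residual (of total mass $1/(\abs{S_x}\cdot\abs{S_y})$ per bin) over pairs of the form $(j,k)$ or $(k,i)$. Each such pair can raise $\Delta$ to $2$, but a short computation gives $\Exp[\Delta(\bm{x'},\bm{y'})]=1-\tfrac{1}{\abs{S_x}\cdot\abs{S_y}}<1$ and the coalescence mass on $(i,j)$ alone already gives $\Pr{\Delta(\bm{x'},\bm{y'})\ne 1}\ge 1/\max(\abs{S_x},\abs{S_y})\ge 1/n$, so items \ref{lem:couplingneighbors:a} and \ref{lem:couplingneighbors:b} hold. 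Finally, since both $\bm{x},\bm{y}$ are sorted, the identity pairing of named bins realizes $\Delta(\bm{x},\bm{y})$, and the named-bin $\ell_1$ distance after the step upper-bounds the re-sorted transformation distance, so the named-bin coupling you build does project correctly onto the statement about sorted load vectors.
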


\begin{proof}
For the random decision whether to insert or delete a ball in $\bm{x}$ and $\bm{y}$, we use the identity coupling that either inserts a ball in both processes (with probability $\beta(t)$) or deletes a ball in both processes (with probability $1 - \beta(t)$).
It remains to couple the random bin choices in a deletion step and the two random bin choices of \Greedy{2} in an insertion step.

Assume the considered step is a deletion step.
Then we use the first coupling described in~\cite[Section~6]{DBLP:journals/mst/Czumaj00} (who consider \emph{alternating} insertions/deletions instead of our more flexible insertion probability sequences).
In this case, \cite[Claims~6.1 and~6.2]{DBLP:journals/mst/Czumaj00} together yield
\begin{math}
\Exp[\Delta(\bm{x'}, \bm{y'}) \mid \text{Deletion Step}]
\leq
\Delta(\bm{x}, \bm{y})
=
1
\end{math}
and
\begin{math}
\Pr{\Delta(\bm{x'}, \bm{y'}) \neq \Delta(\bm{x}, \bm{y}) | \text{Deletion Step}}
\geq
1 / n
\end{math}.
Note that this already yields the \lcnamecref{lem:couplingneighbors}'s second statement, since a deletion step occurs with probability $1 - \beta$.

Similarly, for an insertion step, we use the second coupling described in~\cite[Section~6]{DBLP:journals/mst/Czumaj00}, where the exact same insertion rule is used (again, in a setting with alternating insertions/deletions).
In this case, the proof of \cite[Lemma~5.1]{DBLP:journals/mst/Czumaj00} implies
\begin{math}
\Exp[\Delta(\bm{x'}, \bm{y'}) \mid \text{Insertion Step}]
\leq
\Delta(\bm{x}, \bm{y})
=
1
\end{math}.
Together, we get the desired result.
\end{proof}

The next \lcnamecref{lem:reflrwcrossingtime} characterizes the expected and high probability crossing time of a simple, $\alpha$-lazy reflecting random walk on $\set{0, 1, \dots, D}$ (i.e., a random walk that has the same probability to move left/right and that stays put with probability $\alpha$ or if it tries to move out of bounds). We use the lemma for our coupling result in \cref{lem:couplingarbitrary}.
\begin{lemma}%
\label{lem:reflrwcrossingtime}
Let $D \in \N_0$ and $\alpha \in \intco{0, 1}$.
Consider a simple, $\alpha$-lazy reflecting random walk $\intoo{W_t}_{t \in \N_0}$ on $\set{0, 1, \dots, D}$ started in $W_0 = D$.
Let $T \coloneqq \min\set{t \in \N | W_t = 0}$.
Then
\begin{math}
\Exp[T]
=
D \cdot (D + 1) / (1 - \alpha)
\end{math}.
Moreover, for any $n \in \N$ and $a > 0$ we have
\begin{math}
\Pr{T \geq a \cdot \log(n) \cdot 2\Exp[T]}
\leq
1/n^{-a}
\end{math}.
\end{lemma}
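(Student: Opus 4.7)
The plan is to derive the expectation by a first-step analysis on hitting times and then bootstrap that formula into the tail bound via iterated Markov's inequality.

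For the expectation, let $h(i) \coloneqq \Exp[T \mid W_0 = i]$ for $i \in \{0, 1, \ldots, D\}$, so $\Exp[T] = h(D)$. The one-step equations are $h(0) = 0$ and, for interior $i$,
\begin{equation}
h(i) = 1 + \alpha \cdot h(i) + \tfrac{1-\alpha}{2}\bigl(h(i-1) + h(i+1)\bigr),
\end{equation}
while at the reflecting boundary the laziness and the blocked would-be right-step merge into a self-loop of probability $(1+\alpha)/2$, yielding $h(D) = 1 + \tfrac{1+\alpha}{2} h(D) + \tfrac{1-\alpha}{2} h(D-1)$. Introducing the differences $g(i) \coloneqq h(i+1) - h(i)$, the interior equations collapse to $g(i-1) - g(i) = 2/(1-\alpha)$ and the boundary equation to $g(D-1) = 2/(1-\alpha)$, so telescoping yields $g(i) = 2(D-i)/(1-\alpha)$ and hence $h(D) = \sum_{i=0}^{D-1} g(i) = D(D+1)/(1-\alpha)$, matching the claim. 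A useful byproduct, needed below, is the monotonicity $h(i) \leq h(D) = \Exp[T]$ for every $i$.

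For the tail, I partition the time axis into $k \coloneqq \lceil a \log(n) \rceil$ consecutive blocks, each of length $2\Exp[T]$. By the strong Markov property applied at the start of each block and the monotonicity just noted, conditional on not yet having hit $0$ the walk enters the block at some state whose remaining hitting-time expectation is at most $\Exp[T]$; Markov's inequality then bounds the probability of failing to hit $0$ during that block by $1/2$. Iterating over the $k$ blocks gives $\Pr{T \geq k \cdot 2\Exp[T]} \leq 2^{-k} \leq n^{-a}$, with a slight calibration of the block-length constant to absorb the base of the logarithm if $\log$ is read as natural. The main care points are correctly combining the laziness with the reflection at $D$ in the boundary equation, and cleanly invoking the strong Markov property between blocks so that the $1/2$ estimate may be iterated; both steps are standard, and I do not expect any genuine obstacle.
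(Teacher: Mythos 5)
Your proof is correct and takes essentially the same route as the paper: a one-step recurrence solved by telescoping differences for the expectation, followed by iterated Markov's inequality over blocks of length $2\Exp[T]$ for the tail bound. You are in fact slightly more careful than the paper in one spot, explicitly invoking the monotonicity $h(i) \leq h(D)$ and the strong Markov property to justify re-applying the $1/2$ failure bound at the start of each block, whereas the paper leaves that step implicit.
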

\begin{proof}
Let $E_i$ denote the expected time for the random walk to reach $0$ if started at position $i \in \set{0, 1, \dots, D}$.
Clearly, $E_0 = 0$ and $E_D = \Exp[T]$.
Note that we have the recurrence relation $E_i = 1 + E_i \cdot \alpha +  (E_{i-1} + E_{i+1}) \cdot (1 - \alpha)/2$ for $i \in \set {1, 2, \dots, D-1}$ and $E_D = 1 + E_{D-1} \cdot (1 - \alpha)/2 + E_D \cdot (1 + \alpha)/2$.
From this one can first deduce $E_i = 2(D-i + 1)/(1 - \alpha) + E_{i-1}$ for $i \in \set{1, 2, \dots, D}$ and, then, $\Exp[T] = E_D = D \cdot (D + 1) / (1 - \alpha)$.

For the high probability bound, we observe that Markov's inequality implies $\Pr{T \geq 2\Exp[T]} \leq 1/2$.
Thus, after at most $a \cdot \log(n)$ repetitions of such phases of length $2\Exp[T]$, the random walk has reached $0$ with probability at least ${(1/2)}^{a \cdot \log(n)} = n^{-a}$.
\end{proof}

The following lemma shows  that it is possible to couple two instances of our process with a logarithmic absolute discrepancy; they will be in the same distribution after polynomial time.

\begin{lemma}%
\label{lem:couplingarbitrary}
Consider two instances of our random processes $\intoo[\big]{\bm{x(t)}}_{t \in \N_0}$ and $\intoo[\big]{\bm{y(t)}}_{t \in \N_0}$ with
\begin{math}
\norm{\bm{x(0)}}_1
=
\norm{\bm{y(0)}}_1
\end{math}
and
\begin{math}
\adisc(\bm{x(0)}), \adisc(\bm{y(0)})
\leq
\frac{3}{\alpha} \cdot \log n
\end{math}.
Assume both processes use the same insertion probability sequence $\intoo[\big]{\beta(t)}_{t \in \N_0}$ bounded away from $0$ and $1$.
Then there is a coupling of $\intoo[\big]{\bm{x(t)}}_{t \in \N_0}$ and $\intoo[\big]{\bm{y(t)}}_{t \in \N_0}$ for whose coupling time
\begin{math}
\tau
\coloneqq
\min\Set{ t \in \N_0 | \Delta\intoo[\big]{\bm{x(t)}, \bm{y(t)}} = 0  }
\end{math}
we have, \whp, $\tau = O\intoo[\big]{n^3 \cdot {(\log n)}^3}$.
\end{lemma}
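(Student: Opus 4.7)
The plan is to combine standard path coupling with a hitting-time bound for a lazy reflecting random walk. First, I would control the initial transformation distance: since $\norm{\bm{x(0)}}_1 = \norm{\bm{y(0)}}_1$ and both configurations have absolute discrepancy at most $(3/\alpha) \log n$, \cref{obs:distfromdiscrepancy} gives
\begin{equation*}
\Delta(\bm{x(0)}, \bm{y(0)})
\leq
2n \cdot \frac{3}{\alpha} \log n
=
\frac{6 n \log n}{\alpha}
=:
D
=
O(n \log n).
\end{equation*}

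Next, I would lift the pairwise coupling of \cref{lem:couplingneighbors} to arbitrary configurations using path coupling. For any two configurations $\bm{x}, \bm{y}$ with the same total load, fix a path $\bm{x} = \bm{z_0}, \bm{z_1}, \dots, \bm{z_k} = \bm{y}$ with $k = \Delta(\bm{x}, \bm{y})$ and $\Delta(\bm{z_i}, \bm{z_{i+1}}) = 1$ for every $i$, and compose the pairwise couplings along consecutive edges. Coupling the shared \enquote{insertion vs.\ deletion} bit by the identity coupling keeps $\norm{\bm{x(t)}}_1 = \norm{\bm{y(t)}}_1$ for all $t$, so the path-coupling framework applies at every step. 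By the triangle inequality and \cref{lem:couplingneighbors}(a),
\begin{equation*}
\Exp\bigl[\Delta(\bm{x(t+1)}, \bm{y(t+1)}) \mid \bm{x(t)}, \bm{y(t)}\bigr]
\leq
\sum_{i=0}^{k-1} \Exp[\Delta(\bm{z_i'}, \bm{z_{i+1}'})]
\leq
k
=
\Delta(\bm{x(t)}, \bm{y(t)}),
\end{equation*}
so $\intoo{\Delta(\bm{x(t)}, \bm{y(t)})}_{t \in \N_0}$ is a nonnegative supermartingale with unit increments. Additionally, by focusing the randomness on one distinguished edge of the path, \cref{lem:couplingneighbors}(b) provides a probability of at least $(1 - \hat\beta)/n$ that this edge \emph{strictly} decreases in distance on a given step.

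With these properties in place, $\Delta(\bm{x(t)}, \bm{y(t)})$ is stochastically dominated by a simple $\alpha$-lazy reflecting random walk on $\set{0, 1, \dots, D}$ started at $D$ with laziness parameter $\alpha = 1 - (1 - \hat\beta)/n$. Plugging into \cref{lem:reflrwcrossingtime} gives an expected crossing time $\Exp[\tau] = D(D+1)/(1 - \alpha) = O(n^3 \log^2 n)$ and, by its high-probability tail bound applied with a sufficiently large constant multiplied by $\log n$, we obtain $\tau = O\intoo[\big]{n^3 \cdot (\log n)^3}$ with probability $1 - n^{-\Omega(1)}$, matching the claim.

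The main obstacle will be the construction in the previous paragraph: arranging the composed coupling so that the global distance process actually behaves like a simple $\pm 1$ lazy walk, rather than an unwieldy multi-step process, and in particular ensuring that only one edge of the path changes at a time so that increments stay unit-bounded while the $(1-\hat\beta)/n$ decrease probability is preserved. I expect to import almost verbatim the bookkeeping from~\cite{DBLP:journals/mst/Czumaj00}, verifying along the way that the arguments there, which were phrased for strictly alternating insertion/deletion steps, go through unchanged for our insertion probability sequence once we synchronize the step-type bit via the identity coupling.
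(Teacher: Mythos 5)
Your plan mirrors the paper's structure closely — path coupling via \cref{lem:couplingneighbors}, then a reflecting-random-walk hitting-time bound via \cref{lem:reflrwcrossingtime} — but there is a genuine gap in the step where you claim stochastic domination. A reflecting random walk on $\set{0,1,\dots,D}$ is bounded above by $D$ by construction, whereas $\Delta(\bm{x(t)},\bm{y(t)})$ is \emph{not} a priori bounded: your $D = 6n\log n/\alpha$ only controls $\Delta(0)$. During the coupling the absolute discrepancies of both processes can drift upward, and with them $\Delta(t)$; nothing in \cref{lem:couplingneighbors} prevents $\Delta(t)$ from exceeding $D$. Once that happens the reflecting walk (stuck at its ceiling $D$) no longer majorizes $\Delta(t)$, so you cannot conclude that $\Delta$ hits $0$ when the walk does.

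The paper closes this gap as follows. It chooses a slightly larger $D = \frac{10}{\alpha} n\log n$, introduces the explicit stopping time $T_1 \coloneqq \min\set{t : \Delta(t) > D}$, and observes that the domination by the reflecting walk is only needed on $\set{0,\dots,T_1-1}$. It then invokes \cref{thm:potential_bound} with $a=5$, together with a union bound over the first $n^4$ steps, to show that both processes maintain absolute discrepancy at most $\adisc(0) + \frac{7}{\alpha}\log n \leq \frac{10}{\alpha}\log n$ throughout, so by \cref{obs:distfromdiscrepancy} the distance stays $\leq D$ and hence $T_1 > n^4$. Since \cref{lem:reflrwcrossingtime} gives $T_2 = O(n^3(\log n)^3) \ll n^4$ w.h.p., a union bound yields $T_1 > T_2$ and $\tau \leq T_2$. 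You need to add this upper control on $\Delta(t)$ — via the potential-function discrepancy bound applied over a polynomially long window — before the reflecting-walk domination is legitimate; without it, the argument does not go through.
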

\begin{proof}
Consider any time $t \in \N_0$ and define
\begin{math}
\Delta(t)
\coloneqq
\Delta\intoo[\big]{\bm{x(t)}, \bm{y(t)}}
\end{math}.
There is a sequence $\bm{x(t)} = \bm{x^{(0)}(t)}, \bm{x^{(1)}(t)}, \dots, \bm{x^{(\Delta(t))}(t)} = \bm{y(t)}$ of load vectors such that for all indices $i \in \set{1, 2, \dots, \Delta(t)}$ we have
\begin{math}
\Delta^{(i)}(t)
\coloneqq
\Delta\intoo*{\bm{x^{(i-1)}(t)}, \bm{x^{(i)}(t)}}
=
1
\end{math}.

By \cref{lem:couplingneighbors}\ref{lem:couplingneighbors:a}, there is a coupling of each pair $\intoo*{\bm{x^{(i-1)}(t+1)}, \bm{x^{(i)}(t+1)}}$ such that $\Exp\intcc[\big]{\Delta^{(i)}(t+1)} \leq 1$.
This implies a coupling of $\bm{x(t+1)}$ and $\bm{y(t+1)}$ with
\begin{equation}%
\label{eqn:delta:expdecrease}
\Exp\intcc[\big]{\Delta(t+1)}
\leq
\Exp\intcc*{\sum_{i=1}^{\Delta(t)} \Delta^{(i)}(t+1)}
=
\sum_{i=1}^{\Delta(t)} \Exp\intcc*{\Delta^{(i)}(t+1)}
\leq
\Delta(t)
.
\end{equation}
Similarly, \cref{lem:couplingneighbors}\ref{lem:couplingneighbors:b} implies
\begin{equation}%
\label{eqn:delta:likelychange}
\Pr{\Delta(t+1) \neq \Delta(t) \;|\; \Delta(t) > 0}
\geq
\frac{1 - \beta(t)}{n}
.
\end{equation}

Consider the random process
\begin{math}
\intoo[\big]{\Delta(t)}_{t \in \N_0}
\end{math}.
Let
\begin{math}
D
\coloneqq
n \cdot \frac{10}{\alpha} \cdot \log n
=
O(n \cdot \log n)
\end{math}
and define the stopping time
\begin{equation}
T_1
\coloneqq
\min\set{ t \in \N_0 | \Delta(t) > D }
.
\end{equation}
Let $\beta \coloneqq \sup\set{\beta(t) | t \in \N_0} < 1$ and $\alpha \coloneqq 1 - (1 - \beta) / n$.
Because of \cref{eqn:delta:expdecrease,eqn:delta:likelychange}, during the time steps $t \in \set{0, 1, \dots, T_1-1}$ the distance $\Delta(t)$ is majorized by the position $W(t)$ of a simple, $\alpha$-lazy reflecting random walk
\begin{math}
\intoo[\big]{W(t)}_{t \in \N_0}
\end{math}
on $\set{0, 1, \dots, D}$ starting in $W(0) = D$.
Define the stopping time
\begin{equation}
T_2
\coloneqq
\min\set{ t \in \N_0 | W(t) = 0 }
.
\end{equation}
Note that if $T_1 > T_2$, the majorization implies
\begin{math}
\Delta(T_2)
\leq
W(T_2)
=
0
\end{math}
and, thus, $\tau \leq T_2$.

By applying \cref{thm:potential_bound} with $a = 5$ to the first $n^4$ time steps and using a union bound, \whp~both $\intoo[\big]{\bm{x(t)}}_{t \in \N_0}$ and $\intoo[\big]{\bm{y(t)}}_{t \in \N_0}$ maintain an absolute discrepancy of at most $\adisc(0) + \frac{7}{\alpha} \cdot \log n = \frac{10}{\alpha} \cdot \log n$ during the first $n^4$ steps.
By \cref{obs:distfromdiscrepancy}, this implies a transformation distance of at most $n \cdot \frac{10}{\alpha} \cdot \log n = D$ and, thus, $T_1 > n^4$.
By \cref{lem:reflrwcrossingtime}, \whp, $T_2 = O\intoo[\big]{n^3 \cdot {(\log n)}^3}$.
Thus, by a union bound we have, \whp, $T_1 \geq T_2$ and $T_2 = O\intoo[\big]{n^3 \cdot {(\log n)}^3}$.
As argued above, together these imply
\begin{math}
\tau
\leq
T_2
=
O\intoo[\big]{n^3 \cdot {(\log n)}^3}
\end{math}.
\end{proof}

With the coupling at hand we are ready to prove our main result.

\begin{proof}[Proof of \cref{thm:main3}]
Note that for $t \leq n^4$, the statement follows immediately from $\cref{thm:maintainmaxdiscrepancy}$.
For larger $t$, \cref{thm:potential_bound} gives us, \whp, $x_{\max}(t-n^4) - x_{\min}(t-n^4) \leq 2c \cdot \log n$.
By this and by shifting each time step forward by $n^4$ steps, it is sufficient to prove the \lcnamecref{thm:main3}'s statement for $t = n^4$ starting from an initial load vector $\bm{x(0)}$ with $x_{\max}(0) - x_{\min}(0) \leq 2c \cdot \log n$.

Define an initial load vector $\bm{y(0)}$ with total load $\norm{\bm{y(0)}}_1 = m(0)$ that is perfectly balanced.
By \cref{lem:couplingarbitrary} there is a coupling of $\intoo[\big]{\bm{x(t)}}_{t \in \N_0}$ and $\intoo[\big]{\bm{y(t)}}_{t \in \N_0}$ such that, \whp, $x(t) = y(t)$ for all $t = \tilde{\Omega}(n^3)$ (so, in particular, for $t = n^4$).
Note that \cref{thm:maintainmaxdiscrepancy} applies to $\intoo[\big]{\bm{y(t)}}_{t \in \N_0}$.
This gives us the corresponding high probability guarantees from \cref{thm:maintainmaxdiscrepancy} for $\bm{y(n^4)}$ which, by leveraging the coupling, hold \whp~for $\bm{x(n^4)}$, finishing the proof.
\end{proof}

\section*{Acknowledgments.}
Petra Berenbrink's research was funded by the DFG project \emph{Distributed and Collaborative Systems of Agents} (project number 411362735) as part of the Research Unit (Forschungsgruppe) \emph{Algorithms, Dynamics, and Information Flow in Networks} (FOR 2975).

\bibliographystyle{alpha}
\bibliography{literature}

\ifthenelse{\isfullversion}{
    \pagebreak
    \appendix
    \section*{Appendix.}

    \section{Deletion of a Random Ball.}%
\label{sec:randomball}

Consider an insertion probability sequence $\intoo[\big]{\beta(t)}_{t \in \N}$.
Let $X = (\bm{x}(t))_{t\in N_0}$ be the process where balls are deleted from randomly chosen non-empty bins during a deletion step.
Remember that configuration $\bm{x}(t)$ is the load vector after step $t$ with $x_i(t)$ denoting the $i$-th fullest bin.
Similarly, let $Y = (\bm{y}(t))_{t\in N_0}$ be the process in which every ball is deleted with the same probability during a deletion step.

Our goal is to show that process $Y$ is not worse (w.r.t.~discrepancy/overload) than process $X$.
We formalize this using the standard notion of \emph{majorization}:
For a configuration $\bm{x}$ let $S_k(\bm{x}) = \sum_{i=1}^k x_i$ be the total load of the $k$ fullest bins.
We say a configuration $\bm{x}$ \emph{majorizes} configuration $\bm{y}$ if $S_k(\bm{y}) \leq S_k(\bm{x})$ for all $k \in \set{1, 2, \dots, n}$.
Note that this immediately implies that $\bm{y}$ has a smaller maximal load than $\bm{x}$.
Thus, we can formalize the desired result via the following \lcnamecref{lem:majorization}:
\begin{theorem}%
\label{lem:majorization}
Starting from an empty system, there exists a coupling between processes $X$ and $Y$ such that $\bm{x}(t)$ majorizes $\bm{y}(t)$ for all $t \in \N_0$.
\end{theorem}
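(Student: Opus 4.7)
The plan is to proceed by induction on $t$, maintaining the invariant that under the coupling being constructed $\bm{x}(t)$ majorizes $\bm{y}(t)$. The base case $t=0$ is immediate, since both processes start empty. For the inductive step I would first couple the insertion/deletion coin by the identity coupling, so that both processes simultaneously insert (with probability $\beta(t+1)$) or delete; it then remains to couple the random bin choices within each branch so that majorization is preserved.

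For an insertion step I would apply the standard rank coupling for $\Greedy{2}$: draw two ranks $i,j \in \set{1,\ldots,n}$ uniformly at random with replacement and, in both processes, insert the ball into the bin of current sorted rank $r \coloneqq \max(i,j)$. A short case analysis shows that inserting into rank $r$ changes $S_k$ by $+1$ exactly when $r \leq k$ or when $r > k$ and the load at rank $r$ matches the load at rank $k$, and leaves $S_k$ unchanged otherwise. This rule could only threaten $S_k(\bm{x}') \geq S_k(\bm{y}')$ in the case $S_k(\bm{x}) = S_k(\bm{y})$, $r > k$, $x_r < x_k$, and $y_r = y_k$; but majorization together with $S_k$-equality yields $x_k \leq y_k$ and $x_{k+1} \geq y_{k+1}$, and a direct chase using $S_r(\bm{x}) \geq S_r(\bm{y})$, the monotonicity of the sorted vectors, and the integrality of loads rules this case out.

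For a deletion step the marginals genuinely differ: in $X$ each non-empty rank $r$ is chosen with probability $1/n^X_{\neq 0}$, while in $Y$ rank $r$ is chosen with probability $y_r/m$, biased toward heavier bins---exactly the direction in which majorization should improve. I would construct a joint law on $(R_X,R_Y)$ via a quantile coupling against a common uniform $U \in [0,1]$, laying out each marginal's mass in rank order. Writing $K_X \coloneqq \abs*{\set{j : x_j > x_{k+1}}}$ and $K_Y$ analogously, a drop of $S_k$ occurs in $\bm{x}$ precisely on $\set{R_X \leq K_X}$ and in $\bm{y}$ precisely on $\set{R_Y \leq K_Y}$, so the invariant is preserved as long as the first event is contained in the second at every tight index $k$ (where $S_k(\bm{x}) = S_k(\bm{y})$). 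This reduces to the pointwise inequality $K_X/n^X_{\neq 0} \leq S_{K_Y}(\bm{y})/m$ at every such $k$, which is the main obstacle and which I would deduce from the same structural consequences of majorization with $S_k$-equality as above. Because the quantile coupling threads both marginals through a single $U$, verifying this inequality pointwise automatically yields one joint coupling that preserves every top-$k$ inequality simultaneously.
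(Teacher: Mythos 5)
Your coupling is in fact identical to the paper's, just described in a different coordinate system: the paper lays out a single uniform $z\in[0,1)$ over \emph{ball} indices and lets $X$ pick the bin by which $n^{-1}$-block $z$ falls into while $Y$ picks the ball by which $m^{-1}$-block $z$ falls into, and pushing that to bin ranks is precisely your quantile coupling of $(R_X,R_Y)$ against a common $U$. Your insertion step matches the paper's, which simply cites~\cite[Lemma~3.4]{ABKU99} for the rank coupling of \Greedy{2}, and your case-chase there is correct. Where you differ is the deletion analysis. The paper records the chain $S_{s-1}(\bm y)<i\le j\le S_r(\bm x)$ on ball indices (their \cref{obs:rs}) and then runs a five-way case split on where $k$ sits relative to the resorted ranks $r',s'$; the only dangerous cases (``$X$ drops, $Y$ doesn't'') are handled by this chain and, separately, by a minimality contradiction using plateau structure. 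You instead collapse the deletion step to the single quantile inequality $K_X/n^X_{\neq 0}\le S_{K_Y}(\bm y)/m$ at every tight $k$, which is a cleaner formulation. That inequality does hold, but proving it is not quite ``the same structural consequences as above'': you first need $K_X\le K_Y$ at a tight $k$ (a small induction along ranks $K_Y+1,\dots,K_X$ using majorization, sortedness and integrality), then $S_{K_X}(\bm x)\le S_{K_Y}(\bm y)$ via $x_{k+1}\ge y_{k+1}$, and finally the averaging/Chebyshev fact $K_X/n^X_{\neq 0}\le S_{K_X}(\bm x)/m$ (the top $K_X$ nonempty bins have average load at least the global average over nonempty bins). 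So the plan is sound and, once the deferred inequality is actually filled in, gives a somewhat tidier argument than the paper's case analysis; but as written, the deletion step hand-waves exactly the part that carries the content of the paper's Cases~4 and~5.
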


\paragraph{Coupling.}
Let us define a suitable coupling between processes $X$ and $Y$.
To this end, we use the identity coupling for the decision whether a ball is allocated (with probability $\beta(t)$) or deleted (with probability $1 - \beta(t)$).
As a result, the total load in both processes remains the same at any time; we denote it by $m(t)$.
It remains to couple the random choices for \emph{allocation} and for \emph{deletion} steps from configuration $\bm{x}(t)$ to configuration $\bm{x}(t+1)$:
\begin{description}
\item[Allocation Step:]
    Here we use the identity coupling.
    That is, both processes use the same two random bin choices for \Greedy{2}.

\item[Deletion Step:]
    Here we number the $m(t)$ balls in both configurations $\bm{x}(t)$ and $\bm{y}(t)$ from left to right (and arbitrarily within each bin).
    We then draw a uniformly random number $z(t)$ from $[0,1)$ and use it to determine the ball deletions as follows:
    \begin{enumerate}
    \item $Y$ deletes the ball $i \in \set{1, \dots, m(t)}$ if and only if
        \begin{equation}
        \frac{i-1}{m(t)}
        \leq
        z(t)
        <
        \frac{i}{m(t)}
        .
        \end{equation}

    \item $X$ deletes the ball $j \in \set{1, \dots, m(t)}$, where $j = \sum_{a=1}^{\ell-1} x_a(t) + k$ (i.e., the $k$-th ball in bin $\ell$), if and only if
        \begin{equation}
        \frac{\ell-1}{\hat{n}(t)} + \frac{k-1}{\hat{n}(t) \cdot x_{\ell}(t)}
        \leq
        z(t)
        <
        \frac{\ell-1}{\hat{n}(t)}+\frac{k}{\hat{n}(t) \cdot x_{\ell}(t)}
        ,
        \end{equation}
        where $\hat{n}(t)$ denotes the number of non-empty bins in configuration $\bm{x}(t)$.
    \end{enumerate}
\end{description}
Note that our coupling of the deletions maintains the marginal distributions of $X$ and $Y$.
Indeed, process $Y$ deletes ball $i$ with probability exactly $1/m$.
Similarly, process $X$ deletes ball $j$ with probability exactly $1/(\hat{n}(t) \cdot x_{\ell}(t))$ (a uniformly random ball chosen from a uniformly random non-empty bin).

\paragraph{Analysis.}
In the prove of \cref{lem:majorization}, we make use of the following simple observation:
\begin{obs}\label{obs:rs}
Consider the above coupling for processes $X$ and $Y$ during a deletion step $t + 1$.
Assume that $\bm{y}(t)$ is majorized by $\bm{x}(t)$.
Let $r$ and $s$ denote the bins from which $X$ and $Y$ delete during step $t+1$, respectively.
Similarly, let $j$ and $i$ denote the number of the balls deleted by $X$ and $Y$ during step $t+1$, respectively.
Then
\begin{equation}
S_{s-1}(\bm{y}(t))
<
i
\leq
j
\leq
S_r(\bm{x}(t))
\end{equation}
\end{obs}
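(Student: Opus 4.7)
The plan is to treat the three inequalities separately: the outer two follow by inspection from the coupling's left-to-right ball-numbering convention, and the central inequality $i \leq j$ is the only claim with content. For the outer bounds, by the definition of $s$, ball number $i$ of $\bm{y(t)}$ sits in the $s$-th bin of $\bm{y(t)}$, so $i > S_{s-1}(\bm{y(t)})$; symmetrically ball $j$ sits in bin $r$ of $\bm{x(t)}$, so $j \leq S_r(\bm{x(t)})$. Neither bound uses majorization, and I would dispose of them in a single sentence.

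For $i \leq j$, I would recast the deletion coupling in terms of the cumulative step functions $c_X(j')$ and $c_Y(i') = i'/m(t)$ giving the right endpoints of the intervals in $[0,1)$ assigned to balls $j'$ in $X$ and $i'$ in $Y$. By construction, for the common draw $z$ we have $j = \min\{j' : c_X(j') > z\}$ and $i = \min\{i' : c_Y(i') > z\}$. The key observation is that the pointwise bound $c_X(j') \leq j'/m(t)$ for every $j' \in \{1, \dots, m(t)\}$ already forces $i \leq j$: if $c_X \leq c_Y$ at all integer inputs, then $\{j' : c_X(j') > z\} \subseteq \{i' : c_Y(i') > z\}$, so the first minimum is at least the second. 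Recognising this cumulative-function reformulation is the main conceptual step, and crucially it exposes that the truth of $i \leq j$ does not depend on $\bm{y(t)}$ at all.

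What remains is a short calculation. Fix $j'$, let $\ell$ be the bin of $\bm{x(t)}$ containing ball $j'$, and write $j' = S_{\ell - 1}(\bm{x(t)}) + k$ with $1 \leq k \leq x_\ell(t)$. Substituting $c_X(j') = (\ell-1)/\hat{n}(t) + k/(\hat{n}(t)\, x_\ell(t))$ and clearing denominators reduces the target to
\begin{equation*}
m(t)\,(\ell - 1)\, x_\ell(t) + k\bigl(m(t) - \hat{n}(t)\, x_\ell(t)\bigr) \leq \hat{n}(t)\, x_\ell(t)\, S_{\ell-1}(\bm{x(t)}).
\end{equation*}
This is linear in $k$, and I would split on the sign of $m(t) - \hat{n}(t)\, x_\ell(t)$. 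If bin $\ell$ is at or above the average load over non-empty bins, the $k$-term is non-positive and the claim collapses to $S_{\ell-1}(\bm{x(t)}) \geq (\ell - 1)\, m(t)/\hat{n}(t)$; if bin $\ell$ is strictly below average, the worst case is $k = x_\ell(t)$ and, after one division by $x_\ell(t)$, the claim collapses to $S_\ell(\bm{x(t)}) \geq \ell\, m(t)/\hat{n}(t)$. Both are instances of the elementary fact that the top $p$ entries of a decreasingly sorted non-negative sequence of mean $m(t)/\hat{n}(t)$ contribute at least $p\, m(t)/\hat{n}(t)$ to the sum.

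The only real obstacle is the conceptual one of the second paragraph; once that reformulation is in place, the two-case arithmetic is forced by the sortedness of $\bm{x(t)}$. I note in passing that the assumption that $\bm{y(t)}$ is majorized by $\bm{x(t)}$ is not actually used in the proof of any of the three inequalities; it plays its role only downstream, when \cref{obs:rs} is fed into a prefix-sum accounting to propagate the majorization invariant from time $t$ to time $t+1$ in the proof of \cref{lem:majorization}.
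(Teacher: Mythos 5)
Your proof is correct, and the derivation of the two outer inequalities coincides with the paper's one-line justification from the left-to-right numbering convention. For the central inequality $i \leq j$, you take a route that, while reaching the same key fact, differs from the paper's sketch in its choice of lemma. The paper simply notes that the per-ball deletion probabilities in $Y$ are constant while those in $X$ are non-decreasing in $j$, and leaves implicit the classical consequence that a sorted non-decreasing pmf has every partial sum bounded by $k/m$; in other words, it works at the level of the \emph{ball} probabilities. You instead make the cumulative step functions $c_X$ and $c_Y$ explicit and prove $c_X(j') \leq j'/m(t)$ directly from the sortedness of the \emph{bin} loads, splitting on the sign of $m(t) - \hat{n}(t)\,x_\ell(t)$ and reducing each branch to a prefix-sum inequality of the form $S_p(\bm{x}(t)) \geq p \cdot m(t)/\hat{n}(t)$ for $p \leq \hat{n}(t)$. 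Both arguments are instances of the same \enquote{sorted sequence $\Rightarrow$ prefix dominates the average} principle, but applied to dual objects; your version is more computational yet completely spelled out, whereas the paper's is terser and leans on a standard but unstated rearrangement fact. Your closing remark that the majorization hypothesis is never invoked in proving the three inequalities is also accurate and consistent with how the paper uses the observation downstream in the proof of the majorization-preservation theorem.
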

The first inequality follows by our ball numbering scheme (from left to right), since $i \leq S_{s-1}(\bm{y}(t))$ would imply that ball $i$ lies in a bin $< s$.
Similarly, the last inequality follows since $j > S_r(\bm{x}(t))$ would imply that ball $j$ lies in a bin $> r$.
The inequality $i \leq j$ is a simple consequence of our coupling, since the probabilities for ball $i$ being deleted in $Y$ are constant (w.r.t.~$i$) while the probabilities for ball $j$ being deleted in $X$ are monotonously non-decreasing (w.r.t.~$j$).

With this, we are ready to prove the main result of this section.
\begin{proof}[Proof of \cref{lem:majorization}]
We use the coupling defined above and prove the statement via induction over $t$, with the trivial base case $S_k(\bm{x}(0)) = 0 = S_k(\bm{y}(0))$ for all $k$.
For the inductive step we show that the coupling maintains majorization when going from $\bm{x}(t)$ to $\bm{x}(t+1)$ during time step $t+1$.
If a ball is inserted during time step $t+1$, this follows immediately from~\cite[Lemma~3.4]{ABKU99}.
Thus, in the rest of this proof we consider a deletion during time step $t+1$.
Moreover, as in \cref{obs:rs}, we use
\begin{enumerate*}
\item $r$ and $s$ to denote the bins from which $X$ and $Y$ delete a ball in step $t$, respectively; and
\item $j$ and $i$ to denote the balls deleted by $X$ and $Y$, respectively.
\end{enumerate*}

A \emph{plateau} of a configuration $\bm{x}$ is a maximal set of consecutive bins having the same load.
Note that when going from $\bm{x}(t)$ to $\bm{x}(t+1)$ by deleting a ball from bin $r$, we have to re-sort the bins after the deletion.
This will swap bin $r$ with the rightmost bin $r'$ of the plateau in which $r$ lies.
In other words, we can think of $X$ as deleting a ball from some bin $r' \geq r$ instead of bin $r$.
Formally we get $\bm{x}(t+1) = \bm{x}(t) - \bm{e_{r'}}$, where $\bm{e_i}$ denotes the $i$-th unit vector in $\R^n$.
Similarly, $Y$ deletes a ball from the rightmost bin $s' \geq s$ in the plateau containing bin $s$, yielding $\bm{y}(t+1) = \bm{y}(t) - \bm{e_{s'}}$

So assume $\bm{x}(t)$ majorizes $\bm{y}(t)$.
That is, $S_k(\bm{y}(t)) \leq S_k(\bm{x}(t))$ for all $k \in \set{1, 2, \dots, n}$.
We show by case distinction that $S_k(\bm{y}(t+1)) \leq S_k(\bm{x}(t+1))$ for all $k \in \set{1, 2, \dots, n}$:
\begin{description}
\item[Case~1: $k < \min\set{r', s'}$]
    Then the first $k$ bins remain untouched by the deletion in both processes, yielding
    \begin{equation}
    S_k(\bm{y}(t+1))
    =
    S_k(\bm{y}(t))
    \leq
    S_k(\bm{x}(t))
    =
    S_k(\bm{x}(t+1))
    .
    \end{equation}

\item[Case~2: $k \geq \max\set{r', s'}$]
    Then the first $k$ bins both loose exactly one ball in both processes, yielding
    \begin{equation}
    S_k(\bm{y}(t+1))
    =
    S_k(\bm{y}(t)) - 1
    \leq
    S_k(\bm{x}(t)) - 1
    =
    S_k(\bm{x}(t+1))
    .
    \end{equation}

\item[Case~3: $s' \leq k < r'$]
    Then $X$ and $Y$ delete exactly zero and one ball in the first $k$ bins, respectively, yielding
    \begin{equation}
    S_k(\bm{y}(t+1))
    =
    S_k(\bm{y}(t)) - 1
    \leq
    S_k(\bm{x}(t)) - 1
    =
    S_k(\bm{x}(t+1)) - 1
    <
    S_k(\bm{x}(t+1))
    .
    \end{equation}

\item[Case~4: $r' \leq k < s \leq s'$]
    Then we use \cref{obs:rs} to get
    \begin{equation}
    \begin{aligned}
    S_k(\bm{y}(t+1))
    =
    S_k(\bm{y}(t))
    &\leq
    S_{s-1}(\bm{y}(t))
      \\&\leq
    S_r(\bm{x}(t)) - 1
    \leq
    S_{r'}(\bm{x}(t)) - 1
    =
    S_{r'}(\bm{x}(t+1))
    \leq
    S_k(\bm{x}(t+1))
    .
    \end{aligned}
    \end{equation}

\item[Case~5: $\max\set{r', s} \leq k < s'$]
    For the sake of a contradiction, let $k$ be the minimal value in this range that violates the majorization after the deletion (and normalization).
    That is, we have
    \begin{math}
    S_k(\bm{y}(t+1))
    >
    S_k(\bm{x}(t+1))
    \end{math}
    and
    \begin{math}
    S_{k-1}(\bm{y}(t+1))
    \leq
    S_{k-1}(\bm{x}(t+1))
    \end{math}.\footnote{
        For $k = s$ this holds by the previous case.
    }
    This implies
    \begin{math}
    y_k(t+1)
    >
    x_k(t+1)
    \end{math}.
    Now note that $k$ lies in the plateau of $\bm{y}(t)$ between $s$ and $s'$, such that for all $q \in \set{k+1, k+2, s'-1}$ we have
    \begin{equation}
    y_q(t+1)
    =
    y_k(t+1)
    >
    x_k(t+1)
    \geq
    x_q(t+1)
    \end{equation}
    (the last inequality using that $\bm{x}(t+1)$ is sorted non-increasingly) and, similarly,
    \begin{equation}
    y_{s'}(t+1)
    =
    y_k(t+1) - 1
    \geq
    x_k(t+1)
    \geq
    x_q(t+1)
    .
    \end{equation}
    This implies
    \begin{math}
    S_{s'}(\bm{y}(t))
    =
    S_{s'}(\bm{y}(t+1)) + 1
    >
    S_{s'}(\bm{x}(t+1)) + 1
    =
    S_{s'}(\bm{x}(t))
    \end{math},
    contradicting the induction hypothesis.
\end{description}
Together, this shows that $\bm{x}(t+1)$ majorizes $\bm{y}(t+1)$.
\end{proof}

    \section{Omitted Material from Proof of Theorem \ref{thm:potential_bound}.}
\label{app:potential_bound}

In this section we prove \cref{thm:potential_bound}, which we restate for convenience:

\potentialbound*

We use the potential functions from \cite{DBLP:journals/rsa/PeresTW15} and show that, at an arbitrary point of time, the expected potential is linear in $n$.
Since the authors of \cite{DBLP:journals/rsa/PeresTW15}
considered a process without ball deletions, we extend the potential function analysis to capture ball deletions.
Recall that the process works as follows: In every time step a ball is inserted with probability $\beta(t)$ using two choices.
Otherwise, with probability $1-\beta(t)$, a random non-empty bin is picked and a ball from that bin is deleted.

The three potential functions defined in \cite{DBLP:journals/rsa/PeresTW15} are:

$$\Phi_{\alpha}(x(t)) = \sum_{i=1}^n e^{\alpha \cdot (x_i(t) - m(t)/n)};\hspace{1cm}
\Psi_{\alpha}(x(t)) = \sum_{i=1}^n e^{-\alpha \cdot (x_i(t) - m(t)/n)};$$
$$
\Gamma_{\alpha}(x(t)) = \Phi_{\alpha}(x(t)) + \Psi_{\alpha}(x(t))$$

\noindent where $x(t) = (x_1(t), ..., x_n(t))$ is the normalized load vector.
For convenience, we use $y_i(t) := x_i(t) - m(t)/n$ for the difference between the load in the $i$th bin and the average.

The remainder of the section is dedicated to prove Lemma \ref{thm:potential_bound1} which directly implies \cref{thm:potential_bound}.
For this proof, we require insertion and deletion probabilities.
Given that a ball is deleted in round $t$, we denote by $q_i(t)$ the probability that it is removed from the $i$th bin.
And given that a ball is inserted, we denote by $p_i(t)$ the probability that it is inserted into the $i$th bin.
If no bin is empty, then $q_i(t) = 1/n$ because the protocol deletes a ball from a non-empty bin chosen uniformly at random.
If there are $s$ empty bins, then the empty bins, which come last in the normalized array, have probability $q_i(t) = 0$, and the remaining bins have probability $q_i(t) = 1/(n-s)$.
$p_i(t)$, on the other hand, is independent of $t$.
$p_i(t) = p_i = \left(\frac{i}{n}\right)^2 - \left(\frac{i-1}{n}\right)^2$ because for the $i$th bin to be chosen in the normalized array, the $i$th bin must be among the choices next to a bin with the same or a smaller index.
(W.l.o.g.\ we assume that if bins of equal load are chosen, the bin with the larger index receives the ball.)
From this we can derive the following simple observation:

\begin{obs} \label{obs:epsilon}
$\sum_{i \leq n/4} p_i \leq 1/4 - \epsilon$ and $\sum_{i \geq 3n/4} p_i \geq 1/4 + \epsilon$ for any $\epsilon \leq \frac{3}{16}$.
\end{obs}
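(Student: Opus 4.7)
The plan is to exploit the telescoping structure of the probabilities $p_i$. By definition, $p_i = (i/n)^2 - ((i-1)/n)^2$, so partial sums telescope to $\sum_{i=1}^{k} p_i = (k/n)^2$ for every $k \in \{0, 1, \ldots, n\}$. This single identity reduces both claims to routine arithmetic and, in particular, recovers the tight value $\epsilon = 3/16$.

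For the first inequality I would apply the telescoping identity at $k = \lfloor n/4 \rfloor$ to conclude $\sum_{i \leq n/4} p_i = (\lfloor n/4 \rfloor/n)^2 \leq 1/16 = 1/4 - 3/16$, which gives the bound for any $\epsilon \leq 3/16$. For the second inequality I would combine the same identity with $\sum_{i=1}^{n} p_i = 1$ to write $\sum_{i \geq 3n/4} p_i = 1 - ((\lceil 3n/4 \rceil - 1)/n)^2$, and then verify that this quantity is at least $7/16 = 1/4 + 3/16$. The last step only requires that $\lceil 3n/4 \rceil - 1 < 3n/4$ when $4 \nmid n$ (handled by a quick case distinction on $n \bmod 4$) and that $\lceil 3n/4 \rceil - 1 = 3n/4 - 1$ otherwise; in both cases we get $((\lceil 3n/4 \rceil - 1)/n)^2 \leq 9/16$, and hence the desired bound.

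There is really no substantive obstacle here: the only care-point is integer rounding when $n$ is not divisible by $4$, but that creates at most an $O(1/n)$ slack that never hurts the target constant. Tightness of $\epsilon = 3/16$ in both directions is immediate from the calculation above, and it is exactly this value that one would see reappear when tracking constants through the subsequent potential drift argument.
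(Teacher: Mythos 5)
Your proposal is correct and uses the same telescoping-sum identity $\sum_{i\le k}p_i = (k/n)^2$ that the paper's own proof relies on; the only difference is that you explicitly track the floor/ceiling rounding when $4 \nmid n$, while the paper implicitly treats $n/4$ and $3n/4$ as integers, but as you note the slack is in the right direction and the constants $1/16$ and $7/16$ come out the same.
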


\begin{proof}
The probability for any of the first $n/4$ bins receiving the new ball is
$$\sum_{i \leq n/4} p_i = \sum_{i \leq n/4} \left(\frac{i}{n}\right)^2 - \left(\frac{i-1}{n}\right)^2 = \left(\frac{n/4}{n}\right)^2 = \frac{1}{16}.$$

The probability for any of the last $n/4$ bins receiving the new ball is
$$\sum_{i > 3n/4} p_i = \sum_{i > 3n/4} \left(\frac{i}{n}\right)^2 - \left(\frac{i-1}{n}\right)^2 = \left(\frac{n}{n}\right)^2 - \left(\frac{3n/4}{n}\right)^2 = 1 - \left(\frac{3}{4}\right)^2 = \frac{7}{16}$$
\end{proof}

Let $\hat{y}(t) = (\hat{y}_1(t), \ldots, \hat{y}_n(t))$ denote the load vector above average before normalization.
By combining the probabilities $\beta$, $p_i$ and $q_i$, we can express $\hat{y}_i(t+1)$, i.e.\ the $i$th bin's load above average before normalization at $t+1$, as a function of its previous load $y_i(t)$:

\begin{equation*}
\hat{y}_i(t+1) =
\begin{cases}
  y_i + 1 - 1/n, & \text{with probability $\beta(t) \cdot p_i$ (ball is allocated to $i$th bin)},\\
  y_i - 1/n, & \text{with probability $\beta(t) \cdot (1 - p_i)$ (ball is allocated to other bin)},\\
  y_i - 1 + 1/n, & \text{with probability $(1 - \beta(t)) \cdot q_i$ (ball is deleted from $i$th bin)},\\
  y_i + 1/n, & \text{with probability $(1 - \beta(t)) \cdot (1 - q_i)$ (ball is deleted from other bin)}
\end{cases}
\end{equation*}

Let $\Phi_{\alpha}^i(x(t)) = e^{\alpha \cdot y_i(t)}$ and $\Psi_{\alpha}^i(x(t)) = e^{-\alpha \cdot y_i(t)}$ be the local potentials in bin $i$.
And let $\Delta \hat{\Phi}_\alpha^i := \Phi_\alpha^i(\hat{x}(t+1)) - \Phi_\alpha^i(x(t))$ and $\Delta \hat{\Psi}_\alpha^i := \Psi_\alpha^i(\hat{x}(t+1)) - \Psi_\alpha^i(x(t))$ be the local potential changes in the $i$th bin (before normalization).
Then:

\begin{equation*}
\Delta \hat{\Phi}_\alpha^i =
\begin{cases}
  e^{\alpha (y_i + 1 - 1/n)} - e^{\alpha y_i} = e^{\alpha y_i} \cdot (e^{\alpha (1 - 1/n)} - 1), & \text{with probability $\beta(t) \cdot p_i$},\\
  e^{\alpha (y_i - 1/n)} - e^{\alpha y_i} = e^{\alpha y_i} \cdot (e^{-\alpha/n} - 1), & \text{with probability $\beta(t) \cdot (1 - p_i)$},\\
  e^{\alpha (y_i - 1 + 1/n)} - e^{\alpha y_i} = e^{\alpha y_i} \cdot (e^{\alpha (-1 + 1/n)} - 1), & \text{with probability $(1 - \beta(t)) \cdot q_i$},\\
  e^{\alpha (y_i + 1/n)} - e^{\alpha y_i} = e^{\alpha y_i} \cdot (e^{\alpha/n} - 1), & \text{with probability $(1 - \beta(t)) \cdot (1 - q_i)$}
\end{cases}
\end{equation*}

\begin{equation*}
\Delta \hat{\Psi}_\alpha^i =
\begin{cases}
  e^{-\alpha (y_i + 1 - 1/n)} - e^{\alpha y_i} = e^{-\alpha y_i} \cdot (e^{-\alpha (1 - 1/n)} - 1), & \text{with probability $\beta(t) \cdot p_i$},\\
  e^{-\alpha (y_i - 1/n)} - e^{\alpha y_i} = e^{-\alpha y_i} \cdot (e^{\alpha/n} - 1), & \text{with probability $\beta(t) \cdot (1 - p_i)$},\\
  e^{-\alpha (y_i - 1 + 1/n)} - e^{\alpha y_i} = e^{-\alpha y_i} \cdot (e^{-\alpha (-1 + 1/n)} - 1), & \text{with probability $(1 - \beta(t)) \cdot q_i$},\\
  e^{-\alpha (y_i + 1/n)} - e^{\alpha y_i} = e^{-\alpha y_i} \cdot (e^{-\alpha/n} - 1), & \text{with probability $(1 - \beta(t)) \cdot (1 - q_i)$}
\end{cases}
\end{equation*}

We use these local changes to prove different bounds on the overall potential changes $\Delta \Phi_\alpha = \Phi_\alpha(x(t+1)) - \Phi_\alpha(x(t))$ (in \cref{lem:phi_basis}, \cref{lem:phi1}, \cref{phi_corollary} and \cref{lem:phi2}) and $\Delta \Psi_\alpha = \Psi_\alpha(x(t+1)) - \Psi_\alpha(x(t))$ (in \cref{lem:psi_basis}, \cref{lem:psi1}, \cref{psi_corollary} and \cref{lem:psi2}).

\begin{lemma}[similar to a part of Lemma A.5 in \cite{DBLP:conf/icalp/TalwarW14}] \label{lem:phi_basis}
If, for a constant $\beta$, $\beta(t) \geq \beta$ for all $t$ and if $\alpha \leq \frac{\epsilon \cdot \beta}{3} \leq \frac{1}{16}$,
then $$E[\Delta \Phi_\alpha \;|\; x(t)] \leq \sum_{i=1}^n e^{\alpha y_i} \left( \beta(t) \left(p_i (e^{\alpha} - 1) - \frac{\alpha}{n}\right) + (1-\beta(t)) \frac{1}{n} (e^{-\alpha} - 1 + \alpha) + \frac{2 \alpha^2}{n^2} \right).$$
\end{lemma}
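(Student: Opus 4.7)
The plan is to decompose $E[\Delta\Phi_\alpha \mid x(t)]$ as a sum of per-bin contributions via the symmetry of $\Phi_\alpha$, plug in the four outcomes for $\hat y_i(t+1)$ given in the text, and bound each resulting summand by Taylor estimates on $e^{\pm\alpha/n}$. Because $\Phi_\alpha$ depends only on the multiset of loads, the re-sorting step that produces $x(t+1)$ from $\hat x(t+1)$ leaves the potential unchanged, so
\[
E[\Delta\Phi_\alpha \mid x(t)] = \sum_i e^{\alpha y_i}\bigl[\beta(t) p_i e^{-\alpha/n}(e^\alpha-1) + \beta(t)(e^{-\alpha/n}-1) + (1-\beta(t))q_i e^{\alpha/n}(e^{-\alpha}-1) + (1-\beta(t))(e^{\alpha/n}-1)\bigr].
\]
The insertion pieces (the first two summands) are handled immediately by $e^{-\alpha/n} \le 1$ and $e^{-\alpha/n}-1 \le -\alpha/n + \alpha^2/n^2$, which combine to give per-bin contribution at most $\beta(t)p_i(e^\alpha-1) - \beta(t)\alpha/n + \beta(t)\alpha^2/n^2$.

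The hard part will be the deletion term $(1-\beta(t))q_i e^{\alpha/n}(e^{-\alpha}-1)$: the coefficient is non-positive, but $q_i$ vanishes on empty bins while $e^{\alpha y_i}>0$ there, so no bin-by-bin inequality $q_i\ge 1/n$ is available. The key auxiliary step is to prove the summed comparison
\[
\sum_i q_i e^{\alpha y_i} \ge \frac{1}{n}\sum_i e^{\alpha y_i}.
\]
Let $s$ denote the number of empty bins, so $q_i = 1/(n-s)$ on every non-empty bin; every such bin satisfies $x_i\ge 1$, hence $y_i \ge 1 - m(t)/n$ and $e^{\alpha y_i} \ge e^{\alpha(1-m(t)/n)} \ge e^{-\alpha m(t)/n}$, which is exactly the value of $e^{\alpha y_i}$ on each empty bin. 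Averaging over the $n-s$ non-empty bins and splitting $\sum_i e^{\alpha y_i}$ into its empty and non-empty parts then yields the inequality in a line. Since the coefficient $(1-\beta(t))e^{\alpha/n}(e^{-\alpha}-1)$ is $\le 0$, substituting this lower bound gives an \emph{upper} bound on the deletion sum, effectively replacing $q_i$ by $1/n$.

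After this substitution the remaining algebra is routine. Writing $e^{\alpha/n}(e^{-\alpha}-1) = (e^{-\alpha}-1) + (e^{\alpha/n}-1)(e^{-\alpha}-1)$, the cross product is non-positive, and $n(e^{\alpha/n}-1) \le \alpha + \alpha^2/n$ follows from $e^x - 1 - x \le x^2$ for $x\le 1$. Combining, these bound the deletion side by $(1-\beta(t))(e^{-\alpha}-1+\alpha)/n + (1-\beta(t))\alpha^2/n^2$. Together with the insertion estimate the total positive slack is at most $\alpha^2/n^2$, well within the stated $2\alpha^2/n^2$ budget; the hypothesis $\alpha \le 1/16$ ensures all Taylor steps used are valid.
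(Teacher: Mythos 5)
Your proof is correct, and its overall route matches the paper's: expand the per-bin expectation, simplify the insertion and deletion contributions, replace $q_i$ by $1/n$ in the deletion term, and finish with Taylor estimates; your constants are even slightly tighter (you get slack $\alpha^2/n^2$ rather than $2\alpha^2/n^2$). The one place you genuinely diverge is the justification of the $q_i\to 1/n$ substitution (the paper's Inequality~\eqref{eq:phi_altered}). The paper dispatches it in one terse line by observing that $q_i$ is non-increasing while $(1-\beta(t))(e^{-\alpha}-1)e^{\alpha y_i}$ is non-decreasing, which is an implicit appeal to a Chebyshev-sum/rearrangement argument (together with $\sum_i q_i = 1$). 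You instead prove the summed inequality $\sum_i q_i e^{\alpha y_i} \ge \frac{1}{n}\sum_i e^{\alpha y_i}$ directly, by splitting the sum into empty and non-empty bins, noting $q_i = 1/(n-s)$ on the $n-s$ non-empty bins, and observing that each non-empty bin contributes $e^{\alpha y_i}\ge e^{-\alpha m(t)/n}$ whereas every empty bin contributes exactly $e^{-\alpha m(t)/n}$; the resulting comparison $sA \ge (n-s)B$ (with $A$ and $B$ the non-empty and empty partial sums) closes it. Your version is more elementary and self-contained than the paper's one-liner and makes fully explicit why the subtle ``$q_i$ vanishes but $e^{\alpha y_i}$ does not'' issue is harmless; the paper's version, once unpacked, is the same fact via a more abstract monotonicity argument.
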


\begin{proof}
We compute the expected potential difference:
\begin{eqnarray}
& & E[\Delta \Phi_\alpha \;|\; x(t)] \nonumber\\
& = & \sum_{i=1}^n \beta(t) p_i \cdot (e^{\alpha (1-1/n)}-1) \cdot e^{\alpha y_i} + \beta(t) (1-p_i) \cdot (e^{\alpha (-1/n)}-1) \cdot e^{\alpha y_i} \nonumber\\
&&+ (1-\beta(t)) q_i \cdot (e^{\alpha (-1+1/n)}-1) \cdot e^{\alpha y_i} + (1-\beta(t)) (1-q_i) \cdot (e^{\alpha (1/n)}-1) \cdot e^{\alpha y_i} \nonumber\\
& = & \sum_{i=1}^n \beta(t) e^{\alpha y_i} \left( p_i \cdot e^{-\alpha/n} \cdot (e^{\alpha} - 1) + e^{-\alpha/n} - 1 \right) \nonumber\\
&&+ (1-\beta(t)) e^{\alpha y_i} \left( q_i e^{\alpha/n} (e^{-\alpha} - 1) + e^{\alpha/n} - 1 \right) \nonumber\\
& \leq & \sum_{i=1}^n \beta(t) e^{\alpha y_i} \left( p_i (e^{\alpha} - 1) + e^{-\alpha/n} - 1 \right) + (1-\beta(t)) e^{\alpha y_i} \left( q_i (e^{-\alpha} - 1) + e^{\alpha/n} - 1 \right) \nonumber\\
& = & \sum_{i=1}^n e^{\alpha y_i} \left( \beta(t) p_i (e^{\alpha} - 1) + (1-\beta(t)) q_i (e^{-\alpha} - 1) + e^{\alpha/n} - 1 + \beta(t) (e^{-\alpha/n} - e^{\alpha/n}) \right) \nonumber\\
& \leq & \sum_{i=1}^n e^{\alpha y_i} \left( \beta(t) p_i (e^{\alpha} - 1) + (1-\beta(t)) q_i (e^{-\alpha} - 1) + \frac{\alpha}{n} + \frac{2 \alpha^2}{n^2} - \frac{2 \beta(t) \alpha}{n} \right) \nonumber\\
& \leq & \sum_{i=1}^n e^{\alpha y_i} \left( \beta(t) p_i (e^{\alpha} - 1) + (1-\beta(t)) \frac{1}{n} (e^{-\alpha} - 1) + \frac{\alpha}{n} + \frac{2 \alpha^2}{n^2} - \frac{2 \beta(t) \alpha}{n} \right) \label{eq:phi_altered}\\
& = & \sum_{i=1}^n e^{\alpha y_i} \left( \beta(t) \left(p_i (e^{\alpha} - 1) - \frac{\alpha}{n}\right) + (1-\beta(t)) \frac{1}{n} (e^{-\alpha} - 1 + \alpha) + \frac{2 \alpha^2}{n^2} \right) \nonumber 
\end{eqnarray}

Inequality (\ref{eq:phi_altered}) holds because, if no bin is empty, all $q_i = 1/n$, and if bins are empty, then the $q_i$ are decreasing while the term $(1-\beta(t)) \cdot (e^{-\alpha} - 1) \cdot e^{\alpha y_i}$ is increasing as $e^{-\alpha} - 1 < 0$.
\end{proof}

\begin{lemma}[similar to a part of Lemma A.5 in \cite{DBLP:conf/icalp/TalwarW14}] \label{lem:phi1}
If, for a constant $\beta$, $\beta(t) \geq \beta$ for all $t$, if $\alpha \leq \frac{\epsilon \cdot \beta}{3} \leq \frac{1}{16}$ and if $y_{3n/4}(t) \leq 0$, then $E[\Delta \Phi_\alpha \;|\; x(t)] \leq - \frac{\beta \cdot \alpha \cdot \epsilon}{n} \cdot \Phi_\alpha + \alpha$.
\end{lemma}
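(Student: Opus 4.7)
My plan is to start from the drift bound of \cref{lem:phi_basis} and convert it into a strictly negative drift on $\Phi_\alpha$ by exploiting the hypothesis $y_{3n/4}(t) \leq 0$ in combination with \cref{obs:epsilon}. After Taylor-expanding the exponentials for $\alpha \leq 1/16$ (so $e^\alpha - 1 \leq \alpha(1+\alpha)$ and $e^{-\alpha} - 1 + \alpha \leq \alpha^2/2$), the drift reduces to
\[
\beta(t)(e^\alpha - 1)\sum_{i=1}^n p_i e^{\alpha y_i} - \frac{\beta(t)\alpha}{n}\Phi_\alpha + O\!\left(\frac{\alpha^2 \Phi_\alpha}{n}\right),
\]
so the crucial task is to show $\sum_i p_i e^{\alpha y_i} \leq (1 - c\epsilon)\Phi_\alpha/n + O(1)$ for a positive constant $c$, which is the only place where the hypothesis $y_{3n/4} \leq 0$ is actually used.

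I would split the cross sum at the index $3n/4$. Because the bins are sorted by decreasing load and $y_{3n/4} \leq 0$, for $i > 3n/4$ we have $e^{\alpha y_i} \leq 1$, so $\sum_{i > 3n/4} p_i e^{\alpha y_i} \leq \sum_{i > 3n/4} p_i = 7/16$. When this is multiplied by $\beta(t)(e^\alpha - 1) \leq 2\alpha$, it yields only an $O(\alpha)$ additive contribution, which accounts for the $+\alpha$ term in the lemma's conclusion. For $i \leq 3n/4$, the sequences $(p_i)$ and $(e^{\alpha y_i})$ are oppositely monotonic, so Chebyshev's sum inequality combined with $\sum_{i \leq 3n/4} p_i \leq 3/4 - \epsilon$ (the complement of \cref{obs:epsilon}) yields
\[
\sum_{i \leq 3n/4} p_i e^{\alpha y_i} \leq \frac{4}{3n}\left(\frac{3}{4} - \epsilon\right)\sum_{i \leq 3n/4} e^{\alpha y_i} \leq \left(1 - \frac{4\epsilon}{3}\right)\frac{\Phi_\alpha}{n}.
\]

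Substituting back into the drift, the main terms combine to a coefficient of $\Phi_\alpha/n$ of order $-\tfrac{4}{3}\beta(t)\alpha\epsilon$. The remaining $O(\alpha^2 \Phi_\alpha/n)$ corrections (from the $\alpha^2$ remainder of $e^\alpha - 1$, the deletion term $(1-\beta(t))(e^{-\alpha}-1+\alpha)\Phi_\alpha/n$, and the slack $2\alpha^2\Phi_\alpha/n^2$) are then absorbed into this negative drift using the stipulated bound $\alpha \leq \beta\epsilon/3 \leq 1/16$, leaving a net coefficient of at most $-\beta\alpha\epsilon$ on $\Phi_\alpha/n$; the $O(\beta(t)\alpha)$ additive from the bottom $n/4$ bins gives the $+\alpha$ in the statement.

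The main obstacle will be the precise bookkeeping of constants: ensuring that the factor $4/3$ gained from Chebyshev is large enough to both match the target constant $\epsilon$ and absorb the combined $\alpha^2$ corrections under the given constraint $\alpha \leq \beta\epsilon/3$. If the slack is tight, one may need to use a slightly sharper Taylor bound (keeping $e^\alpha-1 = \alpha + \alpha^2/2 + O(\alpha^3)$ instead of $\alpha(1+\alpha)$), or to split $\{1,\dots,3n/4\}$ further to exploit the complementary form $\sum_{i \leq n/4}(1/n - p_i) \geq \epsilon$ of \cref{obs:epsilon}, which concentrates the Greedy$[2]$ advantage on the heaviest bins where $e^{\alpha y_i}$ is largest.
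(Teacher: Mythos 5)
Your proposal is correct and follows essentially the same route as the paper's proof: start from \cref{lem:phi_basis}, split the sum at index $3n/4$ (using $y_{3n/4}\leq 0$ to bound $e^{\alpha y_i}\leq 1$ on the tail, giving the $+\alpha$ additive), and apply a Chebyshev/rearrangement argument together with \cref{obs:epsilon} on the head to extract the coefficient $(1-\tfrac{4}{3}\epsilon)$, which after Taylor-expanding and absorbing the $O(\alpha^2)$ corrections under $\alpha\leq\beta\epsilon/3$ yields $-\beta\alpha\epsilon\cdot\Phi_\alpha/n + \alpha$. The bookkeeping you flag as a potential concern does in fact close with the stated constant (the paper's inequality chain arrives at $\frac{\Phi_\alpha}{n}(-\frac{4}{3}\beta(t)\epsilon\alpha + \frac{2}{3}\alpha^2)+\alpha$, and $2\alpha\leq\beta(t)\epsilon$ suffices to finish).
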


\begin{proof}
We start from Lemma \ref{lem:phi_basis} and continue the calculation:
\begin{eqnarray}
& & E[\Delta \Phi_\alpha \;|\; x(t)] \nonumber\\
& \leq & \sum_{i=1}^n e^{\alpha y_i} \left( \beta(t) \left(p_i (e^{\alpha} - 1) - \frac{\alpha}{n}\right) + (1-\beta(t)) \frac{1}{n} (e^{-\alpha} - 1 + \alpha) + \frac{2 \alpha^2}{n^2} \right) \nonumber\\
& \leq & \sum_{i<3n/4} e^{\alpha y_i} \left( \beta(t) \left(p_i (e^{\alpha} - 1) - \frac{\alpha}{n} \right) + (1-\beta(t)) \frac{1}{n} (e^{-\alpha} - 1 + \alpha) + \frac{2 \alpha^2}{n^2} \right) \nonumber\\
&& + \sum_{i \geq 3n/4} e^{0} \left( \beta(t) \left(p_i (e^{\alpha} - 1) - \frac{\alpha}{n} \right) + (1-\beta(t)) \frac{1}{n} (e^{-\alpha} - 1 + \alpha) + \frac{2 \alpha^2}{n^2} \right) \nonumber\\
& \leq & \sum_{i<3n/4} \frac{4 \Phi_\alpha}{3 n} \left( \beta(t) \left(p_i (e^{\alpha} - 1) - \frac{\alpha}{n} \right) + (1-\beta(t)) \frac{1}{n} (e^{-\alpha} - 1 + \alpha) + \frac{2 \alpha^2}{n^2} \right) \label{eq:A5_0}\\
&& + \beta(t) \left(e^{\alpha} - 1 - \frac{\alpha}{4} \right) + (1-\beta(t)) \frac{1}{4} (e^{-\alpha} - 1 + \alpha) + \frac{\alpha^2}{2n} \nonumber\\
& \overset{(Obs.~\ref{obs:epsilon})}{\leq} & \frac{4 \Phi_\alpha}{3 n} \left( \beta(t) \left( \left(\frac{3}{4} - \epsilon\right) (e^{\alpha} - 1) - \frac{3 \alpha}{4} \right) + (1-\beta(t)) \frac{3}{4} (e^{-\alpha} - 1 + \alpha) + \frac{3 \alpha^2}{2 n} \right) + \alpha  \label{eq:A5_1}\\
& = & \frac{\Phi_\alpha}{n} \left( \beta(t) \left( \left(1 - \frac{4}{3} \epsilon\right) (e^{\alpha} - 1) - \alpha \right) + (1-\beta(t)) (e^{-\alpha} - 1 + \alpha) + \frac{2 \alpha^2}{n} \right) + \alpha \nonumber\\
& = & \frac{\Phi_\alpha}{n} \left( - \frac{4}{3} \beta(t) \epsilon (e^{\alpha} - 1) + \beta(t) \left( e^{\alpha} - 1 - \alpha \right) + (1-\beta(t)) (e^{-\alpha} - 1 + \alpha) + \frac{2 \alpha^2}{n} \right) + \alpha \nonumber\\
& \leq & \frac{\Phi_\alpha}{n} \left( - \frac{4}{3} \beta(t) \epsilon \alpha + \frac{2}{3} \alpha^2 \right) + \alpha \nonumber\\
& \leq & \frac{\Phi_\alpha}{n} \left( - \beta(t) \epsilon \alpha \right) + \alpha \nonumber\\
& \leq & \frac{\Phi_\alpha}{n} \left( - \beta \epsilon \alpha \right) + \alpha \nonumber
\end{eqnarray}

In Inequality~(\ref{eq:A5_0}), we use that $\sum_{i=1}^{n} p_i (e^\alpha-1) e^{\alpha y_i}$ is maximal when $e^{\alpha y_i} = \frac{4 \cdot \Phi_\alpha}{3 \cdot n}$ for all $i < 3n/4$.
This is because the $p_i$ are non-decreasing and the $y_i$ are non-increasing.
In Inequality~(\ref{eq:A5_1}), we use Observation~\ref{obs:epsilon} and the fact that $\sum_{i=1}^{n} p_i (e^\alpha-1) e^{\alpha y_i}$ is maximal if the $p_i$ are uniform.
The second to last inequality uses $\alpha \leq \beta \cdot \epsilon / 3 \leq \beta(t) \cdot \epsilon / 3$.
\end{proof}

\begin{corollary} \label{phi_corollary}
If, for a constant $\beta$, $\beta(t) \geq \beta$ for all $t$, if $\alpha \leq \frac{\epsilon \cdot \beta}{3} \leq \frac{1}{16}$ and if $y_{3n/4}(t) \leq 0$, then $E[\Delta \Phi_\alpha \;|\; x(t)] \leq \Phi_\alpha \cdot \frac{\alpha^2}{n}$.
\end{corollary}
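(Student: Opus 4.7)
The plan is to derive the corollary as a (weaker) algebraic consequence of \cref{lem:phi1}, which under identical hypotheses yields
\begin{equation}
E[\Delta \Phi_\alpha \mid x(t)] \leq -\frac{\beta \alpha \epsilon}{n}\Phi_\alpha + \alpha.
\end{equation}
Showing that this right-hand side is itself bounded by $\Phi_\alpha \cdot \alpha^2/n$ reduces, after rearrangement, to the scalar inequality $\Phi_\alpha \cdot (\alpha + \beta\epsilon) \geq n$. So the task becomes establishing a sufficient lower bound on $\Phi_\alpha$ (possibly after sharpening one of the terms inside the drift bound).

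The key ingredient will be Jensen's inequality applied to the convex map $z \mapsto e^{\alpha z}$. Since, by the definition of the normalized vector, $\sum_i y_i(t) = \sum_i x_i(t) - m(t) = 0$, we obtain
\begin{equation}
\Phi_\alpha(x(t)) \;=\; \sum_{i=1}^n e^{\alpha y_i(t)} \;\geq\; n \cdot \exp\bigl(\tfrac{\alpha}{n}\textstyle\sum_i y_i(t)\bigr) \;=\; n.
\end{equation}
This baseline lower bound lets us convert the troublesome \emph{additive} $+\alpha$ on the right of the drift inequality into a \emph{multiplicative} form scaled by $\Phi_\alpha/n \geq 1$, which is the mechanism by which a bound of the shape $\Phi_\alpha \cdot \alpha^2/n$ can absorb it.

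With Jensen in hand, I would finish via a short case distinction on the size of $\Phi_\alpha$. In the regime $\Phi_\alpha \geq n/(\beta\epsilon)$ the negative drift term already swamps $+\alpha$, making the bound from \cref{lem:phi1} non-positive and thus trivially below the non-negative target $\Phi_\alpha\alpha^2/n$. In the complementary small-$\Phi_\alpha$ regime, the plan is to revisit the pre-simplification estimate inside the proof of \cref{lem:phi1}, namely
\[
E[\Delta\Phi_\alpha] \leq \tfrac{\Phi_\alpha}{n}\bigl(-\tfrac{4}{3}\beta\epsilon\alpha + \tfrac{2}{3}\alpha^2\bigr) + \alpha,
\]
and use $\Phi_\alpha \geq n$ (hence $\alpha \leq \alpha \cdot \Phi_\alpha/n$) to fold the additive $\alpha$ into the multiplicative $\alpha^2 \Phi_\alpha/n$ term. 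The Chebyshev sum inequality argument already carried out in the proof of \cref{lem:phi1} (exploiting that $p_i$ is non-decreasing while $e^{\alpha y_i}$ is non-increasing, and using the splitting at $3n/4$ under the hypothesis $y_{3n/4}(t) \leq 0$) will continue to supply exactly those $\alpha^2$-scale positive contributions that scale as $\Phi_\alpha/n$.

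The main obstacle is precisely the gap between the clean statement of \cref{lem:phi1} and the target: in our parameter regime $\alpha + \beta\epsilon \leq 1/4 < 1$, so the inequality $\Phi_\alpha(\alpha + \beta\epsilon) \geq n$ does not follow from $\Phi_\alpha \geq n$ alone. The cleanest route is therefore not to invoke the stated form of \cref{lem:phi1} as a black box but to re-expand one layer into its proof, retaining the $\alpha^2$-order positive terms whose sum over bins is naturally $O(\Phi_\alpha/n)$, and then close the resulting inequality using $\Phi_\alpha \geq n$ together with $\alpha \leq \epsilon\beta/3 \leq 1/16$.
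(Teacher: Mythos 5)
You correctly observe that $\Phi_\alpha \geq n$ (Jensen) is too weak to absorb the additive $+\alpha$ if Lemma~\ref{lem:phi1} is used as a black box, but the fallback you sketch — going one step into its proof and folding $\alpha \leq \alpha\Phi_\alpha/n$ into the bracket $\frac{\Phi_\alpha}{n}\bigl(-\frac{4}{3}\beta\epsilon\alpha + \frac{2}{3}\alpha^2\bigr)$ — runs into the same wall. After folding you would need $-\frac{4}{3}\beta\epsilon\alpha + \frac{2}{3}\alpha^2 + \alpha \leq \alpha^2$, equivalently $1 - \frac{4}{3}\beta\epsilon \leq \frac{1}{3}\alpha$; but the hypothesis $\alpha \leq \epsilon\beta/3 \leq 1/16$ forces $\beta\epsilon \leq 3/16$, so the left side is at least $3/4$ while $\frac{1}{3}\alpha \leq 1/48$, and the inequality fails by a large margin. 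The $\Theta(\alpha)$ leftover is intrinsic to Lemma~\ref{lem:phi1}'s split at $i = 3n/4$: for the tail $i \geq 3n/4$ the estimate $e^{\alpha y_i} \leq 1$ is paired with $\sum_{i\geq 3n/4}p_i \approx 1/4+\epsilon$ rather than $O(1/n)$, so the resulting term $\beta(t)(e^\alpha - 1 - \alpha/4) + \cdots$ is genuinely linear in $\alpha$ and cannot be matched against the $\alpha^2$-scale target for any $\Phi_\alpha$ in the range $[n, n/(\beta\epsilon))$ your case distinction leaves open.

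The paper's own proof sidesteps the split entirely. It returns to Lemma~\ref{lem:phi_basis} and applies Chebyshev's sum inequality over the \emph{full} index range: since $p_i$ is non-decreasing, $e^{\alpha y_i}$ non-increasing, and $\sum_i p_i = 1$, one has $\sum_i p_i e^{\alpha y_i} \leq \frac{1}{n}\Phi_\alpha$, so every $p_i$ may be replaced by $1/n$. The linear term then cancels exactly inside the bracket — $p_i(e^\alpha - 1) - \alpha/n$ becomes $\frac{1}{n}(e^\alpha - 1 - \alpha) = O(\alpha^2/n)$ by Taylor, and likewise for the deletion piece $\frac{1}{n}(e^{-\alpha}-1+\alpha)$ — so there is no $\alpha$-order residual to absorb, and summing against $e^{\alpha y_i}$ gives $\Phi_\alpha\alpha^2/n$ directly. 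Tellingly, this proof does not invoke the hypothesis $y_{3n/4}(t) \leq 0$ at all, which confirms that routing through the $3n/4$-split machinery of Lemma~\ref{lem:phi1} is the wrong direction for this corollary.
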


\begin{proof}
We use Lemma \ref{lem:phi_basis} and continue the calculation:

\begin{eqnarray*}
& & E[\Delta \Phi_\alpha \;|\; x(t)] \nonumber\\
& \leq & \sum_{i=1}^n e^{\alpha y_i} \left( \beta(t) \left(p_i (e^{\alpha} - 1) - \frac{\alpha}{n}\right) + (1-\beta(t)) \frac{1}{n} (e^{-\alpha} - 1 + \alpha) + \frac{2 \alpha^2}{n^2} \right)\\
& \leq & \sum_{i=1}^n \frac{e^{\alpha y_i}}{n} \left( \beta(t) (e^{\alpha} - 1 - \alpha) + (1-\beta(t)) (e^{-\alpha} - 1 + \alpha) + \frac{2 \alpha^2}{n} \right) \nonumber\\
& \leq & \Phi_\alpha \cdot \frac{\alpha^2}{n} 
\end{eqnarray*}

The second inequality, for which all $p_i$ are set to $1/n$, holds because the $p_i$ are non-decreasing and the $e^{\alpha y_i}$ are non-increasing.
The last inequality makes use of the Taylor series.
\end{proof}

\begin{lemma}[similar to a part of Lemma A.6 in \cite{DBLP:conf/icalp/TalwarW14}] \label{lem:psi_basis}
If, for a constant $\beta$, $\beta(t) \geq \beta$ for all $t$ and if $\alpha \leq \frac{\epsilon \cdot \beta}{3} \leq \frac{1}{16}$, then
$$
E[\Delta \Psi_\alpha \;|\; x(t)] \leq \sum_{i=1}^n e^{-\alpha y_i} \left( \beta(t) (p_i (e^{-\alpha} - 1) + \alpha/n) + (1-\beta(t)) \frac{1}{n} (e^\alpha-1-\alpha) + \frac{2\alpha^2}{n^2} \right).
$$
\end{lemma}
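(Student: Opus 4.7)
The plan is to mirror the proof of \cref{lem:phi_basis} step by step, swapping the roles of the two exponential branches. First I expand $E[\Delta\Psi_\alpha \mid x(t)] = \sum_{i=1}^n e^{-\alpha y_i}\cdot R_i$, where $R_i$ collects the four conditional factors $e^{-\alpha(1-1/n)}-1$, $e^{\alpha/n}-1$, $e^{\alpha(1-1/n)}-1$, $e^{-\alpha/n}-1$ with weights $\beta(t)p_i$, $\beta(t)(1-p_i)$, $(1-\beta(t))q_i$, $(1-\beta(t))(1-q_i)$ respectively (exactly the four cases for $\Delta\hat\Psi_\alpha^i$ already tabulated above the statement). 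Factoring $e^{\alpha/n}$ out of the two insertion branches and $e^{-\alpha/n}$ out of the two deletion branches yields
\begin{equation*}
R_i = \beta(t)\bigl[p_i e^{\alpha/n}(e^{-\alpha}-1) + e^{\alpha/n}-1\bigr] + (1-\beta(t))\bigl[q_i e^{-\alpha/n}(e^{\alpha}-1) + e^{-\alpha/n}-1\bigr].
\end{equation*}

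Next I strip the exponential prefactors from the $p_i$- and $q_i$-terms. Since $e^{\alpha/n}>1$ and $e^{-\alpha}-1<0$, we have $p_i e^{\alpha/n}(e^{-\alpha}-1)\leq p_i(e^{-\alpha}-1)$; since $e^{-\alpha/n}<1$ and $e^{\alpha}-1>0$, we have $q_i e^{-\alpha/n}(e^\alpha-1)\leq q_i(e^\alpha-1)$. I then want to replace $q_i$ by $1/n$ after summing against $e^{-\alpha y_i}$. Unlike the $\Phi$-case, I cannot appeal directly to pointwise monotonicity: here the coefficient $e^\alpha-1$ is positive and $e^{-\alpha y_i}$ is non-decreasing in $i$, so the signs are reversed from the $\Phi$-proof. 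Instead I argue globally: when $s$ bins are empty, those bins are exactly the highest-indexed ones and each satisfies $e^{-\alpha y_i} = e^{\alpha m/n}$, which dominates $e^{-\alpha y_j}$ for every non-empty $j$. Using this upper bound on the $n-s$ non-empty summands gives the identity $\sum_i q_i e^{-\alpha y_i} \leq \tfrac{1}{n}\sum_i e^{-\alpha y_i}$ after a short algebraic rearrangement, which is exactly the substitution required.

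Finally I clean up the ``baseline'' piece $\beta(t)(e^{\alpha/n}-1) + (1-\beta(t))(e^{-\alpha/n}-1)$. Applying the Taylor expansion $e^{\pm\alpha/n}-1 = \pm\alpha/n + \tfrac{\alpha^2}{2n^2} + O(\alpha^3/n^3)$, which is safe since $\alpha/n \leq 1/(16n)$, this becomes $\beta(t)\alpha/n - (1-\beta(t))\alpha/n + 2\alpha^2/n^2$ once the cubic remainder is absorbed. Regrouping $\beta(t)\alpha/n$ with $\beta(t)p_i(e^{-\alpha}-1)$ and $-(1-\beta(t))\alpha/n$ with $(1-\beta(t))\tfrac{1}{n}(e^\alpha-1)$ produces the per-bin expression $\beta(t)\bigl(p_i(e^{-\alpha}-1) + \alpha/n\bigr) + (1-\beta(t))\tfrac{1}{n}(e^\alpha - 1 - \alpha) + 2\alpha^2/n^2$, which is the stated bound. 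The only genuinely new step compared to \cref{lem:phi_basis} is the $q_i \mapsto 1/n$ reduction; I expect that the sign flip (relative to the $\Phi$-proof) is the main technical subtlety, and the rest is bookkeeping parallel to the earlier argument.
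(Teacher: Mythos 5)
Your proof is correct and follows the paper's route almost line for line: expand the four branches, factor out $e^{\pm\alpha/n}$, discard those prefactors by sign considerations, replace $q_i$ by $1/n$, then Taylor-expand the baseline and regroup. The one place you depart is the $q_i\mapsto 1/n$ step, and your reasoning there for why a departure is needed is slightly off: the paper does \emph{not} use a pointwise bound in the $\Phi$-case either (for empty bins $q_i=0<1/n$ and $c_i<0$, so $q_i c_i > \tfrac1n c_i$ pointwise), but a Chebyshev-type sum inequality --- $q_i$ is non-increasing in $i$ while the factor it multiplies is non-decreasing, and $\sum_i q_i = 1$, giving $\sum_i q_i c_i \leq \tfrac1n\sum_i c_i$. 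The same two monotonicities hold here: $q_i$ is still non-increasing, and $(e^\alpha-1)e^{-\alpha y_i}$ is non-decreasing (positive constant times a non-decreasing factor, mirroring the $\Phi$-case where a negative constant multiplies a non-increasing factor), so the paper applies the identical argument. That said, your explicit ``global'' computation --- writing the empty bins as the top-indexed block with $q_i=0$ and $e^{-\alpha y_i}=e^{\alpha m/n}$, then reducing $\frac{1}{n-s}\sum_{j\text{ non-empty}} e^{-\alpha y_j} \leq \tfrac1n\sum_i e^{-\alpha y_i}$ to the observation that the left side is an average of terms each at most $e^{\alpha m/n}$ --- is correct and arguably more self-contained, just not forced on you by any sign reversal. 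The Taylor/regrouping bookkeeping at the end matches the paper's.
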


\begin{proof}
We upper-bound the expected potential difference.

\begin{eqnarray}
& & E[\Delta \Psi_\alpha \;|\; x(t)] \nonumber\\
& = & \sum_{i=1}^n \beta(t) p_i \cdot (e^{-\alpha (1-1/n)}-1) \cdot e^{-\alpha y_i} + \beta(t) (1-p_i) \cdot (e^{-\alpha (-1/n)}-1) \cdot e^{-\alpha y_i} \nonumber\\
&&+ (1-\beta(t)) q_i \cdot (e^{-\alpha (-1+1/n)}-1) \cdot e^{-\alpha y_i} + (1-\beta(t)) (1-q_i) \cdot (e^{-\alpha (1/n)}-1) \cdot e^{-\alpha y_i} \nonumber\\
& = & \sum_{i=1}^n \beta(t) e^{-\alpha y_i} \left( p_i \cdot e^{\alpha/n} \cdot  (e^{-\alpha} - 1) + e^{\alpha/n}-1 \right) \nonumber\\
&&+ (1-\beta(t)) e^{-\alpha y_i} \left( q_i \cdot e^{-\alpha/n} \cdot (e^\alpha-1) + e^{-\alpha/n}-1 \right) \nonumber\\
& \leq & \sum_{i=1}^n \beta(t) e^{-\alpha y_i} \left( p_i (e^{-\alpha} - 1) + e^{\alpha/n}-1 \right) + (1-\beta(t)) e^{-\alpha y_i} \left( q_i (e^\alpha-1) + e^{-\alpha/n}-1 \right) \nonumber\\
& = & \sum_{i=1}^n e^{-\alpha y_i} \left( \beta(t) p_i (e^{-\alpha} - 1) + (1-\beta(t)) q_i (e^\alpha-1) + e^{-\alpha/n}-1 + \beta(t) (e^{\alpha/n} - e^{-\alpha/n}) \right) \nonumber\\
& \leq & \sum_{i=1}^n e^{-\alpha y_i} \left( \beta(t) p_i (e^{-\alpha} - 1) + (1-\beta(t)) q_i (e^\alpha-1) - \frac{\alpha}{n}  + \frac{2 \beta(t) \alpha}{n} + \frac{2\alpha^2}{n^2} \right) \nonumber\\
& \leq & \sum_{i=1}^n e^{-\alpha y_i} \left( \beta(t) p_i (e^{-\alpha} - 1) + (1-\beta(t)) \frac{1}{n} (e^\alpha-1) - \frac{\alpha}{n}  + \frac{2 \beta(t) \alpha}{n} + \frac{2\alpha^2}{n^2} \right) \label{eq:psi_altered}\\
& = & \sum_{i=1}^n e^{-\alpha y_i} \left( \beta(t) (p_i (e^{-\alpha} - 1) + \alpha/n) + (1-\beta(t)) \frac{1}{n} \left(e^\alpha - 1 - \alpha\right) + \frac{2\alpha^2}{n^2} \right) \nonumber
\end{eqnarray}

We use the Taylor series of the exponential function, and Inequality (\ref{eq:psi_altered}), replacing $q_i$ by $1/n$, holds because the $q_i$ are non-increasing and the $q^{-\alpha y_i}$ are non-decreasing.
\end{proof}

\begin{lemma}[similar to part of Lemma A.6 in \cite{DBLP:conf/icalp/TalwarW14}] \label{lem:psi1}
If, for a constant $\beta$, $\beta(t) \geq \beta$ for all $t$, if $\alpha \leq \frac{\epsilon \cdot \beta}{3} \leq \frac{1}{16}$ and if $y_{n/4} \geq 0$, then $E[\Delta \Psi_\alpha \;|\; x(t)] \leq - \frac{\beta \cdot \alpha \cdot \epsilon}{n} \cdot \Psi_\alpha + \alpha$.
\end{lemma}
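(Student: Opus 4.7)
The proof will mirror that of \cref{lem:phi1} step by step, exploiting the symmetry between $\Phi_\alpha$ and $\Psi_\alpha$ and using the hypothesis $y_{n/4} \geq 0$ as the dual of $y_{3n/4} \leq 0$. Starting from the bound supplied by \cref{lem:psi_basis}, I split the sum at index $n/4$. For $i \leq n/4$ the hypothesis gives $y_i \geq y_{n/4} \geq 0$, hence $e^{-\alpha y_i} \leq 1$; dropping the non-positive $\beta(t)\,p_i(e^{-\alpha}-1)$ contribution only weakens the upper bound, and summing the remaining bracketed terms over the $n/4$ indices produces at most a constant multiple of $\alpha$, contributing the additive $+\alpha$ on the right-hand side exactly as the $i \geq 3n/4$ block produces $+\alpha$ in the proof of \cref{lem:phi1}.

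The main drift comes from the indices $i > n/4$. By the same extremum/rearrangement argument used after (\ref{eq:A5_0}) in the proof of \cref{lem:phi1}, with the monotonicity directions flipped consistently (here both $p_i$ and $e^{-\alpha y_i}$ are non-decreasing in $i$, whereas in \cref{lem:phi1} $e^{\alpha y_i}$ was non-increasing), the weighted sum over $i > n/4$ is bounded above by replacing $e^{-\alpha y_i}$ with the uniform value $\tfrac{4\Psi_\alpha}{3n}$ throughout that range. Combined with the bound $\sum_{i>n/4} p_i \geq \tfrac34 + \epsilon$ obtained as the complement of \cref{obs:epsilon}, this turns the $i > n/4$ contribution into
\begin{equation*}
\frac{4\Psi_\alpha}{3n}\Big[\beta(t)\big((\tfrac{3}{4}+\epsilon)(e^{-\alpha}-1)+\tfrac{3\alpha}{4}\big)+(1-\beta(t))\tfrac{3}{4}(e^\alpha-1-\alpha)+\tfrac{3\alpha^2}{2n}\Big],
\end{equation*}
which simplifies (after distributing the prefactor $\tfrac{4}{3}$) to
\begin{equation*}
\frac{\Psi_\alpha}{n}\Big[\beta(t)\big((1+\tfrac{4\epsilon}{3})(e^{-\alpha}-1)+\alpha\big)+(1-\beta(t))(e^\alpha-1-\alpha)+\tfrac{2\alpha^2}{n}\Big],
\end{equation*}
the exact symmetric analogue of the bracket obtained in \cref{lem:phi1}.

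The remainder is routine algebra. Taylor-expand $e^{\pm\alpha}=1\pm\alpha+\alpha^2/2+O(\alpha^3)$; the key cancellation is $-\beta(t)\alpha(1+\tfrac{4\epsilon}{3})+\beta(t)\alpha=-\tfrac{4}{3}\beta(t)\epsilon\alpha$, so the whole bracket collapses to $-\tfrac{4}{3}\beta(t)\epsilon\alpha+O(\alpha^2)$. Under the hypothesis $\alpha \leq \epsilon\beta/3$ the $O(\alpha^2)$ noise is at most $\tfrac{1}{3}\beta(t)\epsilon\alpha$, leaving the bracket at most $-\beta(t)\epsilon\alpha \leq -\beta\epsilon\alpha$; multiplying by $\Psi_\alpha/n$ and adding the $+\alpha$ contribution from $i \leq n/4$ yields exactly the claimed inequality. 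The main obstacle is justifying the extremum/substitution step on $i > n/4$: because the $p_i$-dependent term in the bracket carries the \emph{negative} coefficient $(e^{-\alpha}-1)$, one must carefully verify that maximizing $\sum_{i>n/4} e^{-\alpha y_i}\,(\text{bracket})$ over sorted load vectors with $\Psi_\alpha$ fixed indeed produces the uniform substitution $e^{-\alpha y_i} = \tfrac{4\Psi_\alpha}{3n}$. The argument is symmetric to the extremum step in \cref{lem:phi1} precisely because the monotonicity flip ($e^{-\alpha y_i}$ non-decreasing vs.\ $e^{\alpha y_i}$ non-increasing) is cancelled by the sign flip of the $p_i$-coefficient, so the extremal configuration and the direction of the inequality are preserved.
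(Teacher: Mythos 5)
Your high-level plan (split at $n/4$, bound the $i \leq n/4$ block via $e^{-\alpha y_i} \leq 1$, use a rearrangement/Chebyshev argument on the $i > n/4$ block, then Taylor-expand) matches the paper's strategy. However, your specific substitution $e^{-\alpha y_i} \mapsto \tfrac{4\Psi_\alpha}{3n}$ for $i > n/4$ is not justified, and your claimed symmetry with \cref{lem:phi1} breaks down at precisely the point you flag as the ``main obstacle.''

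Here is the issue. For the $i > n/4$ block, Chebyshev (both $p_i$ and $e^{-\alpha y_i}$ non-decreasing, the $p_i$-coefficient $(e^{-\alpha}-1)$ negative) gives
\begin{equation*}
\sum_{i>n/4} e^{-\alpha y_i}\,\text{bracket}_i
\;\leq\;
\frac{4}{3n}\,\Bigl(\textstyle\sum_{i>n/4} e^{-\alpha y_i}\Bigr)\,\Bigl(\textstyle\sum_{i>n/4} \text{bracket}_i\Bigr),
\end{equation*}
and the averaged bracket on the right is \emph{negative} (this is exactly what the paper notes). To turn this into an upper bound involving $\Psi_\alpha$ you therefore need a \emph{lower} bound on $\sum_{i>n/4} e^{-\alpha y_i}$, namely $\sum_{i>n/4} e^{-\alpha y_i} \geq \Psi_\alpha - n/4$ (which holds since $e^{-\alpha y_i} \leq 1$ for $i \leq n/4$). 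Your substitution instead replaces $\sum_{i>n/4} e^{-\alpha y_i}$ by the \emph{upper} bound $\Psi_\alpha$; since the bracket is negative, this makes the expression \emph{smaller}, i.e.\ it is not a valid upper bound. (Concretely, take $y_i \equiv 0$: then $\sum_{i>n/4} e^{-\alpha y_i} = 3n/4$ but $\tfrac{4\Psi_\alpha}{3n} = \tfrac{4}{3}$, and your claimed bound can sit strictly below the actual sum for $\epsilon$ near its maximal value $3/16$.) The symmetry you invoke does work for the Chebyshev step itself (the monotonicity flip and sign flip cancel), but it does \emph{not} survive the final ``bound the remaining sum by the potential'' step: in \cref{lem:phi1} the coefficient is positive so $\sum_{i<3n/4} e^{\alpha y_i} \leq \Phi_\alpha$ is the right direction, whereas here the coefficient is negative and you need the complementary estimate $\sum_{i>n/4} e^{-\alpha y_i} \geq \Psi_\alpha - n/4$. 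This is exactly the asymmetry the paper's proof handles: it keeps $e^{-\alpha y_i}$ in place, replaces only $p_i$ by its averaged value $(1+\tfrac{4\epsilon}{3})/n$, verifies the resulting per-index bracket is negative, and only then multiplies by the lower bound $\Psi_\alpha - n/4$ (absorbing the $n/4$ into an additive $O(\alpha^2)$ term). The missing $-n/4$ correction in your argument contributes an $O(\alpha)$ error with the wrong sign, so your intermediate inequality is genuinely false as stated, even though the lemma's conclusion would still follow once the substitution step is repaired along the paper's lines.
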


\begin{proof}
We use Lemma \ref{lem:psi_basis} and continue the calculation:
\begin{eqnarray}
& & E[\Delta \Psi_\alpha \;|\; x(t)] \nonumber\\
& \leq & \sum_{i=1}^n e^{-\alpha y_i} \left( \beta(t) (p_i (e^{-\alpha} - 1) + \alpha/n) + (1-\beta(t)) \frac{1}{n} \left(e^\alpha - 1 - \alpha\right) + \frac{2\alpha^2}{n^2} \right) \nonumber\\
& \leq & \sum_{i=1}^n e^{-\alpha y_i} \left( \beta(t) \left(p_i (e^{-\alpha} - 1) + \frac{\alpha}{n}\right) + (1-\beta(t)) \frac{\alpha^2}{n} + \frac{2\alpha^2}{n^2} \right) \nonumber\\
& \leq & \sum_{i \geq n/4} e^{-\alpha y_i} \left( \beta(t) \left(p_i (e^{-\alpha} - 1) + \frac{\alpha}{n}\right) + (1-\beta(t)) \frac{\alpha^2}{n} + \frac{2\alpha^2}{n^2} \right) \nonumber\\
&& + \sum_{i < n/4} e^{0} \left( \beta(t) \left(p_i (e^{-\alpha} - 1) + \frac{\alpha}{n}\right) + (1-\beta(t)) \frac{\alpha^2}{n} + \frac{2\alpha^2}{n^2} \right) \nonumber\\
& \overset{(Obs.~\ref{obs:epsilon})}{\leq} & \sum_{i \geq n/4} \frac{e^{-\alpha y_i}}{n} \left( \beta(t) \left(1 + \frac{4}{3} \epsilon\right) (e^{-\alpha} - 1) + \beta(t) \alpha + (1-\beta(t)) \alpha^2 + \frac{2\alpha^2}{n} \right) \label{eq:A6_1}\\
&& + \frac{\beta(t)}{4} (e^{-\alpha} - 1 + \alpha) + \frac{1-\beta(t)}{4} \alpha^2 + \frac{\alpha^2}{2n}\nonumber\\
& \leq & \frac{\Psi_\alpha - n/4}{n} \left( \beta(t) \frac{4}{3} \epsilon (e^{-\alpha} - 1) + \frac{\alpha^2 + \alpha^3}{2}\right) + \frac{3}{4} \alpha^2 \label{eq:A6_2}\\
& \leq & \frac{\Psi_\alpha}{n} \left( \beta(t) \frac{4}{3} \epsilon (e^{-\alpha} - 1) + \alpha^2 \right) + \alpha \nonumber\\
& \leq & \frac{\Psi_\alpha}{n} \left( - \beta(t) \epsilon \alpha \right) + \alpha \nonumber\\
& \leq & \frac{\Psi_\alpha}{n} \left( - \beta \epsilon \alpha \right) + \alpha \nonumber
\end{eqnarray}

The tricky part is Eq. (\ref{eq:A6_1}) and (\ref{eq:A6_2}).
Note that $(e^{-\alpha}-1)$ is negative, and $p_i$ and $e^{-\alpha y_i}$ are weakly increasing.
So, to bound the expression, we can assume that the $p_i$ are uniform, and from $\sum_{i \geq n/4} p_i \geq 3/4 + \epsilon$ it follows that then $p_i \geq (3/4 + \epsilon) / (3/4 \cdot n) = (1 + 4/3 \cdot \epsilon) \cdot 1/n$ which implies Eq. (\ref{eq:A6_1}).
The resulting term $\frac{1}{n} \left( \beta(t) \left(1 + \frac{4}{3} \epsilon\right) (e^{-\alpha} - 1) + \beta(t) \alpha + (1-\beta(t)) (e^\alpha-1-\alpha) + \frac{2\alpha^2}{n} \right)$ is negative as well.
Since $\sum_{i<n/4} e^{-\alpha y_i} < n/4$ (because of $y_{n/4} \geq 0$), it follows that $\sum_{i \geq n/4} e^{-\alpha y_i} \geq \Psi_\alpha - n/4$ and thus Eq. (\ref{eq:A6_2}).

The second to last inequality uses $\epsilon \geq 3\alpha/\beta \geq 3\alpha/\beta(t)$.
\end{proof}

\begin{corollary} \label{psi_corollary}
If, for a constant $\beta$, $\beta(t) \geq \beta$ for all $t$, if $\alpha \leq \frac{\epsilon \cdot \beta}{3} \leq \frac{1}{16}$ and if $y_{n/4} \geq 0$, then $E[\Delta \Psi_\alpha \;|\; x(t)] \leq \Psi_\alpha \frac{\alpha^2}{n}$.
\end{corollary}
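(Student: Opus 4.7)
The plan is to adapt the argument from the proof of Corollary \ref{phi_corollary} to the $\Psi_\alpha$ potential, starting from the per-step upper bound established in Lemma \ref{lem:psi_basis}. The main step is to replace $p_i$ in that bound by its uniform average $1/n$. Unlike the $\Phi_\alpha$-case, here the coefficient $(e^{-\alpha}-1)$ multiplying $p_i\,e^{-\alpha y_i}$ is \emph{negative}, so in order to obtain an upper bound on the sum we must \emph{lower} bound $\sum_i p_i e^{-\alpha y_i}$, rather than upper bound it (as was done in the $\Phi_\alpha$-proof).

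Both sequences $(p_i)_{i=1}^n$ and $(e^{-\alpha y_i})_{i=1}^n$ are non-decreasing: the former by construction, the latter because the loads $y_i$ are sorted non-increasingly. Chebyshev's sum inequality for co-monotone sequences then yields
$$
\sum_{i=1}^n p_i\, e^{-\alpha y_i} \;\geq\; \frac{1}{n}\Bigl(\sum_{i=1}^n p_i\Bigr)\Bigl(\sum_{i=1}^n e^{-\alpha y_i}\Bigr) \;=\; \frac{\Psi_\alpha}{n}.
$$
Multiplying by $(e^{-\alpha}-1) < 0$ flips the inequality, giving $\sum_i p_i (e^{-\alpha}-1)\, e^{-\alpha y_i} \leq \tfrac{1}{n}(e^{-\alpha}-1)\,\Psi_\alpha$. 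Substituting into Lemma \ref{lem:psi_basis} and factoring $\Psi_\alpha/n$ out leaves the bracket $\beta(t)(e^{-\alpha}-1+\alpha) + (1-\beta(t))(e^\alpha - 1 - \alpha) + 2\alpha^2/n$.

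The concluding step is a routine Taylor estimate: for $\alpha \leq 1/16$ one has $e^{-\alpha}-1+\alpha \leq \alpha^2/2$ (alternating series dominated by its leading term) and $e^\alpha - 1 - \alpha \leq \tfrac{1}{2}\alpha^2 e^\alpha$, so the convex combination of these two quantities, together with the lower-order term $2\alpha^2/n$, is bounded by $\alpha^2$ for $n$ sufficiently large. This delivers the claimed inequality $E[\Delta \Psi_\alpha \mid x(t)] \leq \Psi_\alpha\,\alpha^2/n$.

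The only point that demands real care is the Chebyshev step: one must check that the two sequences are genuinely co-monotone (rather than oppositely monotone, as in the $\Phi_\alpha$-proof where $e^{\alpha y_i}$ was non-increasing), and then track the sign reversal coming from the negative prefactor $(e^{-\alpha}-1)$, which is precisely what converts a lower bound on $\sum_i p_i e^{-\alpha y_i}$ into the required upper bound on the weighted sum. Apart from this inversion, the proof is essentially a mirror image of Corollary \ref{phi_corollary}'s proof, with the roles of $p_i \leq 1/n$ and $p_i \geq 1/n$ swapped.
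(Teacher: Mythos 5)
Your proof is correct and recovers exactly the calculation the paper elides: the paper's proof of \cref{psi_corollary} only displays the first inequality from \cref{lem:psi_basis} and then jumps directly to the conclusion, but the parallel proof of \cref{phi_corollary} (``the $p_i$ are non-decreasing and the $e^{\alpha y_i}$ are non-increasing\ldots The last inequality makes use of the Taylor series'') makes clear that replacing $p_i$ by $1/n$ via a monotonicity/Chebyshev argument and then Taylor-expanding is the intended route. You correctly flag the one non-obvious difference from the $\Phi_\alpha$ case --- here $(p_i)$ and $(e^{-\alpha y_i})$ are \emph{co}-monotone, so Chebyshev gives a lower bound on $\sum_i p_i e^{-\alpha y_i}$, which the negative prefactor $e^{-\alpha}-1$ then turns into the needed upper bound --- so your argument matches the paper's approach with the sign bookkeeping done carefully.
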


\begin{proof}
We use Lemma \ref{lem:psi_basis} and continue the calculation:
\begin{eqnarray*}
& & E[\Delta \Psi_\alpha \;|\; x(t)] \\
& \leq & \sum_{i=1}^n e^{-\alpha y_i} \left( \beta(t) (p_i (e^{-\alpha} - 1) + \alpha/n) + (1-\beta(t)) \frac{1}{n} (e^\alpha-1-\alpha) + \frac{2\alpha^2}{n^2} \right)\\
& \leq & \Psi_\alpha \frac{\alpha^2}{n}
\end{eqnarray*}
\end{proof}

\begin{lemma}[similar to Lemma A.7 in \cite{DBLP:conf/icalp/TalwarW14}] \label{lem:phi2}
Suppose $\beta(t) \geq \beta$, $y_{3n/4} > 0$ and $E[\Delta \Phi_\alpha \;|\; x(t)] \geq - \frac{\alpha \cdot \beta \cdot \epsilon}{4 \cdot n} \cdot \Phi_\alpha$.
Then either $\Phi_\alpha < \frac{\epsilon}{4} \cdot \Psi_\alpha$ or $\Gamma_\alpha \leq \frac{360 \cdot n}{\epsilon^8 \cdot \beta^4}$.
\end{lemma}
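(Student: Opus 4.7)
}

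The plan is to prove the contrapositive structure directly: assume $\Phi_\alpha \geq (\epsilon/4) \cdot \Psi_\alpha$ (the negation of the first alternative) and derive the explicit bound $\Gamma_\alpha \leq 360 n / (\epsilon^8 \beta^4)$. Under this assumption, $\Gamma_\alpha = \Phi_\alpha + \Psi_\alpha \leq \Phi_\alpha \cdot (1 + 4/\epsilon) \leq 5\Phi_\alpha/\epsilon$, so it suffices to bound $\Phi_\alpha$ by $O\!\left(n/(\epsilon^7 \beta^4)\right)$, at which point the constant $360$ is a matter of arithmetic.

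First, I would revisit the raw drift bound from \cref{lem:phi_basis} and reorganize it using the identity $\sum_i (p_i - 1/n) e^{\alpha y_i} = \sum_i p_i e^{\alpha y_i} - \Phi_\alpha/n$. Combining $\beta(t)(e^\alpha-1)\sum p_i e^{\alpha y_i}$ with $-\beta(t)\alpha \Phi_\alpha/n$, using $e^\alpha - 1 = \alpha + g(\alpha)$ with $g(\alpha) \leq \alpha^2$ for small $\alpha$, yields
\begin{equation*}
\Exp[\Delta \Phi_\alpha \mid x(t)]
\leq
\beta(t)\alpha \sum_i (p_i - 1/n)\, e^{\alpha y_i}
+ O(\alpha^2/n)\cdot \Phi_\alpha.
\end{equation*}
The leading term is the \emph{covariance-like} expression that captures Greedy[2]'s bias. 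Note that $p_i - 1/n = 2(i - (n+1)/2)/n^2$, and since $e^{\alpha y_i}$ is non-increasing in $i$, this term is non-positive.

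The second and most technical step will be to quantify how negative this covariance term must be in the regime $y_{3n/4} > 0$. I would split the sum at index $3n/4$: for $i \leq 3n/4$ all $y_i \geq y_{3n/4} > 0$ hence $e^{\alpha y_i} \geq 1$, so these bins already contribute $\Phi_\alpha \geq 3n/4$ "for free"; for $i > 3n/4$ we have $e^{\alpha y_i} < 1$ but $p_i - 1/n \geq p_{3n/4+1} - 1/n = \Theta(1/n)$. Using these bounds together with the centering identity $\sum_i (p_i - 1/n) = 0$, I can lower-bound $-\sum_i (p_i - 1/n) e^{\alpha y_i}$ by a quantity of order $(\Phi_\alpha - c n \cdot \Psi_\alpha^{1/?})/n$ for some constants. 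This is where the various $\epsilon^{-1}$ and $\beta^{-1}$ factors accumulate, as we must repeatedly trade error terms against the small negative drift condition $-\alpha \beta \epsilon / (4n) \cdot \Phi_\alpha$.

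Finally, combining the lower bound on $-\Exp[\Delta \Phi_\alpha]$ with the hypothesis $\Exp[\Delta \Phi_\alpha] \geq -\alpha \beta \epsilon \Phi_\alpha / (4n)$ and the assumption $\Phi_\alpha \geq (\epsilon/4)\Psi_\alpha$ converts every term involving $\Psi_\alpha$ into one involving $\Phi_\alpha$ (scaled by $4/\epsilon$). Rearranging then forces an inequality of the form $\Phi_\alpha \cdot \Theta(\alpha \beta \epsilon / n) \leq n \cdot \mathrm{poly}(1/\epsilon, 1/\beta)$, yielding $\Phi_\alpha = O(n / (\epsilon^7 \beta^4))$ (with $\alpha = \Theta(\beta)$ from the earlier hypothesis $\alpha \leq \beta/16$), and hence $\Gamma_\alpha \leq 360 n / (\epsilon^8 \beta^4)$ after bookkeeping.

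The main obstacle is the careful bookkeeping of the polynomial factors: the stated exponents $\epsilon^{-8}$ and $\beta^{-4}$ suggest that the covariance lower bound loses factors of $\epsilon$ in at least two places (once in splitting at $3n/4$, once in re-expressing $\Psi_\alpha$ in terms of $\Phi_\alpha$), each later squared when combined with the quadratic error terms in the drift. Getting the split right, and using the correct form of $p_i$ rather than just "$p_i$ is increasing", will be essential to avoid losing additional factors of $n$ in the final bound.
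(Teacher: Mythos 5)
Your high-level strategy (contrapositive, isolate the covariance term $\sum_i (p_i-1/n)e^{\alpha y_i}$, split the sum) is in the right spirit, but the proposal is missing the actual mechanism that closes the argument, and you flag this yourself with the question mark in ``$\Psi_\alpha^{1/?}$''. The paper's proof does not work with $\Psi_\alpha$ to some fractional power; it introduces the auxiliary quantity $B = \sum_i \max\{0, y_i\}$ (total mass above average) and plays two bounds in terms of $e^{\alpha B/n}$ against each other. Concretely: after combining the drift bound from \cref{lem:phi_basis} with the hypothesis $\Exp[\Delta\Phi_\alpha] \geq -\tfrac{\alpha\beta\epsilon}{4n}\Phi_\alpha$, one gets the structural inequality $2\beta\epsilon\,\Phi_\alpha \leq \Phi_{\alpha,>n/3}$. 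Then \emph{(i)} because the bins at indices $> n/3$ have $y_i \leq y_{n/3} \leq 3B/n$, one has $\Phi_{\alpha,>n/3} \leq \tfrac{2n}{3}e^{3\alpha B/n}$, and \emph{(ii)} because $y_{3n/4}>0$ forces the entire negative mass $B$ onto the bottom $n/4$ bins, convexity gives $\Psi_\alpha \geq \tfrac{n}{4}e^{4\alpha B/n}$. Chaining these with $\Phi_\alpha \geq (\epsilon/4)\Psi_\alpha$ yields a self-referential inequality in $e^{\alpha B/n}$ (a factor $e^{4\alpha B/n}$ against a factor $e^{3\alpha B/n}$) that resolves to $e^{\alpha B/n} \leq 6/(\beta\epsilon^2)$. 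Plugging this back into $\Gamma_\alpha \leq \tfrac{5}{\epsilon}\Phi_\alpha \leq \tfrac{5n}{3\beta\epsilon^2}\,(e^{\alpha B/n})^3$ produces the constant $360$ and the exponents $\epsilon^{-8}\beta^{-4}$. Your accounting of where these exponents come from (``two places each later squared'') is therefore also off: it is a \emph{cube} of the $e^{\alpha B/n}$ bound times a linear prefactor, i.e.\ $\epsilon^{-2}\beta^{-1}\cdot(\epsilon^{-2}\beta^{-1})^3$.

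A secondary issue: you propose splitting the covariance sum at $3n/4$, but the useful split in the drift estimate is at $n/3$, because $n/3$ is exactly where the pigeonhole argument $y_{n/3} \leq 3B/n$ kicks in to control $\Phi_{\alpha,>n/3}$. The index $3n/4$ only enters via the hypothesis $y_{3n/4}>0$, and there its role is different: it guarantees the bottom $n/4$ bins soak up all of $B$, which is what makes the $\Psi_\alpha \geq \tfrac{n}{4}e^{4\alpha B/n}$ lower bound work. Without distinguishing these two roles, a single split at $3n/4$ will not give you both ingredients you need.
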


\begin{proof}
In the following we use that $p_i \leq \frac{1 - 4 \cdot \epsilon}{n}$ for $i \leq n/3$ and that $\sum p_i \cdot e^{\alpha \cdot y_i}$ is maximized when $\mathbf{p}$ is uniform.
We start from Lemma \ref{lem:phi_basis}.

\begin{eqnarray*}
& & E[\Delta \Phi_\alpha \;|\; x(t)] \\
& \leq & \sum_{i=1}^n e^{\alpha y_i} \left( \beta(t) (p_i (e^{\alpha} - 1) - \alpha/n) + (1-\beta(t)) \frac{1}{n} (e^{-\alpha} - 1 + \alpha) + \frac{2 \alpha^2}{n^2} \right)\\
& \leq & \sum_{i \leq n/3} e^{\alpha y_i} \left( \beta(t) \left(\frac{1-4 \epsilon}{n} (e^{\alpha} - 1) - \alpha/n\right) + (1-\beta(t)) \frac{1}{n} (e^{-\alpha} - 1 + \alpha) + \frac{2 \alpha^2}{n^2} \right)\\
&& + \sum_{i > n/3} e^{\alpha y_i} \left( \beta(t) \left(\frac{3}{2n} (e^{\alpha} - 1) - \alpha/n\right) + (1-\beta(t)) \frac{1}{n} (e^{-\alpha} - 1 + \alpha) + \frac{2 \alpha^2}{n^2} \right)\\
& \leq & \frac{\Phi_{\alpha, \leq n/3}}{n} \left( \beta(t) \left((1-4 \epsilon) (e^{\alpha} - 1) - \alpha\right) + (1-\beta(t)) (e^{-\alpha} - 1 + \alpha) + \frac{2 \alpha^2}{n} \right)\\
&& + \frac{\Phi_{\alpha, > n/3}}{n} \left( \beta(t) \left(\frac{3}{2} (e^{\alpha} - 1) - \alpha\right) + (1-\beta(t)) (e^{-\alpha} - 1 + \alpha) + \frac{2 \alpha^2}{n} \right)\\
& \leq & - \frac{3 \cdot \beta(t) \cdot \alpha \cdot \epsilon}{n} \cdot \Phi_\alpha + \frac{\alpha}{n} \cdot \Phi_{\alpha, > n/3}\\
& \leq & - \frac{3 \cdot \beta \cdot \alpha \cdot \epsilon}{n} \cdot \Phi_\alpha + \frac{\alpha}{n} \cdot \Phi_{\alpha, > n/3}
\end{eqnarray*}

Recall that we assume $E[\Delta \Phi_\alpha \;|\; x(t)] \geq - \frac{\alpha \cdot \beta \cdot \epsilon}{4 \cdot n} \cdot \Phi_\alpha$.

\begin{eqnarray*}
- \frac{\alpha \cdot \beta \cdot \epsilon}{4 \cdot n} \cdot \Phi_\alpha & \leq & - \frac{3 \cdot \alpha \cdot \beta \cdot \epsilon}{n} \cdot \Phi_\alpha + \frac{\alpha}{n} \cdot \Phi_{\alpha, > n/3}\\
\Rightarrow\;\;\;\; 2 \cdot \beta \cdot \epsilon \cdot \Phi_\alpha & \leq & \Phi_{\alpha, > n/3}
\end{eqnarray*}

If $\Phi_\alpha < \frac{\epsilon}{4} \cdot \Psi_\alpha$, we are done (see statement).
So, assume $\Phi_\alpha \geq \frac{\epsilon}{4} \cdot \Psi_\alpha$.
Since $\Phi_{\alpha, \geq n/3} \leq \frac{2 \cdot n}{3} \cdot e^{3 \cdot \alpha \cdot B / n}$ for $B = \sum_i \max\{0, y_i\} = \frac{||y||_1}{2}$ and since $y_{3n/4} > 0$ implies $\Psi_\alpha \geq \frac{n}{4} \cdot e^{4 \cdot \alpha \cdot B / n}$, we get:

\begin{eqnarray*}
\frac{n \cdot \epsilon}{16} \cdot e^{4 \cdot \alpha \cdot B / n} \leq \frac{\epsilon}{4} \cdot \Psi_\alpha \leq \Phi_\alpha & \leq & \Phi_{\alpha, > n/3} \cdot \frac{1}{2 \cdot \beta \cdot \epsilon} \leq \frac{n}{3 \cdot \beta \cdot \epsilon} \cdot e^{3 \cdot \alpha \cdot B / n}\\
\implies\;\;\;\; e^{\alpha \cdot B / n} \leq \frac{6}{\beta \cdot \epsilon^2}
\end{eqnarray*}

Finally, it follows that:
\begin{eqnarray*}
\Gamma_\alpha & = & \Phi_\alpha + \Psi_\alpha \leq \frac{5}{\epsilon} \cdot \Phi_\alpha \leq \frac{5}{\epsilon} \cdot \frac{n}{3 \cdot \beta \cdot \epsilon} \cdot e^{3 \cdot \alpha \cdot B / n} \leq \frac{5}{\epsilon} \cdot \frac{n}{3 \cdot \beta \cdot \epsilon} \cdot \left( \frac{6}{\beta \cdot \epsilon^2} \right)^3 = \frac{360 \cdot n}{\epsilon^8 \cdot \beta^4}
\end{eqnarray*}
\end{proof}

\begin{lemma}[similar to Lemma A.8 in \cite{DBLP:conf/icalp/TalwarW14}] \label{lem:psi2}
Suppose $\beta(t) \geq \beta$, $y_{n/4} < 0$ and $E[\Delta \Psi_\alpha \;|\; x(t)] \geq -\frac{\alpha \cdot \beta \cdot \epsilon}{4 \cdot n} \cdot \Psi_\alpha$.
Then either $\Psi_\alpha < \frac{\epsilon}{4} \cdot \Phi_\alpha$ or $\Gamma_\alpha \leq \frac{1040000 \cdot n}{\epsilon^8 \cdot \beta^4}$.
\end{lemma}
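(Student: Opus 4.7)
The proof will mirror Lemma~\ref{lem:phi2}, with $\Phi_\alpha$ and $\Psi_\alpha$ swapped and with symmetric thresholds: the cutoff index $n/3$ for the first-moment split becomes $2n/3$, and the pigeonhole index $3n/4$ becomes $n/4$. The asymmetry of the allocation probabilities $p_i = (2i-1)/n^2$ (increasing in $i$, with $p_i > 1/n$ for $i > n/2$) provides the ``extra'' restoring drift: in Lemma~\ref{lem:phi2} through the upper bound $p_i \leq (1-4\epsilon)/n$ for $i \leq n/3$, and here through the lower bound $p_i \geq (1+4\epsilon)/n$ for $i \geq 2n/3$ (both valid when $\epsilon \leq 1/12$). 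Since $e^{-\alpha}-1 < 0$, a \emph{lower} bound on $p_i$ indeed yields an \emph{upper} bound on $p_i(e^{-\alpha}-1)$, exactly what is needed when upper-bounding the right-hand side of Lemma~\ref{lem:psi_basis}.

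Starting from Lemma~\ref{lem:psi_basis}, I split the sum at index $2n/3$. For $i \geq 2n/3$ I use $p_i \geq (1+4\epsilon)/n$; for $i < 2n/3$ I use the trivial bound $p_i \geq 0$. Expanding $e^{\pm\alpha}-1 = \pm\alpha + O(\alpha^2)$ and $e^{\pm\alpha/n} = 1 \pm \alpha/n + O(\alpha^2/n^2)$, and absorbing quadratic remainders via the standing hypothesis $\alpha \leq \epsilon\beta/3$, the calculation should yield an inequality of the form
\begin{equation*}
E[\Delta\Psi_\alpha \mid x(t)] \;\leq\; -\frac{3\beta\alpha\epsilon}{n}\,\Psi_\alpha \;+\; \frac{2\alpha}{n}\,\Psi_{\alpha,<2n/3},
\end{equation*}
where $\Psi_{\alpha,<2n/3} := \sum_{i<2n/3} e^{-\alpha y_i}$. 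Combining this with the hypothesis $E[\Delta\Psi_\alpha] \geq -(\alpha\beta\epsilon)/(4n) \cdot \Psi_\alpha$ and rearranging delivers $\Psi_{\alpha,<2n/3} \geq c\,\beta\epsilon\,\Psi_\alpha$ for an explicit constant $c>0$, the analog of the intermediate bound $\Phi_{\alpha,>n/3} \geq 2\beta\epsilon\,\Phi_\alpha$ in Lemma~\ref{lem:phi2}.

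Next I perform the case split. If $\Psi_\alpha < (\epsilon/4)\Phi_\alpha$ the first conclusion holds; otherwise $\Psi_\alpha \geq (\epsilon/4)\Phi_\alpha$ and I establish the linear-in-$n$ bound on $\Gamma_\alpha$. For this I bound $\Psi_{\alpha,<2n/3}$ from above: a pigeonhole over the $n/3$ bins with indices $i > 2n/3$, all satisfying $y_i \leq y_{2n/3}$ and jointly absorbing deficit at least $-B = -\|y\|_1/2$, gives $y_{2n/3} \geq -3B/n$ and hence $\Psi_{\alpha,<2n/3} \leq (2n/3)\,e^{3\alpha B/n}$. Symmetrically, using $y_{n/4}<0$, at most $n/4-1$ bins are positive, so the full surplus $B$ is concentrated in at most $n/4$ bins; a Jensen/counting argument (over the positive-$y$ bins) combined with the trivial bound $\Phi_\alpha \geq n$ and the single-bin bound $\Phi_\alpha \geq e^{\alpha y_1} \geq e^{4\alpha B/n}$ yields $\Phi_\alpha \geq (n/4)\,e^{4\alpha B/n}$. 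Chaining
\begin{equation*}
\frac{n\epsilon}{16}\,e^{4\alpha B/n} \;\leq\; \frac{\epsilon}{4}\,\Phi_\alpha \;\leq\; \Psi_\alpha \;\leq\; \frac{1}{c\beta\epsilon}\,\Psi_{\alpha,<2n/3} \;\leq\; \frac{2n}{3c\beta\epsilon}\,e^{3\alpha B/n}
\end{equation*}
isolates $e^{\alpha B/n} = O(1/(\beta\epsilon^2))$, and inserting this into $\Gamma_\alpha \leq (1+4/\epsilon)\Psi_\alpha$ delivers $\Gamma_\alpha = O(n/(\beta^4\epsilon^8))$; a careful accounting of hidden constants recovers the stated bound $1040000\,n/(\beta^4\epsilon^8)$.

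The main obstacle is the drift computation in the first display. Two subtleties arise. First, the deletion contribution $(1-\beta(t))(e^\alpha-1-\alpha)/n = \Theta(\alpha^2/n)$ is strictly positive and appears at \emph{every} index, so one must verify that $\alpha \leq \epsilon\beta/3$ is strong enough for it to be dominated by the $-\Theta(\beta\epsilon\alpha/n)$ insertion drift contributed only at indices $i \geq 2n/3$. Second, the lower bound $\Phi_\alpha \geq (n/4)\,e^{4\alpha B/n}$ requires care in the corner case of very few positive-$y$ bins: for $\alpha B/n \leq (\ln 4)/4$ the trivial estimate $\Phi_\alpha \geq n$ (Jensen over all $n$ bins using $\sum_i y_i = 0$) majorizes the target, whereas for larger $\alpha B/n$ the single largest bin $y_1 \geq 4B/(n-4)$ suffices via $e^{\alpha y_1}$; these two regimes must be stitched together cleanly, and this is the only step of the proof that is not a direct transposition of the corresponding step in Lemma~\ref{lem:phi2}.
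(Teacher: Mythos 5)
Your proposal follows the paper's own proof essentially step for step: start from Lemma~\ref{lem:psi_basis}, split the sum at index $2n/3$, use a lower bound on $p_i$ for the right block (the paper uses $p_i > (1+\epsilon)/n$ for $i>2n/3$, you use $(1+4\epsilon)/n$, which is tighter but requires $\epsilon\leq 1/12$ rather than the $\epsilon\leq 3/16$ allowed by \cref{obs:epsilon} -- a harmless shift of constants), derive $\Psi_\alpha \leq O((\beta\epsilon)^{-1})\Psi_{\alpha,\leq 2n/3}$ from the drift hypothesis, case-split on $\Psi_\alpha$ vs.\ $(\epsilon/4)\Phi_\alpha$, and in the second case bound $\Psi_{\alpha,\leq 2n/3}$ from above and $\Phi_\alpha$ from below via $y_{n/4}<0$ to isolate $e^{\alpha B/n}$ and conclude $\Gamma_\alpha = O(n/(\epsilon^8\beta^4))$. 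The intermediate drift display you write down with $\Psi_\alpha$ on the negative term is not literally what falls out of the split -- the negative drift sits on $\Psi_{\alpha,>2n/3}$, not on all of $\Psi_\alpha$ -- but substituting $\Psi_{\alpha,>2n/3}=\Psi_\alpha-\Psi_{\alpha,\leq 2n/3}$ turns one form into the other at the cost of inflating the $\Psi_{\alpha,\leq 2n/3}$ coefficient by a benign factor, which your $2\alpha/n$ already anticipates.

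The one place where you add something the paper does not: the lower bound $\Phi_\alpha \geq (n/4)e^{4\alpha B/n}$ is asserted in the paper's proof (here and in the twin Lemma~\ref{lem:phi2}) as if it were a one-line Jensen argument, but, as you observe, Jensen over the $k < n/4$ positive bins gives $\Phi_\alpha \geq k\,e^{\alpha B/k}$, and $k\mapsto k\,e^{\alpha B/k}$ is not monotone -- so for small $4\alpha B/n$ the naive comparison fails. Your two-regime patch is exactly right: when $4\alpha B/n \leq \ln 4$ the trivial bound $\Phi_\alpha \geq n$ (from Jensen over all bins, using $\sum_i y_i = 0$) already dominates $(n/4)e^{4\alpha B/n}$, and when $4\alpha B/n \geq 1$ the monotonicity of $x\mapsto e^x/x$ on $[1,\infty)$ makes $k\,e^{\alpha B/k}$ minimal at $k=n/4$. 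Since $\ln 4 > 1$, the two regimes cover everything. Your phrasing in terms of the ``single largest bin'' for the second regime is not the cleanest route (one bin only contributes $e^{\alpha y_1}$, without the $n/4$ factor), but the underlying idea -- that the Jensen bound does hold once $4\alpha B/n\geq 1$ -- is the correct repair. This clarification is a genuine improvement over the paper's terse presentation, even though the overall argument is the same.
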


\begin{proof}
We use Lemma \ref{lem:psi_basis}.
In Eq. (\ref{eq:psi23}) we use $p_i > \frac{1+\epsilon}{n}$ for $i > 2n/3$ and $e^{-\alpha} - 1 < 0$.

\begin{eqnarray}
& & E[\Delta \Psi_\alpha \;|\; x(t)] \nonumber\\
& \leq & \sum_{i=1}^n e^{-\alpha y_i} \left( \beta(t) \left(p_i (e^{-\alpha} - 1) + \frac{\alpha}{n}\right) + (1-\beta(t)) \frac{1}{n} (e^\alpha-1-\alpha) + \frac{2\alpha^2}{n^2} \right) \nonumber\\
& \leq & \Psi_{\alpha, \leq 2n/3} \cdot \frac{\alpha}{n}\label{eq:psi23}\\
&&+ \sum_{i > 2n/3} e^{-\alpha y_i} \left( \beta(t) \left(\frac{1+\epsilon}{n} (e^{-\alpha} - 1) + \frac{\alpha}{n}\right) + (1-\beta(t)) \frac{1}{n} (e^\alpha-1-\alpha) + \frac{2\alpha^2}{n^2} \right) \nonumber\\
& \leq & \Psi_{\alpha, \leq \frac{2n}{3}} \cdot \frac{\alpha}{n} + \frac{\Psi_{\alpha, > \frac{2n}{3}}}{n} \left( \beta(t) \epsilon (e^{-\alpha} - 1) + \frac{\alpha^2 + \alpha^3}{2} \right) \nonumber\\
& \leq & \Psi_{\alpha, \leq \frac{2n}{3}} \cdot \frac{\alpha}{n} + \frac{\Psi_{\alpha, > \frac{2n}{3}}}{n} \left(-\alpha \beta(t) \epsilon + \alpha^2 \right) \nonumber\\
& \leq & \Psi_{\alpha, \leq \frac{2n}{3}} \cdot \frac{\alpha}{n} + \Psi_{\alpha, > \frac{2n}{3}} \cdot \frac{-\alpha \beta(t) \epsilon}{2n} \nonumber\\
& \leq & \Psi_{\alpha, \leq \frac{2n}{3}} \cdot \frac{\alpha}{n} + \Psi_{\alpha, > \frac{2n}{3}} \cdot \frac{-\alpha \beta \epsilon}{2n} \nonumber
\end{eqnarray}

Using $E[\Delta \Psi_\alpha \;|\; x(t)] \geq -\frac{\alpha \cdot \beta \cdot \epsilon}{4 \cdot n} \cdot \Psi_\alpha$, we get:

\begin{eqnarray*}
-\frac{\alpha \cdot \beta \cdot \epsilon}{4 \cdot n} \cdot \Psi_\alpha & \leq & \Psi_{\alpha, \leq \frac{2n}{3}} \cdot \frac{\alpha}{n} - \Psi_\alpha \cdot \frac{\alpha \beta \epsilon}{2n}\\
\Psi_\alpha & \leq & \frac{4}{\beta \cdot \epsilon} \cdot \Psi_{\alpha, \leq \frac{2n}{3}}
\end{eqnarray*}

If $\Psi_\alpha < \frac{\epsilon}{4} \cdot \Phi_\alpha$, we would be done; so we assume $\Psi_\alpha \geq \frac{\epsilon}{4} \cdot \Phi_\alpha$.
Since $\Psi_{\alpha, \leq \frac{2n}{3}} \leq \frac{2}{3} \cdot n \cdot e^{3 \cdot \alpha \cdot B / n}$ for $B = \sum_i \max\{0, y_i\}$ and since $y_{n/4} < 0$ implies $\Phi_\alpha \geq \frac{n}{4} \cdot e^{4 \cdot \alpha \cdot B / n}$, we get:

\begin{eqnarray*}
&& \frac{\epsilon \cdot n}{16} \cdot e^{4 \cdot \alpha \cdot B / n} \leq \frac{\epsilon}{4} \cdot \Phi_\alpha \leq \Psi_\alpha \leq \frac{4}{\beta \cdot \epsilon} \cdot \Psi_{\alpha, \leq \frac{2n}{3}} \leq \frac{8}{3 \cdot \beta \cdot \epsilon} \cdot n \cdot e^{3 \cdot \alpha \cdot B / n}\\
& \Rightarrow & e^{\alpha \cdot B / n} \leq \frac{128}{3 \cdot \beta \cdot \epsilon^2}
\end{eqnarray*}

Finally, it follows that:
\begin{eqnarray*}
\Gamma_\alpha & \leq & \frac{5}{\epsilon} \cdot \Psi_\alpha \leq \frac{5}{\epsilon} \cdot \frac{8}{3 \cdot \beta \cdot \epsilon} \cdot n \cdot e^{3 \cdot \alpha \cdot B / n} \leq \frac{40}{3 \cdot \beta \cdot \epsilon^2} \cdot n \cdot \left(\frac{128}{3 \cdot \beta \cdot \epsilon^2}\right)^3 \leq \frac{1040000 \cdot n}{\epsilon^8 \cdot \beta^4}
\end{eqnarray*}
\end{proof}

The next \cref{thm:recursion} utilizes all previous lemmas and corollaries to prove a bound on the potential change $\Delta \Gamma_\alpha = \Gamma_\alpha(x(t+1)) - \Gamma_\alpha(x(t))$.
This bound will then be used in \cref{thm:potential_bound1} to finally prove that the expected potential $\Exp[\Gamma_\alpha(x(t))]$ is linear in $n$.

\begin{lemma}[similar to Theorem A.9 in \cite{DBLP:conf/icalp/TalwarW14}] \label{thm:recursion}
$E[\Gamma_\alpha(t+1) \;|\; x(t)] \leq (1- \frac{\alpha \cdot \beta \cdot \epsilon}{4 \cdot n}) \cdot \Gamma_\alpha(t) + \frac{1040000}{\epsilon^8 \cdot \beta^4}$.
\end{lemma}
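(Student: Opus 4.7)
The plan is to write $E[\Gamma_\alpha(t+1) \;|\; x(t)] = \Gamma_\alpha(t) + E[\Delta \Phi_\alpha \;|\; x(t)] + E[\Delta \Psi_\alpha \;|\; x(t)]$ and then perform a case analysis on the signs of $y_{n/4}(t)$ and $y_{3n/4}(t)$, choosing in each case the sharpest bound available among \cref{lem:phi1,lem:phi2,lem:psi1,lem:psi2} and \cref{phi_corollary,psi_corollary}. Throughout we use the standing hypotheses $\beta(t)\ge \beta$ and $\alpha \le \beta\epsilon/3 \le 1/16$ inherited from those lemmas.

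If $y_{n/4}(t) \ge 0$ and $y_{3n/4}(t) \le 0$ (the \emph{balanced} regime), both \cref{lem:phi1} and \cref{lem:psi1} apply, and adding them yields $E[\Delta \Gamma_\alpha \;|\; x(t)] \le -\tfrac{\alpha\beta\epsilon}{n}\Gamma_\alpha(t) + 2\alpha$, already stronger than the claim. If $y_{3n/4}(t) > 0$ (so \cref{lem:phi1} fails, but $y_{n/4}(t)\ge y_{3n/4}(t) > 0$ keeps \cref{lem:psi1} available), I split on the drift of $\Phi_\alpha$: the strong-drift subcase $E[\Delta\Phi_\alpha \;|\; x(t)] \le -\tfrac{\alpha\beta\epsilon}{4n}\Phi_\alpha$ combines with \cref{lem:psi1} to yield exactly the required recursion; otherwise \cref{lem:phi2} fires, giving the dichotomy $\Phi_\alpha < \tfrac{\epsilon}{4}\Psi_\alpha$ (so $\Psi_\alpha \ge \tfrac{4}{5}\Gamma_\alpha$, and \cref{lem:psi1}'s decrease of $\Psi_\alpha$ drives all of $\Gamma_\alpha$ down while $\Delta\Phi_\alpha$ is controlled by the generic $\tfrac{\alpha^2}{n}\Phi_\alpha$ estimate in \cref{phi_corollary}) or $\Gamma_\alpha \le \tfrac{360\,n}{\epsilon^8\beta^4}$ (in which case even a crude bound on $E[\Delta \Gamma_\alpha \;|\; x(t)]$ fits inside the additive constant $\tfrac{1040000}{\epsilon^8\beta^4}$). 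The symmetric case $y_{n/4}(t) < 0$ is handled identically via \cref{lem:phi1} and \cref{lem:psi2}; this branch is the source of the constant $1040000$, inherited directly from \cref{lem:psi2}.

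The main obstacle I anticipate is the weak-drift sub-case in which $\Gamma_\alpha$ already sits near the threshold $\Theta(n/(\epsilon^8\beta^4))$: one must check that the crude one-step bound on $\Delta\Gamma_\alpha$, together with the required multiplicative decrease $\tfrac{\alpha\beta\epsilon}{4n}\Gamma_\alpha$, fits within the allowed additive constant. Exploiting $\alpha \le \beta\epsilon/3 \le 1/16$ to bound $\alpha^2 \le 1/256$ keeps all error terms of order $1/(\epsilon^8\beta^4)$, leaving comfortable slack up to the $10^6$ scale. A secondary bookkeeping issue is that \cref{phi_corollary,psi_corollary} are stated with sign preconditions on $y_{n/4}(t),y_{3n/4}(t)$ which are not actually used in their proofs (both follow from \cref{lem:phi_basis,lem:psi_basis} via Chebyshev's sum inequality on the monotonicity of $p_i$ and $e^{\pm\alpha y_i}$), so I would invoke them unconditionally in the subcases above.
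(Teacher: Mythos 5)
Your proposal mirrors the paper's own proof almost exactly: the same three-way split on the signs of $y_{n/4}(t)$ and $y_{3n/4}(t)$, the same strong-drift/weak-drift dichotomy via \cref{lem:phi2} and \cref{lem:psi2}, and the same use of \cref{phi_corollary,psi_corollary} in the bounded-$\Gamma_\alpha$ subcase. Your side remark about the sign preconditions on \cref{phi_corollary,psi_corollary} being vacuous is well taken — in fact the paper itself invokes \cref{phi_corollary} in the regime $y_{3n/4}>0$ where its stated hypothesis $y_{3n/4}\le 0$ fails, so your explicit note that the corollary proofs never use those hypotheses is exactly the right justification and patches a small gap in the paper's presentation.
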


\begin{proof}
{\bf Case 1: $y_{n/4} \geq 0$ and $y_{3n/4} \leq 0$.}
The statement follows from Lemma~\ref{lem:phi1} and Lemma~\ref{lem:psi1}:
\begin{eqnarray*}
E[\Delta \Gamma_\alpha \;|\; x(t)] & = & E[\Delta \Phi_\alpha \;|\; x(t)] + E[\Delta \Psi_\alpha \;|\; x(t)]\\
& \leq & -\frac{\alpha \cdot \beta \cdot \epsilon}{n} \cdot \Phi_\alpha + \alpha - \frac{\alpha \cdot \beta \cdot \epsilon}{n} \cdot \Psi_\alpha + \alpha\\
& = & -\frac{\alpha \cdot \beta \cdot \epsilon}{n} \cdot \Gamma_\alpha + 2 \cdot \alpha
\end{eqnarray*}

{\bf Case 2: $y_{n/4} \geq y_{3n/4} > 0$.}
If $E[\Delta \Phi_\alpha \;|\; x(t)] \leq -\frac{\epsilon \cdot \beta \cdot \alpha}{4 \cdot n} \cdot \Phi_\alpha$, apply Lemma~\ref{lem:psi1}:
\begin{eqnarray*}
E[\Delta \Gamma_\alpha \;|\; x(t)] & = & E[\Delta \Phi_\alpha \;|\; x(t)] + E[\Delta \Psi_\alpha \;|\; x(t)]\\
& \leq & -\frac{\epsilon \cdot \beta \cdot \alpha}{4 \cdot n} \cdot \Phi_\alpha + \alpha - \Psi_\alpha \cdot \frac{\alpha \cdot \beta \cdot \epsilon}{n} + \alpha\\
& \leq & -\frac{\alpha \cdot \beta \cdot \epsilon}{4 \cdot n} \cdot \Gamma_\alpha + \alpha
\end{eqnarray*}

If this is not the case, apply Lemma~\ref{lem:phi2} which distinguishes two cases:
If $\Phi_\alpha < \frac{\epsilon}{4} \cdot \Psi_\alpha$, apply Lemma~\ref{lem:psi1} and Corollary~\ref{phi_corollary} and use $\Phi_\alpha < \frac{\epsilon}{4} \cdot \Psi_\alpha \leq \frac{1}{16} \cdot \Psi_\alpha$ as well as $\beta \geq \alpha$ (which follows from $\epsilon \geq 2 \cdot \alpha / \beta$):
\begin{eqnarray*}
E[\Delta \Gamma_\alpha \;|\; x(t)] & = & E[\Delta \Phi_\alpha \;|\; x(t)] + E[\Delta \Psi_\alpha \;|\; x(t)]\\
& \leq & \frac{\alpha^2}{n} \cdot \Phi_\alpha - \Psi_\alpha \cdot \frac{\alpha \cdot \beta \cdot \epsilon}{n} + \alpha\\
& \leq & \frac{\alpha \cdot \beta \cdot \epsilon}{n \cdot 4} \cdot \Psi_\alpha - \Psi_\alpha \cdot \frac{\alpha \cdot \beta \cdot \epsilon}{n} + \alpha\\
& = & -\Psi_\alpha \cdot \frac{3 \cdot \alpha \cdot \beta \cdot \epsilon}{4 \cdot n} + \alpha\\
& \leq & -\Gamma_\alpha \cdot \frac{3 \cdot \alpha \cdot \beta \cdot \epsilon}{4 \cdot n} + \alpha
\end{eqnarray*}

If $\Gamma_\alpha < \frac{360 \cdot n}{\epsilon^8 \cdot \beta^4}$, apply Corollary~\ref{phi_corollary} and Corollary~\ref{psi_corollary}:
\begin{eqnarray}
E[\Delta \Gamma_\alpha \;|\; x(t)] & = & E[\Delta \Phi_\alpha \;|\; x(t)] + E[\Delta \Psi_\alpha \;|\; x(t)] \label{eq:cn_start}\\
& \leq & \frac{\alpha^2}{n} \cdot \Phi_\alpha + \frac{\alpha^2}{n}  \cdot \Psi_\alpha \nonumber\\
& = & \frac{\alpha^2}{n} \cdot \Gamma_\alpha \nonumber\\
& \leq & -\frac{\alpha \cdot \beta \cdot \epsilon}{n} \cdot \Gamma_\alpha + \frac{(1+\epsilon) \cdot \alpha \cdot \beta}{n}\cdot \Gamma_\alpha \nonumber\\
& \leq & -\frac{\alpha \cdot \beta \cdot \epsilon}{n} \cdot \Gamma_\alpha + \frac{360}{\epsilon^8 \cdot \beta^2} \label{eq:cn_end}
\end{eqnarray}

{\bf Case 3: $y_{3n/4} \leq y_{n/4} < 0$.}
If $E[\Delta \Psi_\alpha \;|\; x(t)] \leq -\frac{\alpha \cdot \beta \cdot \epsilon}{4 \cdot n} \cdot \Psi_\alpha$, apply Lemma~\ref{lem:phi1}:
\begin{eqnarray*}
E[\Delta \Gamma_\alpha \;|\; x(t)] & = & E[\Delta \Phi_\alpha \;|\; x(t)] + E[\Delta \Psi_\alpha \;|\; x(t)]\\
& \leq & -\frac{\alpha \cdot \beta \cdot \epsilon}{n} \cdot \Phi_\alpha + \alpha - \frac{\alpha \cdot \beta \cdot \epsilon}{4 \cdot n} \cdot \Psi_\alpha\\
& \leq & -\frac{\alpha \cdot \beta \cdot \epsilon}{4 \cdot n} \cdot \Gamma_\alpha + \alpha
\end{eqnarray*}

If this is not the case, apply Lemma~\ref{lem:psi2} which distinguishes two cases:
If $\Psi_\alpha < \frac{\epsilon}{4} \cdot \Phi_\alpha$, apply Lemma~\ref{lem:phi1} and Corollary~\ref{psi_corollary} and use $\epsilon < 1/4$:
\begin{eqnarray*}
E[\Delta \Gamma_\alpha \;|\; x(t)] & = & E[\Delta \Phi_\alpha \;|\; x(t)] + E[\Delta \Psi_\alpha \;|\; x(t)]\\
& \leq & -\frac{\alpha \cdot \beta \cdot \epsilon}{n} \cdot \Phi_\alpha + \alpha + \frac{\alpha^2}{n} \cdot \Psi_\alpha\\
& \leq & -\frac{3 \cdot \alpha \cdot \beta \cdot \epsilon}{4 \cdot n} \cdot \Gamma_\alpha + \alpha
\end{eqnarray*}
(The proof for the last inequality is very similar to Case 2.)

If $\Gamma_\alpha < \frac{c \cdot n}{\beta^4}$, Inequality~(\ref{eq:cn_start})-(\ref{eq:cn_end}) can be reused with the constant from Lemma~\ref{lem:psi2}.
\begin{eqnarray*}
E[\Delta \Gamma_\alpha \;|\; x(t)] & \leq & -\frac{\alpha \cdot \beta \cdot \epsilon}{n} \cdot \Gamma_\alpha + \frac{1040000}{\epsilon^8 \cdot \beta^2}
\end{eqnarray*}
\end{proof}

\begin{lemma} \label{thm:potential_bound1}
Fix an arbitrary step $t$. Let $\epsilon \leq \frac{3}{16}$, and assume that $\beta \in (0, 1]$ is a constant probability and that $\alpha \leq \frac{\epsilon \cdot \beta}{3}$.
Then
\begin{equation*}
\Exp[\Gamma_\alpha(x(t))]
\leq
\frac{4160002}{\alpha \cdot \beta^3 \cdot \epsilon^9} \cdot e^{\alpha \cdot \adisc(0)} \cdot n
.
\end{equation*}
\end{lemma}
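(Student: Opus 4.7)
\textbf{Proof plan for Lemma \ref{thm:potential_bound1}.}
The plan is to turn the one-step drift bound of Lemma \ref{thm:recursion} into a uniform-in-$t$ bound on $\Exp[\Gamma_\alpha(x(t))]$ by a standard linear-recursion argument. Set $a \coloneqq \frac{\alpha \beta \epsilon}{4 n}$ and $b \coloneqq \frac{1040000}{\epsilon^8 \beta^4}$. Taking expectations in the conditional bound of Lemma \ref{thm:recursion} and using the tower property gives
\begin{equation}
\Exp[\Gamma_\alpha(x(t+1))] \leq (1-a)\,\Exp[\Gamma_\alpha(x(t))] + b
\end{equation}
for every $t \in \N_0$. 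The first step is to unroll this recursion, which yields
\begin{equation}
\Exp[\Gamma_\alpha(x(t))] \leq (1-a)^t \cdot \Gamma_\alpha(x(0)) + b \sum_{i=0}^{t-1} (1-a)^i \leq \Gamma_\alpha(x(0)) + \frac{b}{a},
\end{equation}
using $(1-a)^t \leq 1$ and $\sum_{i \geq 0}(1-a)^i = 1/a$ (the hypothesis $\alpha \leq \epsilon \beta / 3 \leq 1/16$ combined with $\beta, \epsilon \leq 1$ keeps $a \in (0,1)$, so the geometric series converges).

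The second step is to bound the initial potential in terms of $\adisc(0)$. By definition
\begin{equation}
\Gamma_\alpha(x(0)) = \sum_{i=1}^{n}\intoo[\big]{e^{\alpha\, y_i(0)} + e^{-\alpha\, y_i(0)}},
\end{equation}
and since $|y_i(0)| \leq \adisc(0)$ by definition of the absolute discrepancy, each summand is at most $2\,e^{\alpha\, \adisc(0)}$, giving $\Gamma_\alpha(x(0)) \leq 2 n \cdot e^{\alpha\, \adisc(0)}$.

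The third step is purely arithmetic: combine the two bounds and simplify $b/a$. We have $b/a = \frac{4 \cdot 1040000}{\alpha\, \beta^5\, \epsilon^9} \cdot n$; absorbing this together with the $2 n \cdot e^{\alpha \adisc(0)}$ term (using $e^{\alpha \adisc(0)} \geq 1$ and $\alpha, \beta, \epsilon \leq 1$) into a single constant multiple of $\frac{n}{\alpha\, \beta^3\, \epsilon^9}\cdot e^{\alpha \adisc(0)}$ yields the stated bound. There is no real obstacle here — all nontrivial probabilistic content has been absorbed into Lemma \ref{thm:recursion} — the only mild subtlety is checking that the step size $a$ lies in $(0,1)$ so that the geometric sum is valid, which follows from the hypotheses $\alpha \leq \epsilon \beta / 3$ and $\epsilon \leq 3/16$. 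This completes the proof sketch.
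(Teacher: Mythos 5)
Your approach is identical to the paper's: take expectations in Lemma~\ref{thm:recursion} and use the tower property, unroll the linear recursion to get $\Exp[\Gamma_\alpha(x(t))] \le \Gamma_\alpha(x(0)) + b/a$, bound the initial potential by $\Gamma_\alpha(x(0)) \le 2n\,e^{\alpha\,\adisc(0)}$, and combine. The structure and all the conceptual steps match exactly.

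The one real wrinkle is in the value of $b$. You set $b = \frac{1040000}{\epsilon^8 \beta^4}$, reading the constant off the \emph{statement} of Lemma~\ref{thm:recursion}, which gives $b/a = \frac{4160000}{\alpha\,\beta^5\,\epsilon^9}\cdot n$, i.e.\ $\beta^5$ in the denominator rather than the $\beta^3$ appearing in the claimed bound. Your closing step then tries to absorb this into a multiple of $\frac{n}{\alpha\,\beta^3\,\epsilon^9}\,e^{\alpha\,\adisc(0)}$ "using $\beta \le 1$," but that inequality goes the wrong way: $\beta \le 1$ gives $1/\beta^5 \ge 1/\beta^3$, so the absorption as written does not recover the stated constant $\frac{4160002}{\alpha\,\beta^3\,\epsilon^9}$; you would only obtain the weaker bound with $\beta^5$. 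In fairness, this discrepancy originates in the paper: the \emph{proof} of Lemma~\ref{thm:recursion} actually ends with $\frac{1040000}{\epsilon^8\beta^2}$ (not $\beta^4$), and the paper's proof of the present lemma quotes that $\beta^2$ version, from which $b/a = \frac{4160000}{\alpha\,\beta^3\,\epsilon^9}\cdot n$ follows directly and the final arithmetic (using $e^{\alpha\,\adisc(0)} \ge 1$ and $\alpha,\beta,\epsilon \le 1$) is valid. So your argument becomes correct verbatim once you use the $b$ that Lemma~\ref{thm:recursion}'s proof actually delivers; as written, the last absorption step is invalid and proves only the $\beta^5$ variant.
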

\begin{proof}
The proof of this lemma follows from Lemma~\ref{thm:recursion} which gives us
\begin{equation*}
\Exp[\Gamma_\alpha(x(t+1)) \;|\; x(t)]
\leq
\left(1 - \frac{\alpha \cdot \beta \cdot \epsilon}{4 \cdot n}\right) \cdot \Gamma_\alpha(x(t)) + \frac{1040000}{\epsilon^8 \cdot \beta^2}
.
\end{equation*}
Taking the expected value on both sides, we obtain
\begin{equation*}
\Exp[\Gamma_\alpha(x(t+1))]
\leq
\left(1 - \frac{\alpha \cdot \beta \cdot \epsilon}{4 \cdot n}\right) \cdot \Exp[\Gamma_\alpha(x(t))] + \frac{1040000}{\epsilon^8 \cdot \beta^2}.
\end{equation*}
Solving this recursion yields
\begin{equation*}
\begin{aligned}
\Exp[\Gamma_\alpha(x(t))]
&\leq
{\left(1 - \frac{\alpha \cdot \beta \cdot \epsilon}{4 \cdot n}\right)}^t \cdot \left(\Exp[\Gamma_\alpha(x(0))] - \frac{4160000}{\alpha \cdot \beta^3 \cdot \epsilon^9} \cdot n\right) + \frac{4160000}{\alpha \cdot \beta^3 \cdot \epsilon^9} \cdot n
\\&\leq
\Exp[\Gamma_\alpha(x(0))] + \frac{4160000}{\alpha \cdot \beta^3 \cdot \epsilon^9} \cdot n
\\&\leq
2n \cdot e^{\alpha \cdot \adisc(0)} + \frac{4160000}{\alpha \cdot \beta^3 \cdot \epsilon^9} \cdot n
\\&=
\left( 2e^{\alpha \cdot \adisc(0)} + \frac{4160000}{\alpha \cdot \beta^3 \cdot \epsilon^9} \right) \cdot n
\leq
\frac{4160002}{\alpha \cdot \beta^3 \cdot \epsilon^9} \cdot e^{\alpha \cdot \adisc(0)} \cdot n
.
\end{aligned}
\end{equation*}
\end{proof}

\begin{lemma}\label{lem:whppotential_bound1}
Fix an arbitrary step $t$.
Let $\epsilon \leq \frac{3}{16}$, and assume that $\beta \in (0, 1]$ is a constant probability and that $\alpha \leq \frac{\epsilon \cdot \beta}{3}$.
Then
\begin{equation*}
\Pr{\adisc(t) - \adisc(0) \geq \frac{(a + 2) \cdot \ln n}{\alpha}}
\leq
n^{-a}
.
\end{equation*}
\end{lemma}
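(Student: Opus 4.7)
\textbf{Proof plan for \cref{lem:whppotential_bound1}.} The plan is to derive the high-probability bound directly from the expectation bound in \cref{thm:potential_bound1} via Markov's inequality, after observing that the potential $\Gamma_\alpha$ dominates a single exponential of the absolute discrepancy.

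First I would observe the key deterministic inequality $e^{\alpha \cdot \adisc(t)} \leq \Gamma_\alpha(\bm{x}(t))$. This follows directly from the definition of $\Gamma_\alpha = \Phi_\alpha + \Psi_\alpha$: if the absolute discrepancy is attained on the upper side, i.e.\ $\adisc(t) = x_{\max}(t) - m(t)/n$, then the summand $e^{\alpha(x_1(t) - m(t)/n)}$ in $\Phi_\alpha$ already equals $e^{\alpha \cdot \adisc(t)}$; symmetrically, if it is attained on the lower side, the summand $e^{-\alpha(x_n(t) - m(t)/n)}$ in $\Psi_\alpha$ equals $e^{\alpha \cdot \adisc(t)}$. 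Since all other summands of $\Gamma_\alpha$ are nonnegative, the inequality holds.

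Next I would apply Markov's inequality to the nonnegative random variable $\Gamma_\alpha(\bm{x}(t))$: for any $\lambda > 0$,
\begin{equation*}
\Pr{\Gamma_\alpha(\bm{x}(t)) \geq \lambda \cdot \Exp[\Gamma_\alpha(\bm{x}(t))]} \leq 1/\lambda.
\end{equation*}
Choosing $\lambda = n^{a+1}$ and invoking \cref{thm:potential_bound1}, which bounds $\Exp[\Gamma_\alpha(\bm{x}(t))] \leq C \cdot e^{\alpha \cdot \adisc(0)} \cdot n$ for the explicit constant $C = 4160002/(\alpha \beta^3 \epsilon^9)$, I would conclude that with probability at least $1 - n^{-(a+1)}$ (easily absorbed into $n^{-a}$ for $n$ large enough, though in fact the clean choice $\lambda = n^{a+1}/C$ works just as well provided $n$ is sufficiently large that $C \leq n$),
\begin{equation*}
\Gamma_\alpha(\bm{x}(t)) \leq C \cdot n^{a+2} \cdot e^{\alpha \cdot \adisc(0)}.
\end{equation*}
Combined with the deterministic bound from the first step, this yields $e^{\alpha \cdot \adisc(t)} \leq C \cdot n^{a+2} \cdot e^{\alpha \cdot \adisc(0)}$, and taking logarithms gives $\alpha \cdot (\adisc(t) - \adisc(0)) \leq \ln C + (a+2) \ln n$. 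For $n$ large enough so that $\ln C$ is absorbed (re-allocating the slack between the exponent and the multiplicative constant), this gives the stated bound.

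The only mildly delicate point is the bookkeeping of the constant $C$: one must choose $\lambda$ such that the residual $\ln C$ fits within the factor $(a+2)\ln n$ rather than $(a+1) \ln n$, which is exactly why the statement leaves one $\ln n$ of extra slack. No other technical obstacle arises, since the two ingredients --- the tower inequality $e^{\alpha \adisc(t)} \leq \Gamma_\alpha$ and the expectation bound of \cref{thm:potential_bound1} --- do all the heavy lifting.
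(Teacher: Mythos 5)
Your proposal is correct and follows essentially the same route as the paper: apply Markov's inequality to the expectation bound of \cref{thm:potential_bound1}, use the deterministic inequality $e^{\alpha\cdot\adisc(t)}\leq\Gamma_\alpha(\bm{x}(t))$, take logarithms, and absorb $\ln C$ into the $(a+2)\ln n$ slack for $n$ large enough. One very minor bookkeeping slip in your first pass: with $\lambda = n^{a+1}$ the $\Gamma$-threshold becomes $C\cdot n^{a+2}\cdot e^{\alpha\adisc(0)}$, so the resulting deviation event $\adisc(t)-\adisc(0)\geq(\ln C+(a+2)\ln n)/\alpha$ is \emph{smaller} than the target event with threshold $(a+2)\ln n/\alpha$, which would give the bound in the wrong direction; the ``clean choice'' $\lambda=n^{a+1}/C$ you flag (or equivalently the paper's $\lambda=n^{a}$, which yields threshold $Cn^{a+1}e^{\alpha\adisc(0)}$ and probability $n^{-a}$ exactly) is the version that actually closes the argument, using $n\geq C$ to absorb the constant.
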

\begin{proof}
\Cref{thm:potential_bound1} together with Markov's inequality yields
\begin{equation*}
\Pr{\Gamma_{\alpha}(x(t)) \geq n^{a+1} \cdot \frac{4160002}{\alpha \cdot \beta^3 \cdot \epsilon^9} \cdot e^{\alpha \cdot \adisc(0)}}
\leq
n^{-a}
.
\end{equation*}
By rearranging the inequality on the left-hand side, we see that the corresponding event is equivalent to
\begin{equation*}
\frac{\ln(\Gamma_{\alpha(x(t))})}{\alpha} - \adisc(0)
\geq
\frac{(a+1) \cdot \ln(n) + \ln(\frac{4160002}{\alpha \cdot \beta^3 \cdot \epsilon^9})}{\alpha}
.
\end{equation*}
For large $n$, the right-hand side of this is at most
\begin{math}
\frac{(a+2) \cdot \ln(n)}{\alpha}
\end{math}.
For the left-hand side, the potential function definition yields a lower bound of
\begin{math}
\adisc(t) - \adisc(0)
\end{math}.
Combining the above insights, we get
\begin{equation*}
\begin{aligned}
\Pr{\adisc(t) - \adisc(0) \geq \frac{(a + 2) \cdot \ln n}{\alpha}}
&\leq
\Pr{\frac{\ln(\Gamma_{\alpha(x(t))})}{\alpha} - \adisc(0) \geq \frac{(a+1) \cdot \ln(n) + \ln(\frac{4160002}{\alpha \cdot \beta^3 \cdot \epsilon^9})}{\alpha}}
\\&=
\Pr{\Gamma_{\alpha}(x(t)) \geq n^{a+1} \cdot \frac{4160002}{\alpha \cdot \beta^3 \cdot \epsilon^9} \cdot e^{\alpha \cdot \adisc(0)}}
\leq
n^{-a}
.
\end{aligned}
\end{equation*}
\end{proof}

    \section{Omitted Material from Proof of Theorem \ref{prop:linear_potential}.}
\label{app:tom}

We restate \cref{prop:linear_potential} for convenience.

\restateThmTom*

\begin{proof}
From \cref{lem:adaption_sauerwald} we may assume that $\Gamma_\alpha(x(t)) = O(n)$ with probability at least $1-n^{-(3+\ell)}$ for any fixed non-negative $\ell$, where the constant factor hidden in the big-Oh depends on $\ell$.

We assign a height $h_b(t)$ to each ball $b$ at time $t$ in such a way that in each bin $i$ the $X_i(t)$ balls contained in that bin at time $t$ are numbered $1,\ldots,X_i(t)$. Further let $h'_b(t)=h_b(t)-\avg$ if $h_b(t)\ge\avg$ and $h'_b(t)=0$ otherwise.

The standard potential function $\Phi_\alpha(t)=\sum_i e^{\alpha X^+_i(t)}$ with $X^+_i(t)$ being bin $i$'s load above \avg at time $t$ and zero otherwise is expressed in terms of individual bins' potentials.  We will rearrange this function so as to express the potential in terms of individual balls' potentials, and then make subsequent arguments based upon this new potential function.

We assign potential $\pi_b(t)$ to ball $b$ at height $h_b(t)$ as follows:
\[
  \pi_b(t) = \left\{
    \begin{array}{ll}
      0 & \mbox{if $h_b(t)\le \avg$}, \\
      e^{\alpha} & \mbox{if $h_b(t)=\avg+1$}, \\
      e^{\alpha h'_b(t)} - e^{\alpha (h'_b(t)-1)} &
        \mbox{if $h_b(t)\ge \avg+2$}.
    \end{array}
  \right.
\]
Furthermore let $\Pi(t) = \sum_{\text{ball}\, b} \pi_b(t)$. Now $\Pi(t)=\Phi(t)$, which may be seen as follows: Consider a bin $i$ with $k\ge 1$ balls above $\avg$.  Then those balls contribute a total potential of
\begin{align*}
\Pi_i(t) &:= \sum_{\text{ball}\, b:\ h_b(t)>\avg} \pi_b(t) =
e^\alpha + \sum_{j=2}^k e^{\alpha\cdot j} - e^{\alpha \cdot(j-1)} \\
&= e^\alpha + (e^{2\alpha}-e^\alpha) + (e^{3\alpha}-e^{2\alpha}) + \cdots +
(e^{(k-1)\alpha}-e^{(k-2)\alpha}) +
(e^{k\alpha}-e^{(k-1)\alpha}).
\end{align*}
This is a telescoping sum evaluating to $\Pi_i(t) = e^{k\alpha}$. This however is precisely this bin's potential contribution in the potential function $\Phi$, so that we can conclude that
\[
\Pi(t) =
\sum_{\text{ball}\, b} \pi_b(t) =
\sum_{\text{bin}\, i} \ \sum_{\substack{\text{ball}\, b\\\text{in bin}\, i}} \pi_b(t) =
\sum_{\text{bin}\, i} \ \Pi_i(t) =
\sum_{\text{bin}\, i} e^{\alpha X_i^+(t)} = \Phi(t).
\]%

Consider a height $\avg+k$ for some $k\ge 2$; the precise value will be fixed below.
Assume that there is a set ${\cal B}$ of at least $\mu n$ balls at height $\avg+k$ or higher (if no such set exists the proof is finished); the precise distribution of the balls in bins is irrelevant.
Each of those at least $\mu n$ many balls contributes a potential of at least
\[
e^{\alpha k} - e^{\alpha (k-1)} =
e^{\alpha k} \cdot\left(1-\frac{1}{e^\alpha}\right) \ge
\epsilon \cdot e^{\alpha k}
\]
for some constant $\epsilon=\epsilon(\alpha)>0$, for a total potential contribution of those balls of at least
\[
\Pi_{\cal B}(t) \ge
\epsilon\cdot \mu n\cdot e^{\alpha k}.
\]
Now $\Pi_{\cal B}(t) > c\cdot n$ whenever $k>\frac{1}{\alpha}\ln\frac{c}{\epsilon\mu}-1$, so letting $d=\frac{1}{\alpha}\ln\frac{c}{\epsilon\mu}$ proves the theorem (recall that the probability is inherited from \cref{lem:adaption_sauerwald} -- the actual argument hee in the proof building on top of it is entirely deterministic).
\end{proof}

\begin{lemma}\label{lem:adaption_sauerwald}
    Assume the premise of \cref{prop:linear_potential} holds.
    Then $\Gamma_\alpha(x(t)) = O(n)$ with probability at least $1-n^{-(3+\ell)}$ for any fixed non-negative $\ell$, where the constant factor hidden in the big-Oh depends on $\ell$.
\end{lemma}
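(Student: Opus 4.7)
The core tool is the one-step drift recursion from \cref{thm:recursion}, $\Exp[\Gamma_\alpha(t+1) \mid \bm{x}(t)] \le (1-\delta/n)\Gamma_\alpha(t) + c$, where $\delta := \alpha\beta\epsilon/4$ and $c$ are positive constants. Iterating this immediately yields $\Exp[\Gamma_\alpha(t)] = O(n)$ uniformly in $t$, but a direct Markov's inequality only delivers a $1 - O(1/K)$ tail for the event $\{\Gamma_\alpha(t) \le Kn\}$, which is far too weak for the desired $1 - n^{-(3+\ell)}$ concentration.

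My plan is to boost to high probability via an exponential-moment (Chernoff-type) argument applied to the process $M(t) := \exp(\lambda\Gamma_\alpha(t)/n)$, for a small constant $\lambda > 0$ to be chosen (depending on $\alpha, \beta, \epsilon$). Setting $A := 2c/\delta$, I would establish, by a second-order Taylor expansion of $e^{\lambda\Delta\Gamma_\alpha/n}$ combined with \cref{thm:recursion} (contributing the linear term) and a companion bound $\Exp[\Delta\Gamma_\alpha^2 \mid \bm{x}(t)] = O(\Gamma_{2\alpha}(t)/n)$ (controlling the quadratic term, derived analogously to \cref{lem:phi_basis,lem:psi_basis} but for the doubled exponent $2\alpha$, together with an auxiliary $\Exp[\Gamma_{2\alpha}(t)] = O(n)$ obtained by re-running the drift recursion), a supermartingale-type one-step inequality
\[
\Exp[M(t+1) \mid \bm{x}(t)] \le (1 - \lambda\eta/n)\,M(t) \quad\text{whenever } \Gamma_\alpha(t) \ge An,
\]
for some $\eta > 0$. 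Below the threshold, $M(t) \le e^{\lambda A} = O(1)$ trivially. Iterating yields $\Exp[M(t)] = O(1)$ uniformly in $t$.

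An exponential Markov bound then delivers $\Pr{\Gamma_\alpha(t) \ge An + s} \le O(e^{-\lambda s/n})$ for every $s > 0$. Choosing $s := (3+\ell) n \log(n)/\lambda$ makes the right-hand side at most $n^{-(3+\ell)}$, so with this probability $\Gamma_\alpha(t) \le An + (3+\ell) n \log(n)/\lambda$. For any fixed $\ell$, as $n$ grows the additive term is $o(n)$ and the bound becomes $(A+1) n = O(n)$, with the hidden constant independent of $n$ for $n$ beyond a threshold that depends on $\ell$, matching the statement of the lemma.

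The main obstacle lies in establishing the exponential supermartingale inequality. The Taylor expansion's quadratic term $\lambda^2 \Exp[\Delta\Gamma_\alpha^2]/n^2$ must be dominated by the linear negative drift $\lambda \Exp[\Delta\Gamma_\alpha]/n$ from \cref{thm:recursion}, which forces a small-but-explicit choice of $\lambda$ and hinges on the companion estimate $\Exp[\Delta\Gamma_\alpha^2 \mid \bm{x}(t)] = O(\Gamma_{2\alpha}(t)/n)$. The auxiliary control on $\Gamma_{2\alpha}$ requires re-running the drift analysis with parameter $2\alpha$, and then checking that this carries over to our insertion/deletion setting — where, unlike in the insertion-only analyses of \cite{DBLP:conf/innovations/LosS22, LS21}, the deletion probabilities $q_i$ are non-uniform when some bins are empty — is the principal adaptation needed.
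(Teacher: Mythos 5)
Your exponential-supermartingale route hits a genuine wall that the paper's two-scale argument is specifically designed to circumvent. The problem is that the one-step increment $\Delta\Gamma_\alpha$ is \emph{not} bounded by $O(n)$: conditioned on $\bm{x}(t)$, a single insertion or deletion in the bin that dominates $\Phi_\alpha$ can change the potential by $\Theta\bigl(\alpha\,\max_i e^{\alpha y_i(t)}\bigr) = \Theta(\alpha\,\Gamma_\alpha(t))$. This makes the second-order Taylor expansion of $e^{\lambda\Delta\Gamma_\alpha/n}$ invalid exactly on the event where you want to invoke the supermartingale inequality: when $\Gamma_\alpha(t)\geq An$ and in fact when $\Gamma_\alpha(t)$ is, say, $n^{10}$ (an event with positive probability at time $t$), the argument $\lambda\Delta\Gamma_\alpha/n$ is $\gg 1$ and the series is not dominated by its quadratic term. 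You can also see the failure directly at second order: by Cauchy--Schwarz $\Gamma_\alpha^2\leq n\,\Gamma_{2\alpha}$, so your own bound $\Exp[\Delta\Gamma_\alpha^2\mid\bm{x}(t)]=O(\Gamma_{2\alpha}(t)/n)$ gives a quadratic term of order $\lambda^2\Gamma_\alpha^2/n^3$, which overwhelms the negative linear drift of order $\lambda\alpha\beta\epsilon\,\Gamma_\alpha/n^2$ precisely when $\Gamma_\alpha\geq n/(\lambda\alpha\beta\epsilon)$ --- a circular condition, since that is the regime where drift is supposed to save you. The auxiliary control $\Exp[\Gamma_{2\alpha}]=O(n)$ is only an expectation, not a pointwise bound, and cannot rescue this. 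Higher-order terms in the Taylor expansion compound the problem because they involve $\Gamma_{k\alpha}$ for growing $k$, eventually leaving the parameter regime where the drift recursion (\cref{thm:recursion}) applies at all.

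The paper, following \cite{LS21}, avoids this by a two-scale bootstrap rather than a one-shot exponential-moment bound. It first runs Markov's inequality on $\Exp[\Gamma_{\alpha_1}]=O(n)$ (for a larger $\alpha_1$) to obtain a \emph{polynomial} bound $\Gamma_{\alpha_1}\leq cn^{9+\ell}$ with probability $1-n^{-(8+\ell)}$, which deterministically forces $\max_i|y_i|=O(\log n)$. On that event, for a deliberately much smaller exponent $\alpha_2$ (with $\alpha_2\alt\alpha_1/40$) the one-step change satisfies $|\Delta\Gamma_{\alpha_2}|\leq n^{1/3}$ \emph{pointwise} (\cref{lem:ls22}) --- exactly the bounded-difference condition your Taylor expansion needed but lacked. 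A standard Azuma-type bounded-difference martingale inequality over a phase of length $O(n\log^5 n)$, together with a red/green-phase decomposition, then yields $\Gamma_{\alpha_2}=O(n)$ with probability $1-n^{-(3+\ell)}$. To repair your argument you would essentially have to re-derive this polynomial pre-conditioning step, at which point you are reproducing the paper's proof rather than shortcutting it.
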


\begin{proof}
The proof is a minor adaptation of Theorem 5.3 on page 18 in Los and Sauerwald's \cite{LS21}, which is a more detailed version in arXiv of their \cite{DBLP:conf/innovations/LosS22}.  All references in \cref{app:tom} to their lemmas etc.~are with respect to the numbering in their arXiv version \cite{LS21}.

As mention before, the  purpose of the adaptation of \cite{LS21}'s results is two-fold: (i) to incorporate our deletions and their respective probability bounds, and (ii) to improve the high-probability guarantee from the original $1-1/n^3$ (which was sufficient for the purposes of \cite{DBLP:conf/innovations/LosS22, LS21})  to $1-1/n^{3+\ell}$ for any arbitrary but fixed $\ell>0$.
\medskip

To show this result we adapt Theorem 5.3 in \cite{LS21} as follows.  The $\ell$ in the exponent of the statement comes from \cref{prop:linear_potential}.

\begin{theorem}[cf. Theorem 5.3 in \cite{LS21}] \label{thm:linear_potential}
Consider any probability vector $p$ that is (i) non-decreasing in $i$, i.e., $p_i \leq p_{i+1}$ and (ii) for constant $\epsilon$, $\sum_{i \geq 3n/4} p_i \geq 1/4+\epsilon$ and $\sum_{i \leq n/4} p_i \geq 1/4-\epsilon$.
Then, for any $t \geq 0$ and $\alpha_2 \leq \frac{\beta \cdot \epsilon}{110}$, $c = c_{\epsilon, \alpha_2} = 2 \cdot 40 \cdot 128^3 \cdot \epsilon^{-7} \cdot 4 \cdot \alpha_2^{-1}$,
$$
\Pr{\cap_{s \in [t, t+n \log^5(n)]} \{\Gamma_{\alpha_2}^{(s)} \leq 2cn\}} \geq 1 - n^{-(3+\ell)}.
$$
\end{theorem}

\begin{proof}
The proof is almost identical to that of Theorem 5.3 in \cite{LS21}.  The chain of modifications starts with Lemma 5.5 in \cite{LS21} where the restriction $\Gamma_1^{(t)} \le c n^9$ occurs.  An application of Markov's inequality turns the $n^9$ into a probability $1/n^8$, which subsequent lemmas etc.~further decay to $1/n^3$. The main idea to make the result work for us is to modify the restriction on $\Gamma_1^{(t)}$ to $\Gamma_1^{(t)} \le c n^{9+\ell}$, which accordingly will turn into a probability $1/n^{8+\ell}$, only to subsequently  be eroded to $1/n^{3+\ell}$. Basically we have the exact same results with the exact same proofs as in in \cite{LS21} but with some explicit constant parameters replaced with their respective original values plus $\ell$. We will need to make some equally straightforward modification of \cite{LS21}'s results to make them compatible with our potential function analysis.

\smallskip

We will use two instances of the potential function $\Gamma_\alpha$ which only differ in $\alpha$.
The first one, $\Gamma_{\alpha_1}$, uses $\alpha_1 \leq \frac{\beta \cdot \epsilon}{3}$, the second one, $\Gamma_{\alpha_2}$, uses $\alpha_2 \leq \frac{\alpha_1}{4 \cdot 9.1}$.
(Explanation: $\alpha_1$ and $\alpha_2$ must fulfil the requirements of the Potential Function Analysis in the previous section and of Lemma~\ref{lem:ls22}.)
From \cref{thm:potential_bound1} we can derive, for $n$ and $c$ large enough, i.e. $c \geq \frac{4160002 \cdot n}{\alpha \cdot \beta^3 \cdot \epsilon^9}$, that $\Pr{\Gamma_{\alpha_1}(x(t)) \geq c \cdot n^{9+\ell}} < n^{-(8+\ell)}$.
The next lemma  proves a stronger statement for $\Gamma_{\alpha_2}$.

\begin{lemma}[cf. Lemma 5.5 in \cite{LS21}] \label{lem:ls22}
For any $t \geq 0$, if $\Gamma_{\alpha_1}^{(t)} \leq c \cdot n^{9+\ell}$, then, (i) $|x_i(t)| \leq \frac{9.1}{\alpha_1} \cdot \log(n)$ for all $i \in [n]$, (ii) $\Gamma_{\alpha_2}^{(t)} \leq n^{4/3}$, and (iii) $|\Gamma_{\alpha_2}^{(t+1)} - \Gamma_{\alpha_2}^{(t)}| \leq n^{1/3}$.
\end{lemma}

\begin{proof}
First statement: For all $i \in [n]$,
\[
\Gamma_{\alpha_1}^{(t)} = e^{\alpha_1 \cdot x_i(t)} + e^{-\alpha_1 \cdot x_i(t)} \leq c \cdot n^{9+\ell}
\implies  |x_i(t)| \leq \frac{9.1}{\alpha_1} \cdot \log(n).
\]

Second statement:
$$
\Gamma_{\alpha_2}^{(t)} < 2 \cdot \sum_{i=1}^n \exp\left(\alpha_2 \cdot \frac{9.1}{\alpha_1} \cdot \log(n)\right) \leq 2 \cdot n \cdot n^{1/4} < n^{4/3}.
$$

Third statement:
One of the $x_i(t)$ is increased or decreased by $1$, and all $x_i(t)$ are (additionally) decreased or increased by $1/n$ as the average load is altered by the allocation or deletion in step $t+1$.
The change affects both parts, $\Phi_\alpha$ and $\Psi_\alpha$, of potential function $\Gamma_\alpha$.
In total, the absolute $\ell_1$-change in all $x_i(t)$ is upper-bounded by $4$.
Since $exp(.)$ is convex, the hypothetical worst case is that the largest exponent increases by $4$ while all others remain the same:
\begin{eqnarray*}
|\Gamma_{\alpha_2}^{(t+1)} - \Gamma_{\alpha_2}^{(t)}| & \leq & \exp(\alpha_2 \cdot \max\{x_{\max}^{(t)} + 4, -x_{min}^{(t)} - 4\})\\
& \leq & e^{4 \alpha_2} \cdot \exp(\alpha_2 \cdot \frac{9.1}{\alpha_1} \cdot \log(n)) \leq n^{1/3}\\
\end{eqnarray*}
\end{proof}


If we define 
$c' = \frac{104 \cdot 10^6}{\epsilon^8 \cdot \beta^4}$
and $\epsilon'_\alpha = \frac{\alpha \beta \epsilon}{4}$ then we obtain the following claim by applying Theorem~\ref{thm:recursion} in the proof from \cite{LS21}.

\begin{claim}[cf. Claim 5.7 in \cite{LS21}]
For any $t \geq 0$, $E[\Gamma_{\alpha_2}^{(t+1)} \,|\, \Gamma_{\alpha_2}^{(t)}, \Gamma_{\alpha_2}^{(t)} \geq \frac{2c'}{\epsilon'_{\alpha_2}} \cdot n] \leq (1 - \frac{\epsilon'_{\alpha_2}}{2n}) \cdot \Gamma_{\alpha_2}^{(t)}.$
\end{claim}

\begin{proof}
  The proof is identical to Claim 5.7 in \cite{LS21}.
\end{proof}

\begin{lemma}[cf. Lemma 5.6 in \cite{LS21}]\label{lem:our56}
For any $t \geq n \cdot \log^2(n)$, for constants $c' > 0$, $\epsilon'_{\alpha_2} > 0$ defined as above,
$$
\Pr{\cup_{s \in [t-n \cdot \log^2(n), t]} \left\{ \Gamma_{\alpha_2}^{(s)} \leq \frac{2c'}{\epsilon'_{\alpha_2}} \cdot n \right\}} \geq 1 - 2 \cdot c \cdot n^{-(8+\ell)}.
$$
\end{lemma}

\begin{proof}
    The proof is almost identical to that of Lemma 5.6 in \cite{LS21}.  In fact the modifications immediately follow from those made to Lemma 5.5 (here: Lemma \ref{lem:ls22}).
\end{proof}

We are now in a position to finish the proof of Theorem \ref{thm:linear_potential}, a part that essentially mimics Section 5.2, with just some few slightly adapted coefficients. We do of course not wish to copy a page or two from another paper, and therefore just briefly highlight the modifications.

The application of Lemma 5.6 in \cite{LS21} (here Lemma \ref{lem:our56}) gives a lower bound of $1-2cn^{-(8+\ell)}$ instead of $1-2cn^{-8}$ for the probability.  Further down, the application of the union bound gives
\[
\Pr{ \bigcap_{u \in [t-n \log^2 n, t+ n \log^ 5 n]} \{ \Gamma_1^{(u)} \leq c n^{9+\ell} \} } \geq 1 - \frac{2\log^5 n}{n^{7+\ell}}.
\]
Then, the application of Theorem A.4 in \cite{LS21} gives
\[
\Pr{X_r^{(u)} \geq X_r^{(r)} + \frac{cn}{2}} \leq \exp\left( - \frac{c^2n^2/4}{2 \cdot (2n \log^5 n) \cdot (2 n^{1/3}) } \right) + \frac{2\log^5 n}{n^{7+\ell}} \leq  \frac{3 \log^5 n}{n^{7+\ell}}.
\]
Further down we find that if a red phase starts at time $r$, then with probability $1-\frac{3 \log^5 n}{n^{7+\ell}}$, $\Gamma_{2}^{(u)}$ will always be $\leq \frac{3cn}{2}$. Taking union bounds as in Section 5.2 in \cite{LS21} yields
\[
\Pr{\bigcup_{r \in [s, t + n \log^5 n]}  \bigcup_{u \in [r, t + n \log^5 n]} \{ X_{r}^{(u)} > \frac{3cn}{2} \} } \leq 3 \cdot \frac{\log^5 n}{n^{7+\ell}} \cdot (4 n^2 \log^{10} n) \leq \frac{1}{2} n^{-{4+\ell}}.
\]
Concluding as in \cite{LS21}, with probability $1-\frac{1}{2}n^{-{4+\ell}}-2cn^{-{8+\ell}} \geq 1-n^{-{4+\ell}}$, it holds that $\Gamma_2^{(r)} \leq \frac{3cn}{2}$ for all time-steps $r$ which are within a red phase in $[s,t+n \log^ 5 n] \subseteq [t,t+n \log^5 n]$.
Since $\Gamma_2^{(r)} \leq T \leq cn$ holds (deterministically) by definition for all time-steps $r$ within a green phase, the theorem follows.
\end{proof}
This concludes the proof sketch of the Lemma.
\end{proof}

    \section{Omitted Material from Section \ref{sec:layered}.}
\label{app:layered}

We first present the proof of \cref{lem:murw_drift}.
We restate it for convenience:
\restatemurwdrift*
\begin{proof}
For the sake of readability, we abuse notation in this proof by omitting the base height $h$ when talking about the number of balls that have at least some level $\ell$ (i.e., we write $m_{\ell}(\bullet)$ instead of $m_{\ell}^{(h)}(\bullet)$).
Similarly, we refer to the notations $h$-valid and $h$-critical simply as valid/critical within the proof.
\begin{enumerate}
\item
    Consider a critical level $\ell \in \set{1, 2, \dots, \ell^*}$ of configuration $\bm{x}$.
    Since $\bm{x}$ is valid, level $\ell - 1$ and $\ell + 1$ are valid.
    Using  \cref{lem:muchangeprob} and \cref{obs:lvlboundary:props} we calculate
    \begin{equation*}
    \begin{aligned}
    \frac{\Pr{\nrballslvlgeq{\ell}{\bm{x'}} < \nrballslvlgeq{\ell}{\bm{x}}}}
         {\Pr{\nrballslvlgeq{\ell}{\bm{x'}} > \nrballslvlgeq{\ell}{\bm{x}}}}
    &\geq
    \frac{1 - \beta(t)}{\beta(t)} \cdot \frac{{\nrballslvlgeq{\ell}{\bm{x}} - \nrballslvlgeq{\ell + 1}{\bm{x}}}}{{\bigl( \nrballslvlgeq{\ell - 1}{\bm{x}} - \nrballslvlgeq{\ell}{\bm{x}} \bigr)}^2} \cdot n
    \\&\geq
    \frac{1 - \beta}{\beta} \cdot \frac{{\alpha_{\ell}/2 - \alpha_{\ell + 1}}}{{(\alpha_{\ell - 1} - \alpha_{\ell}/2)}^2} \cdot n
    \geq
    \frac{1 - \beta}{\beta} \cdot \frac{\alpha_{\ell}/4}{\alpha_{\ell - 1}^2} \cdot n
    \geq
    2
    .
    \end{aligned}
    \end{equation*}
    The first inequality follows from \cref{lem:muchangeprob}.
    The second inequality uses the bound on $\beta(t)$, that level $\ell$ is critical, and that levels $\ell - 1$ and $\ell + 1$ are valid.
    The third and final inequalities follow from \cref{obs:lvlboundary:props}.

\item For level $\ell^* + 1$ we similarly calculate
    \begin{equation*}
    \begin{aligned}
    \frac{\Pr{\nrballslvlgeq{\ell^* + 1}{\bm{x'}} < \nrballslvlgeq{\ell^* + 1}{\bm{x}}}}
         {\Pr{\nrballslvlgeq{\ell^* + 1}{\bm{x'}} > \nrballslvlgeq{\ell^* + 1}{\bm{x}}}}
    &\geq
    \frac{1 - \beta(t)}{\beta(t)} \cdot \frac{{\nrballslvlgeq{\ell^* + 1}{\bm{x}} - \nrballslvlgeq{\ell^* + 1 + 1}{\bm{x}}}}{{\bigl( \nrballslvlgeq{\ell^* + 1 - 1}{\bm{x}} - \nrballslvlgeq{\ell^* + 1}{\bm{x}} \bigr)}^2} \cdot n
    \\&\geq
    \frac{1 - \beta}{\beta} \cdot \frac{1}{\alpha_{\ell^*}^2} \cdot n
    =
    \frac{1 - \beta}{\beta} \cdot \frac{n}{144\log^2(n)}
    =
    \Omega\left(\frac{n}{\log^2(n)}\right)
    \geq
    \sqrt{n}
    .
    \end{aligned}
    \end{equation*}
    Compared to the proof of the previous statement, we bound the number of balls at level at least $\ell^* + 1$ in the numerator by $1$ (level $\ell^* + 1$ being critical means it is non-empty, so there is at least one bin with a ball on level $\ell^* + 1$).
    The rest of the calculation applies the definition of $\alpha_{\ell^*}$ (see \cref{eqn:lvlboundary:def}) and simplifies.
    \qedhere
\end{enumerate}
\end{proof}

We next present the proof of \cref{lem:maintainvalidity}.
We restate it for convenience:
\restatemaintainvalidity*
\begin{proof}
\newcommand*{\eCross}{\cC}
Define the stopping time
\begin{math}
\tau
\coloneqq
\min(\set{T} \cup \set{t \in \N | \text{$\bm{x(t)}$ is not $h(t)$-valid}})
\end{math}.
With this, we clearly have
\begin{math}
\Pr{\forall t \in \set{0, 1, \dots, T - 1}\colon\text{$\bm{x(t)}$ is $h(t)$-valid}}
=
\Pr{\tau \geq T}
\end{math},
so it is sufficient to prove that
\begin{math}
\Pr{\tau \geq T}
\geq
1 - n^{-1}
\end{math}.
In the following, we say \emph{level $\ell$ is valid at time $t$} (critical/safe) if level $\ell$ of configuration $\bm{x(t)}$ is $h(t)$-valid (-critical/-safe).
We use the shorthand $\tilde{m}_{\ell}(t) \coloneqq m_{\ell}^{(h(t))}(\bm{x(t)})$.

Since the base height $h(t)$ is monotonously non-decreasing in $t$, we have
\begin{math}
\tilde{m}_{\ell}(t+1)
=
m_{\ell}^{h(t+1)}(\bm{x(t+1)})
\leq
m_{\ell}^{h(t)}(\bm{x(t+1)})
\end{math}
for any $t \in \N_0$.
Thus, if $\bm{x(t)}$ is $h(t)$-valid and level $\ell$ is critical at time $t$, \cref{lem:murw_drift} implies
\begin{equation}%
\label{eqn:maintainvalidity:murw_drift}
\begin{aligned}
\frac{\Pr{\tilde{m}_{\ell}(t+1) < \tilde{m}_{\ell}(t)}}
     {\Pr{\tilde{m}_{\ell}(t+1) > \tilde{m}_{\ell}(t)}}
&\geq
\frac{\Pr{m_{\ell}^{(h(t))}(\bm{x(t+1)}) < m_{\ell}^{(h(t))}(\bm{x(t)})}}
     {\Pr{m_{\ell}^{(h(t))}(\bm{x(t+1)}) > m_{\ell}^{(h(t))}(\bm{x(t)})}}
\\&\geq
\begin{cases}
2        & \text{if $\ell \in \set{0, 1, \dots, \ell^*}$ and}\\
\sqrt{n} & \text{if $\ell = \ell^* + 1$.}
\end{cases}
\end{aligned}
\end{equation}

If $\tau < T$, there must be some level $\ell$ that is not valid at time $\tau$ and a \emph{minimal} time $t' < \tau$ such that level $\ell$ is critical at \emph{any} integral time $t \in \intco{t', \tau}$.
In particular, using the minimality of $t'$, we have for all integral $t \in \intco{t', \tau}$,
$\tilde{m}_{\ell}(t')
=
\alpha_{\ell} / 2
\leq
\tilde{m}_{\ell}(t)
<
\tilde{m}_{\ell}(\tau)
=
\alpha_{\ell}
$.
That is, between time step $t'$ and $\tau$, the number of balls at level at least $\ell$ started at $\alpha_{\ell}/2$ and increased to $\alpha_{\ell}$ \emph{without} decreasing below $\alpha_{\ell}/2$.
If this happens, we say level $\ell$ \emph{crosses} $\intco{\alpha_{\ell} / 2, \alpha_{\ell}}$ during $\intco{t', \tau}$, and denote this event as $\eCross_{\ell, \intco{t', \tau}}$.
We will show that for \emph{any} level $\ell$ and \emph{any} time $t'$ event $\eCross_{\ell, \intco{t', \tau}}$ is unlikely to happen.
Using a union bound, this yields that, \whp, $\tau \geq T$, as desired.
To bound $\Pr{\eCross_{\ell, \intco{t', \tau}}}$, we define a random walk $W^{(\ell, t')}$ that has an even larger probability to cross $\intco{\alpha_{\ell} / 2, \alpha_{\ell}}$ during $\intco{t', \tau}$ and show that this larger probability is still small enough.

To this end, fix a level $\ell \in \set{0, 1, \dots, \ell^* + 1}$ and integral time $t' \in \intcc{0, \tau} \subseteq \intcc{0, T}$.
Let $p^{(\ell, t)} \coloneqq \Pr{\tilde{m}_{\ell}(t + 1) = \tilde{m}_{\ell}(t) \;|\; \bm{x(t)}}$ denote the probability that the number of balls at level at least $\ell$ does not change from time $t$ to time $t + 1$ (note that this probability is a random variable depending on the random configuration $\bm{x(t)}$).
We define the random walk $W^{(\ell, t')} = \intoo{W^{(\ell, t')}_t}_{t \in \intco{t', T}}$ over the discrete time interval $\intco{t', T}$.
The random walk's start position is
\begin{math}
W^{(\ell, t')}_{t'}
\coloneqq
\tilde{m}_{\ell}(t')
=
\alpha_{\ell} / 2
\end{math}
and its movement from time $t \in \intco{t', T}$ to time $t + 1$ is defined via
\begin{equation}
W^{(\ell, t')}_{t + 1} = \begin{cases}
    W^{(\ell, t')}_t     & \text{with probability $p^{(\ell, t)}$,}\\
    W^{(\ell, t')}_t - 1 & \text{with probability $\left(1 - p^{(\ell, t)}\right) \cdot 2/3$, and}\\
    W^{(\ell, t')}_t + 1 & \text{with probability $\left(1 - p^{(\ell, t)}\right) \cdot 1/3$.}
\end{cases}
\end{equation}
In other words, the random walk $W^{(\ell, t')}$
\begin{enumerate*}[afterlabel=, label=]
\item starts at time $t'$ at position $\tilde{m}_{\ell}(t') = \alpha_{\ell} / 2$,
\item maintains its position whenever $\tilde{m}_{\ell}(\bullet)$ does not change, and
\item the probability that it moves left is twice the probability that it moves right.
\end{enumerate*}
For a time $t \geq t'$, we say $W^{(\ell, t')}$ \emph{crosses} $\intco{\alpha_{\ell} / 2, \alpha_{\ell}}$ during $\intco{t', t}$ if $W^{(\ell, t')}$ reaches $\alpha_{\ell}$ before decreasing below $\alpha_{\ell}/2$, and denote this event as $\eCross^W_{\ell, \intco{t', t}}$.

Given the current configuration $\bm{x(t)} = \bm{x}$ at time $t \geq t'$ and depending on how $\tilde{m}_{\ell}(\bullet)$ changes from time $t$ to time $t + 1$, we can easily couple the random walk $W^{(\ell, t')}$ with our process as follows:
\begin{itemize}
\item if $\tilde{m}_{\ell}(\bullet)$ does not change, $W^{(\ell, t')}$ remains stationary;
\item if $\tilde{m}_{\ell}(\bullet)$ increases, $W^{(\ell, t')}$ moves right; and
\item if $\tilde{m}_{\ell}(\bullet)$ decreases, $W^{(\ell, t')}$ moves left with prob.~$\min\set{1, \frac{2}{3} / \Pr{\tilde{m}_{\ell}(t + 1) < \tilde{m}_{\ell}(t) \;\mid\; \bm{x(t)} = \bm{x}}}$ and right otherwise.
\end{itemize}
By this coupling, the random walk is never \enquote{left of} $\tilde{m}_{\ell}(\bullet)$ (i.e., for all $t \geq t'$ we have $W^{(\ell, t')}_t \geq \tilde{m}_{\ell}(t)$).

We now consider the crossing probabilities for the different levels $\ell \in \set{0, 1, \dots, \ell^* + 1}$ for some integral time $t' \in \intco{0, \tau}$ at which level $0$ just became critical.
For level $\ell = 0$, \cref{prop:linear_potential} can be used to see that, w.h.p, this level will not become invalid at any time during the discrete interval $\intco{0, \tau}$.
For any other level $\ell$, we use the fact that the $\tilde{m}_{\ell}$-value is upper-bounded by the position of the random walk $W^{(\ell, t')}$ during $\intco{t', \tau}$ (by the coupling above) together with the fact that the left-biased random walk $W^{(\ell, t')}$ is very unlikely to reach the right border before becoming again safe (by \cref{lem:biasedrw_crossprob}).
\begin{description}
\item[Case~1:]
    For level $\ell = 0$ we do not need to considered the random walk.
    Instead, note that \cref{prop:linear_potential} allows us to choose a constant $\gamma = \gamma(\alpha_0)$ such that the number of balls above height $\ceil{m(t)/n} + \gamma$ is, with (arbitrarily) high probability, at most $\alpha_0 = \frac{1 - \beta}{128\beta} \cdot n$.
    Thus, by choosing the probability high enough and by using a union bound over the $T = \poly(n)$ many time steps, we can ensure that, with probability at least $1 - n^{-6}$, level $0$ does not become invalid at any time $ \in \set{1, 2, \dots, T}$.
    In particular, this yields
    \begin{equation}
    \Pr{\eCross_{0,\intco{t', \tau}}}
    \leq
    \Pr{\eCross_{0,\intco{t', T}}}
    \leq
    n^{-6}
    .
    \end{equation}

\item[Case~2:]
    For any level $\ell \in \set{1, 2, \dots, \ell^*}$ and large enough $n$ we get
    \begin{equation}
    \begin{aligned}
    \Pr{\eCross_{\ell,\intco{t', \tau}}}
    &\leq
    \Pr{\eCross^W_{\ell,\intco{t', \tau}}}
    \leq
    \Pr{\eCross^W_{\ell,\intco{t', T}}}
    \\&
    \overset{\mathllap{\text{Lem.~\ref{lem:biasedrw_crossprob}}}}{\leq}
    \frac{2^1 - 1}{2^{1 + \alpha_{\ell}/2} - 1}
    \leq
    \frac{2}{2^{1 + \alpha_{\ell}/2}}
    =
    \frac{1}{2^{\alpha_{\ell}/2}}
    \leq
    \frac{1}{2^{\alpha_{\ell^*}/2}}
    =
    n^{-6}
    .
    \end{aligned}
    \end{equation}

\item[Case~3:]
    For level $\ell = \ell^* + 1$ we get
    \begin{equation}
    \begin{aligned}
    \Pr{\eCross_{\ell^*,\intco{t', \tau}}}
    &\leq
    \Pr{\eCross^W_{\ell^*,\intco{t', \tau}}}
    \leq
    \Pr{\eCross^W_{\ell^*,\intco{t', T}}}
    \\&
    \overset{\mathllap{\text{Lem.~\ref{lem:biasedrw_crossprob}}}}{\leq}
    \frac{\sqrt{n}^1 - 1}{\sqrt{n}^{1 + \alpha_{\ell^*+1}/2} - 1}
    \leq
    \frac{\sqrt{n}}{\sqrt{n}^{1 + \alpha_{\ell^*+1}/2}}
    =
    n^{-6}
    .
    \end{aligned}
    \end{equation}
\end{description}
\medskip

By this case distinction, we know that the probability that any single level that just became critical at time $t' \in \intco{0, \tau}$ has probability at most $n^{-6}$ to become critical within $\intco{0, \tau}$.
Taking a union bound over the $\ell^* + 2 = \log\log(n) + \Theta(1)$ many levels and $T = n^4$ many time steps finishes the proof of \cref{lem:maintainvalidity}.
\end{proof}

    \section{Lower Discrepancy Bound when Deletions are Dominant.}%
\label{lower}

\Cref{thm:main2} gives a lower bound that shows that our result from \cref{thm:potential_bound} is tight in the sense that there are insertion probability sequences that result in a logarithmic discrepancy.
Our lower bound statement assumes that the potential $\Phi_{\alpha}$ (see \cref{sec:larstom}) is linear in $n$, which holds with high probability (see \cref{lem:adaption_sauerwald}).
Similar to the results of \cite{DBLP:journals/rsa/PeresTW15} the discrepancy is heavily dependent on the one-choice operations, ball deletions in our case.
A long sequence of ball deletions creates deep holes in the bins and the frequency of ball insertions must be high enough to re-balance the load.
\begin{restatable}{theorem}{thmlower}\label{thm:main2}
Fix a step $t$ and assume $\Phi_{\alpha}(t) = O(n)$ for some constant $c$. Let $T= t+\frac{n\ln(n)}{2+4\epsilon}$ and assume that for every $\tau$ with $t\le \tau\le T$ we have $\beta(\tau) = 1/2-\epsilon$ with $\epsilon < 1/2$ being an arbitrary positive constant.
Then the expected number of bins at the end of step $T$ with load at least $\floor{m(T)/n} + \ln(n)/2$ is $\Omega\left(\sqrt{n}\right)$.
\end{restatable}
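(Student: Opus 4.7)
Over the interval $(t, T]$ each step is independently an insertion with probability $\beta = 1/2 - \epsilon$ or a deletion with probability $1-\beta$, and the interval has length $T - t = n\ln n/(2+4\epsilon)$. By a Chernoff bound the total number of insertions $A$ and deletions $D$ concentrate around $\beta(T-t) = (1-2\epsilon)n\ln n/(4(1+2\epsilon))$ and $(1-\beta)(T-t) = n\ln n/4$ respectively, so $A - D = -\epsilon n\ln n/(1+2\epsilon) \pm O(\sqrt{n}\,\ln n)$ and $\lfloor m(T)/n\rfloor = \lfloor m(t)/n\rfloor - \epsilon \ln n/(1+2\epsilon) + O(1)$ w.h.p. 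For each bin $b$ write $A_b$ and $D_b$ for the insertions into and deletions from $b$ during $(t,T]$, so $x_b(T) = x_b(t) + A_b - D_b$. The assumption $\Phi_\alpha(t) = O(n)$ yields $\sum_{i:y_i(t)>0} y_i(t) \leq \alpha^{-1}\Phi_\alpha(t) = O(n)$ and hence, using $\sum_i y_i(t) = 0$, also $\sum_i |y_i(t)| = O(n)$, so for a sufficiently large constant $c$ there is a set $B$ of $\Omega(n)$ bins with $y_b(t) \in [-c,c]$. For each $b \in B$ the target event $x_b(T) \geq \lfloor m(T)/n\rfloor + \ln n/2$ reduces, up to additive constants, to $A_b - D_b \geq \ln n/(2(1+2\epsilon)) - O(1)$, i.e.\ a deviation of $\ln n/2 - O(1)$ above the mean $\Exp[A_b - D_b] = -\epsilon\ln n/(1+2\epsilon)$.

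The heart of the argument is the per-bin lower tail bound: for each $b \in B$,
\[
\Pr{A_b - D_b \geq \tfrac{\ln n}{2(1+2\epsilon)} - O(1)} \;\geq\; n^{-(1+2\epsilon)/4 - o(1)}.
\]
For a bin of roughly median rank the per-step insertion probability is $\beta(2k-1)/n^2 \approx (1/2-\epsilon)/n$ and the per-step deletion probability is $(1-\beta)/n$; applying \cref{thm:potential_bound} on the interval $(t,T]$ gives an absolute discrepancy of $O(\log n)$ throughout w.h.p., so no bin empties and deletions are effectively uniform over all $n$ bins. Thus $A_b - D_b$ is a sum of $\Theta(n\ln n)$ bounded $\{-1,0,+1\}$-valued increments with mean $-\epsilon\ln n/(1+2\epsilon)$ and variance $\ln n/(2(1+2\epsilon))$. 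A deviation of $\ln n/2$ above the mean corresponds to $\sqrt{\ln n(1+2\epsilon)/2}$ standard deviations, so a local central limit theorem (or a direct Stirling estimate for the Skellam-type distribution of $A_b - D_b$) yields a lower tail of $\Omega(e^{-\ln n(1+2\epsilon)/4}/\sqrt{\ln n}) = \Omega(n^{-(1+2\epsilon)/4}/\sqrt{\ln n})$. Because $\epsilon < 1/2$ this equals $n^{-1/2 + \delta}$ with $\delta = (1-2\epsilon)/4 > 0$, and summing over $|B| = \Omega(n)$ gives expected count $\Omega(n^{1/2+\delta}/\sqrt{\ln n}) = \Omega(\sqrt n)$, as desired.

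The main obstacle is the dependence created by \Greedy{2}: whenever bin $b$ drifts above the average its per-step insertion probability drops strictly below $1/n$, working \emph{against} the rare event we need. The natural remedy is to condition on the good event $\mathcal G$ that $b$'s rank stays within a constant-width window $[n/2 - R, n/2 + R]$ around the median throughout $(t, T]$; a standard random-walk argument shows $\Pr{\mathcal G} = \Omega(1)$ (the rank has drift towards the median and $\Theta(1)$ variance per step over an interval of length $\Theta(n\ln n)$), and under $\mathcal G$ the per-step insertion probability is $(1 \pm o(1))/n$, returning us to the clean Skellam-type analysis above. A complementary union bound using the $O(\log n)$ discrepancy guarantee from \cref{thm:potential_bound} handles the dependence of $D_b$ on the number of non-empty bins.
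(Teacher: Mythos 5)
Your proposal takes a genuinely different and considerably more elaborate route than the paper, and it contains real gaps. Let me explain both.

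\paragraph{How the paper's proof goes.} The paper avoids tracking $A_b - D_b$ altogether. Using $\Phi_\alpha(t) = O(n)$ and a pigeonhole/potential argument (Lemma~\ref{obs:constant_fraction_above_average}), it shows that a constant fraction of bins have load at least $\lfloor m(t)/n\rfloor$ at time $t$. For each such bin, the probability of receiving \emph{zero} deletions over the $\ell = T - t = n\ln n/(2+4\epsilon)$ steps is at least
\begin{equation*}
\left(1 - \tfrac{1}{n}\right)^{(1/2+\epsilon)\ell}
\geq
\exp\bigl(-(1+2\epsilon)\ell/n\bigr)
=
n^{-1/2}.
\end{equation*}
Such a bin retains load at least $\lfloor m(t)/n\rfloor$ at time $T$, and the paper then appeals to the fact that the average has dropped to conclude that this exceeds $\lfloor m(T)/n\rfloor + \ln n/2$. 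Expected count is then $\Omega(n)\cdot n^{-1/2} = \Omega(\sqrt n)$. No insertion tail, no conditioning on the bin's rank, no CLT.

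\paragraph{Why your route is structurally different, and what you noticed.} You correctly computed that the average drops by only $\epsilon\ln n/(1+2\epsilon)$ in expectation, which for $\epsilon < 1/2$ is strictly less than $\ln n/4$, not $\ln n/2$. So a bin that merely retains its load at $\lfloor m(t)/n\rfloor$ does not automatically clear the threshold $\lfloor m(T)/n\rfloor + \ln n/2$: one still needs the bin's load to have \emph{risen} by about $\ln n/(2(1+2\epsilon))$. That is exactly the residual deviation you set out to bound. In other words, you identified (implicitly) that the paper's one-line claim ``w.h.p.\ $\lfloor m(t)/n\rfloor \geq \lfloor m(T)/n\rfloor + \ln n/2$'' does not follow from Chernoff with the given parameters, and you tried to make up the deficit with a positive insertion tail. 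This is the right instinct if one wants the $\ln n/2$ threshold, but it costs you the simplicity of the paper's argument, and the two pieces you invoke are not sound as written.

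\paragraph{Gap 1: the rank-conditioning argument.} You condition on the event $\mathcal G$ that bin $b$'s rank stays in a constant-width window $[n/2 - R, n/2 + R]$, claiming $\Pr{\mathcal G} = \Omega(1)$ via a ``random walk with drift towards the median and $\Theta(1)$ variance per step.'' This does not hold. The rank of a bin of given load is determined by how many bins lie at strictly higher load, and when $b$'s load changes by one unit its rank jumps by the size of the adjacent plateau. Since all $n$ bins sit in a band of $O(\log n)$ load levels, plateau widths are $\Omega(n/\log n)$ on average, and the plateau containing the median load is typically $\Theta(n/\log n)$ or larger. Over the interval, $b$'s load changes $\Theta(\ln n)$ times, so its rank fluctuates by $\Theta(n/\log n)$ or more per change, not by $\Theta(1)$. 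More pointedly, the bins with $y_b(t)\in[-c,c]$ (a constant fraction of all bins) span a rank range of $\Theta(n)$; most of them do not start anywhere near rank $n/2$, so the conditioning has no chance of holding with constant probability. If all you need is $p_b = \Theta(1/n)$ (not $(1\pm o(1))/n$), a much weaker claim such as ``the rank stays in $[\delta n, (1-\delta) n]$'' might be salvageable, but that is still not justified by your argument.

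\paragraph{Gap 2: the Gaussian/local-CLT lower bound.} The deviation you need is $\ln n/2$ while the standard deviation is $\Theta(\sqrt{\ln n})$; the deviation is $\Theta(\sqrt{\ln n}) \to \infty$ standard deviations, so you are in the moderate-deviation regime where the local CLT and the $\exp(-d^2/2\sigma^2)$ rate are not automatically lower bounds. The correct exponent comes from the Cramér rate function of the Skellam-type increment, not from its variance. For the per-step increment $+1$ w.p.\ $\approx(1/2-\epsilon)/n$, $-1$ w.p.\ $\approx(1/2+\epsilon)/n$, the exact per-unit rate (at the optimal tilt $\theta^*$ with $e^{\theta^*}=(1+\sqrt{2-4\epsilon^2})/(1-2\epsilon)$) is
\begin{equation*}
g(\theta^*) = \ln\frac{1+\sqrt{2-4\epsilon^2}}{1-2\epsilon} - \sqrt{2-4\epsilon^2} + 1,
\end{equation*}
so $\Pr{A_b - D_b \geq \ln n/(2(1+2\epsilon))} = n^{-g(\theta^*)/(2+4\epsilon)-o(1)}$, \emph{not} $n^{-(1+2\epsilon)/4-o(1)}$. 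At $\epsilon = 0$ this is $\approx n^{-0.234}$ (larger than your $n^{-0.25}$), but as $\epsilon$ grows towards $1/2$ the exponent $g(\theta^*)/(2+4\epsilon)$ eventually exceeds $1/2$, so the ``$\Omega(n)$ median bins times per-bin probability'' calculation would no longer give $\Omega(\sqrt n)$. You would then need to trade off the per-bin probability against the exponential decay in the number of bins at each level $y_b(t)=c$ (more above-average bins need a smaller deviation, but there are exponentially fewer of them). This optimization is what the paper's proof sidesteps entirely with the zero-deletion coupling, and it is not present in your write-up.

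\paragraph{Summary.} Your proposal identifies the correct target (the residual deviation $\ln n/(2(1+2\epsilon))$, accounting for the true drop in the average), which the paper's one-line Chernoff claim does not actually establish. However, the two ingredients you use to close that gap --- constant-probability confinement of the rank to a narrow window, and a CLT-style lower tail --- are both invalid as stated, and repairing them requires the full Cramér machinery plus an optimization over starting heights that your proposal does not contain. The paper's argument is far shorter precisely because it never tries to extract a positive $A_b - D_b$; it only needs zero deletions and a potential-function count of above-average bins.
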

We note that the above result also holds (with basically the same proof) for varying insertion probabilities as long as fewer than $(T-t)\cdot (1/2-\epsilon)$ balls are inserted during the time interval $\intcc{t, T}$.

The proof idea for \cref{thm:main2} is as follows:
We first show that, as long as in step $t$ we have $\Phi(t) \leq c \cdot n$ for some constant $c$ (which holds with high probability), there exists a constant fraction of bins with a load which is at least $m(t)/n$ (\cref{obs:constant_fraction_above_average}).
If now during the next $O( n\log n)$ steps much more balls are deleted than inserted, the average load will go down and one of the bins with load at least $m(t)/n$ will receive no ball deletions at all.
This bin will have a discrepancy of $O(\log n)$. This holds even under the assumption that the bin does not receive any additional ball.

\begin{lemma}\label{obs:constant_fraction_above_average}
Assume $\Phi \leq c \cdot n$ for some constant $c$ and
let $r = r(n)$ denote the fraction of bins with a load of at least $\lfloor m(t)/n \rfloor$. Then $r = \Theta(1)$, i.e., a constant fraction of bins has load at least $\lfloor \frac{m(t)}{n} \rfloor$.
\end{lemma}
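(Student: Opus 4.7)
Let $A \coloneqq \set{i : x_i(t) \geq \floor{m(t)/n}}$ be the set of ``above-average'' bins, and set $|A| = r n$. The upper bound $r \leq 1$ is trivial, so the whole task is to show a matching constant lower bound, i.e.\ $r = \Omega(1)$. My plan is to exploit the fact that the assumed linear bound $\Phi_\alpha \leq cn$ is in serious tension with having very few bins above the average, because each bin outside $A$ forces a substantial ``deficit'' that must then be absorbed by a small number of bins in $A$, which drives the exponential potential through the roof.

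First I would translate smallness of $A$ into a surplus statement. Any bin $i \notin A$ satisfies $x_i(t) \leq \floor{m(t)/n} - 1 \leq m(t)/n - 1$, so it contributes a deficit of at least $1$ below the average. Summing over the $(1-r)n$ such bins and using the zero-sum identity $\sum_i (x_i(t) - m(t)/n) = 0$, I obtain
\begin{equation*}
\sum_{i \in A} \bigl(x_i(t) - m(t)/n\bigr) \;\geq\; (1-r)\,n.
\end{equation*}
Hence the average surplus inside $A$ is at least $(1-r)/r$.

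Next I would feed this into $\Phi_\alpha$ via Jensen's inequality applied to the convex map $z \mapsto e^{\alpha z}$ on the $rn$ values $x_i(t) - m(t)/n$ for $i \in A$:
\begin{equation*}
\Phi_\alpha(\bm{x(t)}) \;\geq\; \sum_{i \in A} e^{\alpha(x_i(t) - m(t)/n)} \;\geq\; rn \cdot \exp\!\left(\alpha \cdot \frac{1-r}{r}\right).
\end{equation*}
Combining this with $\Phi_\alpha \leq cn$ and cancelling $n$ yields the clean scalar inequality
\begin{equation*}
r \cdot e^{\alpha(1-r)/r} \;\leq\; c.
\end{equation*}
The left-hand side tends to $+\infty$ as $r \to 0^+$ (the exponential beats the linear factor), so there exists a constant $r_0 = r_0(\alpha, c) > 0$ below which the inequality fails; any $r < r_0$ is therefore excluded, giving $r \geq r_0 = \Omega(1)$.

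There is no real obstacle here; the only subtlety is making sure that ``above average'' is pinned to $\floor{m(t)/n}$ rather than $m(t)/n$, which is exactly what lets me extract a full unit of deficit (and not merely $1 - \set{m(t)/n}$) from each bin outside $A$. Everything else is a one-line Jensen estimate and the observation that $r \mapsto r\, e^{\alpha(1-r)/r}$ diverges at the origin.
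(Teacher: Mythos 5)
Your proof is correct and follows essentially the same route as the paper's: both derive the surplus bound $\sum_{i\in A}(x_i - m(t)/n) \geq (1-r)n$ from the unit deficit of the $(1-r)n$ bins below $\lfloor m(t)/n\rfloor$, both lower-bound $\Phi_\alpha$ by $rn\,e^{\alpha(1-r)/r}$ via convexity (the paper phrases Jensen as ``$\Phi$ is minimal when all bins in $B_1$ have the average surplus''), and both conclude from $r\,e^{\alpha(1-r)/r}\leq c$ that $r$ cannot be $o(1)$. Your formulation via ``the left-hand side diverges as $r\to 0^+$'' is in fact a bit cleaner than the paper's contradiction argument, but it is the same idea.
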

\begin{proof}
Let $B_1$ denote the set of bins with a load of at least $\lfloor m(t)/n \rfloor$ and $B_2$ the set of bins with fewer than $\lfloor m(t)/n \rfloor$ balls.
Since $r$ is defined as the fraction of bins in $B_1$, we have $|B_1| = r \cdot n$ and $|B_2| = (1-r) \cdot n$.
This implies that there are at least $(1-r) \cdot n$ holes below the level $\lfloor \frac{m(t)}{n} \rfloor$ and, in turn, at least $(1-r) \cdot n$ balls above the average load of $m(t)/n$.

Let $m(t)/n + \ell$ be defined as the average load of the bins in $B_1$.
Note that $\ell\ge 0$ because $m(t)/n$ is the average load and the bins in $B_1$ have a higher load than the ones in $B_2$.
From the definition, it follows that $$\ell \geq \frac{n \cdot (1-r)}{r \cdot n} = \frac{1-r}{r}.$$
Since the potential function $\Phi$ is minimal when all bins in $B_1$ have load $\ell$, we obtain
$$\sum_{i=1}^{r \cdot n} e^{\alpha \cdot \frac{1-r}{r}} \leq \Phi \leq c \cdot n$$
which implies $r \cdot e^{\alpha \cdot \frac{1-r}{r}} \leq c$.
If we view $r=r(n)$ as function in $n$ and assume $r(n)=o(1)$ then in
\[
r(n)\cdot e^{\alpha \left(\frac{1}{r(n)}-1\right)} = r(n)\cdot f(n)\quad\mbox{with}\quad f(n)=e^{\alpha \left(\frac{1}{r(n)}-1\right)}
\]
the exponent is a function growing as $1/r(n)=\omega(1)$.
This implies that regardless of the size of $\alpha$ (as long as it is positive) the $f(n)$ grows exponentially quicker than the $r(n)=o(1)$ decays, and so the whole $r(n)\cdot f(n)$ is in $\omega(1)$ and in particular not upper-bounded by the constant $c$.

Since $r(n) \leq 1 \in O(1)$, we conclude that $r(n)=\Theta(1)$.
\end{proof}

With this, we can prove \cref{thm:main2}:
\begin{proof}[Proof of \cref{thm:main2}]
First of all, a simple application of Chernoff bounds proves that, w.h.p., $\floor{m(t)/n} \ge \floor{m(T)/n} + \ln(n)/2$.
Fix a bin $i$ and define $\ell=T-t=n\ln(n)/(2+4\epsilon)$.
Then for large $n$, the probability that during all steps $\tau$ with $t\le \tau \le T$ no ball is deleted from bin $i$ can be bounded by (using the inequality $1 - x \geq e^{-2x}$ for $x \in \intcc{0, 0.795}$)
\begin{equation}
\left(1-\frac{1}{n}\right)^{(1-(1/2-\epsilon))\ell}
=
\left(1-\frac{1}{n}\right)^{(1/2+\epsilon)\ell}
\geq
\exp\left(-(1+2\epsilon)\frac{\ell}{n}\right)
=
\exp\left(-\ln(n)/2\right)
=
n^{-1/2}.
\end{equation}
From \cref{obs:constant_fraction_above_average} we get that, at time $t$, a constant fraction of bins have load at least $\floor{m(t)/n}$.
And by the above calculation, the expected number of those bins from which no ball is deleted is $\Omega(\sqrt{n})$.
Thus, at time $T$, in expectation there are $\Omega(\sqrt{n})$ bins that have a load of at least $\floor{m(t)/n} \geq \floor{m(T)/n} + \ln(n) / 2$.
\end{proof}

}{}
\end{document}